\definecolor{Gray}{gray}{0.9}
\newtheorem{theorem}{Theorem}
\newtheorem{fact}{Fact}
\newtheorem{lemma}{Lemma}
\newtheorem*{remark}{Remark}
\newtheorem{definition}{Definition}
\newtheorem{cor}{Corollary}
\newtheorem{claim}{Claim}
\DeclareMathOperator{\polylog}{polylog}
\newcommand*{\rom}[1]{\expandafter\@slowromancap\romannumeral #1@}
\RenewExpandableDocumentCommand{\textsuperscript}{m}{\raisebox{1.25ex}{\scriptsize #1}}
\title{Faster Multi-Source Reachability and Approximate Distances via Shortcuts, Hopsets and Matrix Multiplication}
\author{Michael Elkin \\ Ben-Gurion University of the Negev, Israel. \\ \texttt{elkinm@bgu.ac.il }  \and Chhaya Trehan \\Reykjavik Uniersity, Iceland.\\ \texttt{chhaya.dhingra@gmail.com} }
\begin{document}
\maketitle
\begin{abstract}
Given an $n$-vertex $m$-edge directed graph $G = (V,E)$ and a designated source vertex $s \in V$, let $\mathcal V_s \subseteq V$ be a subset of vertices 
reachable from $s$ in $G$. Given a subset $S \subseteq V$ of $|S| = n^{\sigma}$, for some $0 \le \sigma \le 1$, designated sources,
the $S \times V$ \emph{reachability problem} is to compute the sets $\mathcal V_s$ for every $s \in S$.
Known naive algorithms for this problem in the \emph{centralized} setting either run a BFS/DFS exploration separately from every source, and as a result require $O(m \cdot n^{\sigma})$ time, 
or compute the transitive closure $TC(G)$ of the graph $G$ in $\tilde O(n^{\omega})$ time\footnote{We use the  notation $\tilde O$ to hide polylogarithmic factors.}\footnote{The expression $O(n^{\omega})$ is the running time of the state-of-the-art algorithm for computing a matrix product of two square matrices of dimensions $n \times n$.}, where $\omega \le 2.371552\ldots$ is the \emph{matrix multiplication exponent}.
Hence, the current state-of-the-art bound for the problem on graphs with $m = \Theta(n^{\mu})$ edges is $\tilde O(n^{\min \{\mu + \sigma, \omega \}})$,
which may be as large as $\tilde O(n^{\min \{ 2 + \sigma, \omega\}})$.

Based on recent constructions of shortcuts~\cite{KoganParterDiShortcuts1, KoganParterDiShortcuts2}, we
devise a centralized algorithm with running time $\tilde O(n^{1 + \frac{2}{3} \omega(\sigma)})$ for this problem on general graphs, where $\omega(\sigma)$ is the
rectangular matrix multiplication exponent (for multiplying an
$n^{\sigma} \times n$  matrix by an $n \times n$ matrix). Using current state-of-the-art estimates on $\omega(\sigma)$, our exponent is better than
$\min \{2 + \sigma, \omega \}$ for $0.3336 \le \sigma \le 0.53$.
For graphs with $m = \Theta(n^{\mu})$ edges, for $1 \le \mu \le 2$, our algorithm has running time $\tilde O(n^{\frac{1 + \mu + 2 \cdot \omega(\sigma)}{3}})$. 
For every $0.3336 \le \sigma < 1$, there exists a non-empty interval
$I_{\sigma} \subseteq [1,2]$, so that our running time is better than the state-of-the-art one whenever the input graph has $m = \Theta(n^{\mu})$ edges with $\mu \in I_{\sigma}$.

In a classical paper, Cohen~\cite{Cohen93}
 devised \emph{parallel} algorithms for $S \times V$ reachability problem on graphs 
 that admit balanced recursive separators of size $n^{\rho}$, for $\rho < 1$.
 Her algorithm assumes that an appropriate separator decomposition of the input graph is provided as input
 or can be readily computed. It requires polylogarithmic time and work $\tilde{O}(n^{\max \{\omega \rho, 2\rho + \sigma \}})$. We significantly improve, extend and generalize the results of~\cite{Cohen93}. First, the work complexity of our parallel algorithm for graphs with small recursive 
 separators is better than that of~\cite{Cohen93} in a wide range of parameters.
Second, we generalize the algorithm to graphs of treewidth at most $n^{\rho}$, 
 which may be quite dense, and provide a \emph{centralized} algorithm that works
\emph{from scratch} and requires $\tilde O(n^{\max \{\omega\rho + 1 - \rho,\omega(\sigma)\} })$ deterministic time, i.e., it  outperforms the current state-of-the-art bounds for $S\times V$ reachability on
such graphs. We also provide analogous results for $K_h$-minor free graphs (for $h = n^{\eta}$, $0< \eta < 1/2$),
 geometric $k$-nearest neighbor graphs and their extensions.
Finally, we extend these results to $(1 + \epsilon)$-approximate distance computation and provide parallel versions of these results.
 \end{abstract}
\section{Introduction}\label{sec:Intro}
\subsection{Background}\label{sec:IntroBGND}
Consider an $n$-vertex directed graph $G = (V,E)$, possibly weighted with non-negative edge weights with aspect ratio (the ratio
between maximum non-zero weight and minimum non-zero weight) at most polynomial in $n$.
 Suppose that we are also given a subset $S \subseteq V$ 
  of designated \emph{sources}, $|S| = n^{\sigma}$, $0 \le \sigma \le 1$.
In the $S \times V$ \emph{reachability} problem we need to compute, for every source $s \in S$, the set of vertices \emph{reachable} from $s$ in $G$. (A vertex $v$ is said to be \emph{reachable} from $s$ if there exists a path from $s$ to $v$ in $G$.) In the more general \emph{$(1+ \epsilon)$-approximate $S \times V$ distance computation} problem we also need to 
compute $(1 + \epsilon)$-approximate distances in $G$ for all pairs $(s, v) \in S \times V$, where $\epsilon > 0$ is a small parameter.

These are fundamental graph-algorithmic problems.
The \emph{single-source} case ($\sigma = 0$) and the \emph{all-pairs} case ($\sigma = 1$) were extensively studied~\cite{Dijkstra59, Munro71, Purdom70, Seidel92, ALON199,  ChanM2010, RodittyS11, Cohen93, APSPDirectedZwick,Fineman20, CaoFR20, CaoFR22, CaoF23, RozhonHMZ23, Fineman24}.
We focus on the intermediate regime of constant $\sigma$, $0  < \sigma < 1$, which received so far less attention. Specifically,
the $S \times V$ reachability problem, as well as its extension to \emph{exact} $S \times V$ distance computation, was studied in a classical paper of Cohen~\cite{Cohen93} in the context of graphs that admit \emph{sublinear recursive separators}
(see Section~\ref{sec:separatorDef} for relevant definitions) in the \emph{parallel} model (\textsf{PRAM}) of computation.
We study the $S \times V$ reachability problem in both centralized and parallel models, and also both in general graphs and in 
restricted graph families. Most of our results extend also to the $(1 + \epsilon)$-approximate $S \times V$ distance computation problem.

There are two currently known trivial solutions to the $S \times V$ reachability problem.
The first one runs a BFS (or DFS) from every $s \in S$, and as a result requires $O(m \cdot n^{\sigma})$ time in the centralized setting.
We call this algorithm \emph{BFS-based}. For dense graphs this running time may be as large as $O(n^{2 + \sigma})$.
The second solution (which we call \emph{TC-based}) uses fast square matrix multiplication to compute the transitive closure 
$TC(G)$ in $\tilde O(n^{\omega})$ centralized time.
The upper bound $\omega \le 2.371552\ldots$ is due to~\cite{VassilevskaBestRectangular} (see also~\cite{gall2023faster,LeGallBestRectangular,LeGallBestSquare, COPPERSMITH}).

Zwick~\cite{APSPDirectedZwick} showed that the TC-based approach generalizes to the $(1 + \epsilon)$-approximate distance
computation problem with essentially the same running time $\tilde  O(n^{\omega})$.
Zwick's algorithm also parallelizes seamlessly, providing polylogarithmic time (depth) and 
$\tilde O(n^{\omega})$ processors (see e.g.,~\cite{ElkinN22}).
This, however, is not the case with BFS-based approach: the best currently known single-source parallel reachability algorithm 
due to Cao et al.~\cite{CaoFinemanRuseell20} requires $n^{1/2 + o(1)}$ time and $\tilde O(m)$ work.
If one restricts oneself to polylogarithmic parallel time, then in general graphs only the TC-based 
solution is applicable.

On the other hand, Cohen~\cite{Cohen93} devised a parallel algorithm with polylogarithmic time for 
graphs that admit $k^{\rho}$-recursive separators, for some constant $\rho < 1$.
(These are graphs $G = (V,E)$, such that every $k$-vertex subgraph of $G$ admits a balanced separator 
of size $O(k^{\rho})$, see Section~\ref{sec:separatorDef} for the definition of a balanced separator.)
Specifically, the algorithm of~\cite{Cohen93} has work complexity $\tilde O(n^{\omega \cdot \rho}  + n^{\max\{2\rho, 1 \} + \sigma  })$.
Her algorithm also assumes that an appropriate separator decomposition of the input graph $G$ is provided to the algorithm as input, or can be readily computed. (This is the case, for example, in geometric graphs, which is the prime example in~\cite{Cohen93}. See Sections~\ref{sec:GeometricGraphs} and~\ref{sec:reachDistGeometric} for more details.)
Cohen~\cite{Cohen93} also provided a variant of her algorithm for \emph{exact} $S \times V$ distance computation on general-weighted graphs with $k^{\rho}$-recursive separators. This variant has work complexity 
$\tilde O(n^{\max \{3 \cdot \rho, 1\} } + n^{2\rho + \sigma})$.

Note that in the \emph{centralized} setting, the algorithm of~\cite{Cohen93} provides no non-trivial bounds.
Indeed, graphs with  $k^\rho$-recursive separators for $\rho < 1$ have $O(n)$ edges (\cite{LiptonRoseTarjan79}, Theorem 12).
Thus, the BFS-based centralized $S \times V$ reachability algorithm provides time $O(n^{1 + \sigma})$, which is better than the work complexity of~\cite{Cohen93}.

\subsection{Our Results}\label{sec:results}
We provide two types of results. Results of the first type apply to the $S \times V$ reachability problem for 
\emph{general} graphs in the centralized setting. They extend neither to distance computation nor to the parallel setting, and they are achieved via randomized algorithms. 
Results of the second type apply to both the reachability and $(1 + \epsilon)$-approximate distance computation problems,
both in the centralized and parallel settings, and they are achieved via deterministic algorithms. They are, however, restricted to certain graph families.
These graph families include sparse graphs, such as graphs with sublinear separators and graphs with genus $g = o(n)$.
(Both these families contain $m = O(n)$ edges.) But we also allow \emph{much denser} graphs, such as graphs with treewidth $n^{\rho}$ for some $0 < \rho < 1$, graphs that exclude a fixed $h$-sized minor for $h = n^{\eta}$, $0 < \eta < 1$, and
geometric graphs. For these denser graph families our results (unlike the results of~\cite{Cohen93}) are meaningful 
in the centralized setting as well.
\subsubsection{General Graphs}\label{sec:resultsGeneral}
In the context of general graphs, based on recent breakthroughs of Kogan and Parter~\cite{KoganParterDiShortcuts1, KoganParterDiShortcuts2} concerning directed shortcuts, we devise an algorithm that outperforms both the BFS-based and the 
TC-based $S \times V$ reachability algorithms whenever the parameter $\sigma = \log_n|S|$ is in the range 
$\tilde \sigma < \sigma \le 0.53$, where $1/3 < \tilde \sigma < 0.3336$ is a universal constant.

The running time of our algorithm is $\tilde O(n^{1 +  \frac{2}{3} \omega(\sigma)})$, where $\omega(\sigma)$
is \emph{rectangular matrix multiplication exponent}, i.e., $O(n^{\omega(\sigma)})$ is the running time\footnote{For $\sigma \le \alpha$, where $\alpha \approx 0.321334$ is the {\em dual matrix multiplication exponent}, $\omega(\sigma) = 2$, but the running time is $n^{2+o(1)}$. For larger $\sigma$, $\omega(\sigma) > 2$, and the running time is at most $n^{\omega(\sigma)}$.} of the state-of-the-art algorithm for
 multiplying a rectangular matrix with dimensions $n^{\sigma} \times n$ by a square matrix of dimensions $n \times n$.
See Table~\ref{tab:tableLeGall} for the values of this exponent (due to~\cite{VassilevskaBestRectangular}),
and Table~\ref{tab:tableComparisonDense} for the comparison of running time of our algorithm 
with that of previous algorithms for $S \times V$ reachability problem in the range 
$\tilde \sigma \le \sigma \le 0.53$ (in which we improve the previous state-of-the-art). 
The largest gap between our exponent and the previous state-of-the-art is for $0.37 \le \sigma \le 0.38$. 
Specifically, our exponent for $\sigma = 0.38$ is $2.33751$, while the state-of-the-art is $\omega \approx 2.371552$.

Moreover, we show that our algorithm improves the state-of-the-art solutions for $S \times V$ reachability problem 
in a much wider range of $\sigma$ on sparser graphs. 
Specifically, we show that for every $\tilde \sigma \le \sigma < 1$, there exists a non-empty interval $I_{\sigma} \subset [1,2]$ of values $\mu$,
so that for all input graphs with $m = \Theta(n^{\mu})$ edges, $\mu \in I_{\sigma}$, our algorithm outperforms both the BFS-based and the TC-based
solutions. The running time of our algorithm on such graphs is $\tilde O(n^{ \frac{1 + \mu + 2\cdot \omega(\sigma)}{3}})$.

We provide some sample values of these intervals in Table~\ref{tab:table1}. 
See also Table~\ref{tab:tableComparisonMu} for the comparison between our new bounds for $S \times V$ reachability on graphs with $m = \Theta(n^{\mu})$ edges, $\mu \le 2$,
 and the state-of-the-art ones.
For example, for $\sigma = 0.5$ (when $|S| = \sqrt n$), the interval is $I_{\sigma} = (1.793, 2]$, i.e., our algorithm improves
the state-of-the-art solutions as long as $\mu > 1.793$.
Specifically, when $\mu = 1.9$, our algorithm requires $\tilde O(n^{2.3287\ldots})$ time for computing reachability from $\sqrt n$ sources,
 while the state-of-the art solution is the TC-based one, and it requires $\tilde O(n^{2.371552\ldots})$ time.
 Another example is when $\sigma = 0.6$.
 The interval $I_{\sigma} = I_{0.6}$ is $(1.693, 1.93)$, and for $\mu = 1.75$ our solution requires $\tilde O(n^{2.312\ldots})$ time, while the state-of-the-art BFS-based solution requires $O(n^{2.35})$ time.

 Since we employ randomized constructions of shortcuts \cite{KoganParterDiShortcuts1,KoganParterDiShortcuts2},
 our algorithms for multi-source reachability in general graphs are randomized as well.

 \subsubsection{Restricted Graph Families}\label{sec:sec:resultsRestricted}
We significantly improve and generalize Cohen's results~\cite{Cohen93}.
Specifically, for graphs that admit $k^{\rho}$-recursive separators, our algorithm computes 
$S \times V$ reachability and $(1 + \epsilon)$-approximate $S \times V$-distances in parallel 
polylogarithmic time and work $\tilde O(n^{\max\{\omega \rho, \omega(\sigma)  \}})$.
This exponent $F(\sigma, \rho) = \max \{\omega \rho, \omega(\sigma) \}$ improves the exponent 
$C(\sigma, \rho) = \max\{\omega \rho , 2\rho + \sigma\}$ of the algorithm of~\cite{Cohen93} in a
wide range of parameters. Both our algorithm and that of \cite{Cohen93} are deterministic.

Specifically, for every $\sigma > \frac{2(\omega - 2)}{\omega} \approx 0.31334 $,
there is a non-empty interval $J_{\sigma} = (\frac{\omega(\sigma) - \sigma}{2}, \frac{\sigma}{\omega - 2})$
such that for every $\rho \in J_{\sigma}$, we have $F(\sigma, \rho) < C(\sigma, \rho)$.
The gap between these two exponents may be quite significant.
In particular, for $\sigma = 0.5$ and $\rho = 0.85$, our exponent is $F(\sigma, \rho) = \omega(0.5) = 2.0429 $, 
while the exponent of the algorithm of~\cite{Cohen93} is $C(\sigma, \rho) = 2.2$.
See Tables~\ref{tab:rhoInterval2} and~\ref{tab:comparisonRho2} for a  detailed comparison of these exponents.
Note that both our algorithm and the algorithm of~\cite{Cohen93} assume that an appropriate separator decomposition 
is provided as input. 

Observe also that as graphs with $k^\rho$-recursive separators for $\rho < 1$ have $O(n)$ edges \cite{LiptonRoseTarjan79},
in the \emph{centralized} setting none of these algorithms can compete with a naive BFS-based solution for the problem,
as the latter requires just $O(n^{1 + \sigma})$ time.
We, however, extend our algorithm to graphs with \emph{treewidth} at most $n^{\rho}$, for a parameter $0 < \rho < 1$.
In this case the running time of our algorithm in the centralized setting becomes $\tilde O (n^{F'(\sigma, \rho)})$,
where $F'(\sigma, \rho) = \max \{\omega \rho + 1 - \rho , \omega(\sigma) \}$.
(In the parallel setting, our algorithm has polylogarithmic in $n$ running time and work $\tilde O (n^{F'(\sigma, \rho)})$.)

In the centralized setting, using an approximation algorithm of Chuzhoy and Saranurak~\cite{ChuzhoyS21},
one can compute the respective tree/separators decomposition within the stated time bounds. Thus, our algorithm 
provides this running time (for $S \times V$ reachability and $(1 + \epsilon)$-approximate distance computation problems)
\emph{from scratch}.

We compare the running time of our (centralized deterministic) algorithm with the minimum running time of the BFS-based and the TC-based
algorithm for this problem. In the densest regime graphs with treewidth $O(n^{\rho})$ may have $O(n^{1 + \rho})$ edges, and thus the running time of the current state-of-the-art (naive) solution is $\tilde O(n^{\min \{\omega, 1 + \rho + \sigma \}})$.
Denote by $N'(\sigma, \rho) = \min \{\omega, 1 + \rho + \sigma \}$ the exponent of this running time.
Our exponent $F'(\sigma, \rho)$ is better than $N'(\sigma, \rho)$ in a wide range of parameters.
Specifically, for any $\frac{\omega - 2}{\omega -1} < \sigma < 1$, there is a non-empty interval 
$K'_{\sigma} = (\omega(\sigma) - (1 + \sigma)   , \min \{1, \frac{\sigma}{\omega - 2} \} )$, such that for 
every $\rho \in K'_{\sigma}$, we have $F'(\sigma, \rho) < N'(\sigma, \rho)$.
The improvements may be very significant. 
For example, for $\sigma = 0.6$, $\rho = 0.8$, our exponent is $F'(0.6, 0.8) = 2.0972$,
while the previous state-of-the-art is $N'(0.6, 0.8) = \omega = 2.371552$.
See Theorem~\ref{thm:treewidthDistance} and Tables~\ref{tab:tableRhoInterval} and~\ref{tab:FlatBoundCompare} for a detailed comparison between these two exponents. Table ~\ref{tab:tableRhoInterval} also provides sub-intervals $K''_\sigma \subseteq K'_\sigma$ in which our exponent $F'(\sigma,\rho)$ is equal to the optimal exponent $\omega(\sigma)$. These sub-intervals are not empty for all $\frac{\omega-2}{\omega-1} < \sigma < 1$, and they become wider as $\sigma$ grows.

Moreover, even if the input graph contains $n^{\mu}$ edges for some $\mu < 1 + \rho$, still there are wide ranges of parameters in which our exponent is better than the previous state-of-the-art one (obtained as a minimum of respective exponents of BFS-based and TC-based exponents).
Specifically, we show (Theorem~\ref{thm:flatBoundDistance}) that for any $\omega - 1 < \mu \le 1 + \rho$,
there is a non-empty range $(\sigma'_{thr}(\rho, \mu), 1)$, where $\sigma'_{thr}(\rho, \mu)$ is given by Equation~\eqref{eq:sigmaThresh},
such that for every $\sigma$ in this range, our exponent improves the state-of-the-art one.
For example, for $\sigma = 0.7$, $\rho = 0.7$ and $\mu = 1.6$, our exponent is $\omega(0.7) = 2.1530$, while the state-of-the-art exponent is $2.3$. See Tables~\ref{tab:FlatBoundSparseComp1} and~\ref{tab:FlatBoundSparseComp2} for more details. 

These improvements carry over to the parallel setting as well, but there one needs to provide our algorithm with a separator decomposition as a part of the input. Recall that this is also the case for the algorithm of \cite{Cohen93}. Both parallel algorithms (ours and of \cite{Cohen93} are deterministic).

Another variant of our algorithm provides improved bounds for graphs that exclude minor $K_h$, for $h = n^{\eta}$,
where $0 < \eta < 1/2$ is a parameter.
(Such graphs have treewidth at most $\tilde O(n^{1/2 + \eta})$, and some non-trivial bounds for them can be derived from the 
variant of our algorithm that applies to graphs with treewidth $\tilde O(n^{\rho})$, $\rho \le 1/2 + \eta$.
However better bounds are obtained by considering minor-free graphs directly.)
We argue that for any $\omega - 2 < \eta < 1/2$ and $\omega - 1 < \mu \le 1 + \eta$
(recall that $\mu$ is the exponent of the number of edges, i.e., $m = n^{\mu}$),
there is a threshold value $\sigma_{thr} < 1$ (see Equation~\eqref{eq: muOmegaEta1}) such that for any $\sigma > \sigma_{thr}$,
our exponent is better than the previous state-of-the-art one.
Here the improvements are typically smaller than for graphs with bounded treewidth.
Among the  most noticeable improvements is $\mu = 1.45$, $\eta = 0.45$ and $\sigma = 0.9$.
For these parameters our exponent is $\omega(0.9) = 2.2942$, while the state-of-the-art exponent is $2.35$. 
See Theorem~\ref{thm:excludedMinorDistances} and Tables~\ref{tab:sigmaThr1}-\ref{tab:sigmaThr4} for more details. 
We remark that our centralized (deterministic) algorithm for minor-free
graphs works \emph{from scratch}, i.e., it can compute the required tree decomposition within the stated 
resource bounds. This is also the case for geometric graphs, discussed below.
(Moreover, for geometric graphs this is also the case in the \emph{parallel} setting as well, but the resulting algorithm becomes randomized.)

We note that Wulff-Nilsen~\cite{WulffNisen11}, improving over a result by Yuster~\cite{YUSTER2010},
devised a single source shortest paths algorithm in $K_h$-minor-free graphs with weights 
bounded by $\operatorname{poly}(n)$ \emph{in absolute value}.
The running time of his algorithm is $\tilde O(\operatorname{poly} (h) \cdot n^{1/3})$,
where $\operatorname{poly}(h)$ is a high-degree unspecified polynomial. This result is 
incomparable to ours, as for the values of $h$ for which our algorithm outperforms 
the BFS-based and the TC-based solutions, the algorithm of~\cite{WulffNisen11}  provides no  meaningful bounds.
On the other hand, we restrict weights to be non-negative, while this is not the case in~\cite{WulffNisen11,YUSTER2010}.

We also separately consider \emph{geometric} graphs.
We focus on \emph{$k$-nearest neighbors} ($k$-NN) graphs, and more generally, intersection (and $r$-overlap for $r = \Theta(1)$) graphs of $k$-ply neighborhood systems.
Given a point set $\mathcal P = \{p_1, p_2, \ldots, p_n\}$ of $n$ points in $\Re^d$,
for some constant dimension $d \ge 2$, and a parameter $k$, the $k$-NN graph of $\mathcal P$
has vertex set $\mathcal P$, and there is an edge $(p_i, p_j)$ iff either $p_i$ is among $k$ 
closest points to $p_j$ or vice-versa. See Section~\ref{sec:GeometricGraphs}
for the definition of $k$-ply neighborhood systems, and further discussion about this important graph family.

The crucial property of these graphs that we (and~\cite{Cohen93}) exploit is that they admit
recursive separators of size $O(k^{1/d} n^{1 - 1/d})$, and moreover, these separators are efficiently computable within $n^{2 + o(1)}$ (or less) centralized time and in parallel polylogarithmic time using $n^{2 + o(1)}$ (or less) work.
(These results are due to~\cite{MillerThurston90, teng1991points, MillerTTV97, EppsteinMillerTeng95}.
We refer the interested reader to~\cite{teng1991points} for an excellent exposition of this fascinating subject.) We write $k = n^q$, for a constant parameter $0 < q < 1$. 
%
Considering $d = 2$ (i.e., $2$-dimensional point sets), we show that for any 
$\omega - 1 < \mu \le 1 + q$
(recall that $m = n^{\mu}$ is the number of edges in the input graph), there exists a
threshold $\sigma'_{thr} = \sigma'_{thr}(\mu, q) < 1$ (given by~\eqref{eq:sigmaThreshGeometric1}) such that for any 
$\sigma \in (\sigma'_{thr}, 1) $, the exponent $F(\sigma, q)$ of the running time of our 
centralized algorithm is better than the exponent of the previous state-of-the-art.
For example, for $q = 0.7$, $\mu = 1.6$ and $\sigma = 0.7$, our exponent is $\omega(0.7) = 2.1530$,
while the previous state-of-the-art is $2.3$. See Theorem~\ref{thm:geometricTimeDim2Cetlzd} and Tables~\ref{tab:GeometricDim2Comp1} and~\ref{tab:GeometricDim2Comp2} for further details.

We note that the improvements in the parallel setting are even more significant (see Theorem~\ref{thm:geometricTimeDim2}).
Also for geometric graphs, both centralized and parallel variants our our algorithm work \emph{from scratch}, i.e., the respective separator decompositions can be computed (via aforementioned algorithms by~\cite{MillerThurston90, MillerTTV97, teng1991points, EppsteinMillerTeng95} within stated resource bounds). We also show that for any constant dimension $d \ge 2$, there are non-empty ranges of parameters $q, \mu, \sigma$, such that our algorithms for $S \times V$ reachability and $(1 + \epsilon)$-approximate distance computation on $k$-NN $d$-dimensional graphs with $k = n^q$, $m = n^{\mu}$ edges (and their aforementioned extensions)
outperform the current state-of-the-art bounds. (See Theorem~\ref{thm:geometricGraphsTimeLarged}.)

In all our results the edges can be oriented and weighted arbitrarily 
(as long as weights are non-negative and the aspect ratio is at most $\operatorname{poly}(n)$),
while the structural assumptions (such as having bounded treewidth or excluded minor or being $k$-NN)
apply to the \emph{unweighted undirected} skeleton of the input graph.

\subsection{Our Techniques}\label{sec:ourTechnqs}
 In this section we provide a high-level overview of our algorithms.
 
\subsubsection{General Graphs}\label{sec:ourTechnqsGnrl}
 Given a digraph $G = (V,E)$, and a positive integer parameter $D$,
  we say that a graph $G' = (V,H)$ is a \emph{$D$-shortcut} of $G$ if
  for any ordered pair $u,v \in V$ of vertices, $v$ is reachable from $u$ in $G$
  iff it is reachable from $u$ in $G \cup G' = (V, E \cup H)$ using at most $D$ hops.
  Kogan and Parter~\cite{KoganParterDiShortcuts1, KoganParterDiShortcuts2} showed that
  for any parameter $1 \le D \le \sqrt n$, for any digraph $G$, there exists a $D$-shortcut $G'$ with
  $\tilde O(\frac{n^2}{D^3} + n)$ edges, and moreover, this shortcut can be constructed in $\tilde O(m \cdot n/D^2 + n^{3/2})$ randomized time~\cite{KoganParterDiShortcuts2}.
  In a subsequent work, Kogan and Parter~\cite{KoganParterDiShortcuts3} further improved the time complexity of their shortcut construction to  $\tilde O(m \cdot n/D^2)$, for any $1 \le D \le \sqrt n$.

   Our algorithm starts with invoking the algorithm of~\cite{KoganParterDiShortcuts2} for an appropriate parameter $D$.\footnote{We note here that all our algorithms use rectangular matrix multiplication as a sub-routine and hence require $\Omega(n^{2})$ time. Therefore, the additional saving of  $n^{3/2}$ in the time complexity of~\cite{KoganParterDiShortcuts3} in comparison to~\cite{KoganParterDiShortcuts2} does not improve any of our algorithms.}
 We call this the \emph{diameter-reduction step} of our algorithm.
  We then build two matrices.
  The first one is the adjacency matrix $A'$ of the graph $G \cup G'$, with all the diagonal entries equal to $1$.
  This is a square $ n \times n$ matrix, and the diagonal entries correspond to self-loops.
  The second matrix $B$ is a rectangular $|S| \times n$ matrix.
  It is just the adjacency matrix of of the graph $G \cup G'$ restricted to $|S|$ rows corresponding to designated sources of $S$.
  
  Next, our algorithm computes the Boolean matrix product $B \cdot A'^{D}$.
  Specifically, the algorithm computes the product $B \cdot A'$, and then multiplies it by $A'$, etc., and does so $D$ times.
  Hence, this computation reduces to $D$ Boolean matrix products of a rectangular matrix of dimensions $|S| \times n$ by a square 
  matrix with dimensions $n \times n$. Each of these $D$ Boolean matrix products can be computed using standard matrix multiplication (see Section~\ref{sec:bmm}).
    We use fast rectangular matrix multiplication algorithm by~\cite{VassilevskaBestRectangular} to compute these products.
  As $|S| = n^{\sigma}$, this computation, henceforth referred to as the \emph{reachability computation step}, requires $O(D \cdot n^{\omega(\sigma)})$ time. 
  Since the diameter of the graph $G \cup G'$ is at most $D$ (by definition of a $D$-shortcut), the 
  $D$-hop reachabilities that we are computing in this step with respect to $G \cup G'$ are equivalent to 
  the general unrestricted reachabilities in the original graph $G$.
  
  Observe that the running time of the reachability computation step grows with $D$, and the running time of
  the diameter-reduction step decreases with $D$. It is now left to balance these two running times by carefully choosing $D$.
  This completes the overview of our reachability algorithm for general graphs, that improves 
  the state-of-the-art running time for $S \times V$ reachability, $|S| = n^{\sigma}$, for $\tilde \sigma \le \sigma \le 0.53$, $\tilde \sigma < 0.3336$.
  Moreover, as was mentioned above, for any $\sigma$, $\tilde \sigma \le \sigma < 1$, there is a non-empty interval $I_{\sigma} \subseteq [1,2]$ of 
  values $\mu$, such that this algorithm improves upon state-of-the-art solutions for this problem on graphs with $m = \Theta(n^{\mu})$ edges.
  Since there is no known efficient (polylogarithmic-time) construction of shortcuts that are on par with that of \cite{KoganParterDiShortcuts1,KoganParterDiShortcuts2}, currently these results apply only in the centralized setting. 

  \subsubsection{Restricted Families of Graphs}\label{sec:OurTechnqsRestricted}
The starting point of our algorithm for $S \times V$ reachability in graphs that admit $k^{\rho}$-recursive separators, for a parameter $\rho < 1$, is that the algorithm of~\cite{Cohen93} implicitly builds an $O(\log n)$-shortcut for its input graph.
This shortcut has size $\tilde O(n^{\max \{2\rho, 1 \}})$, and it is computed in parallel deterministic polylogarithmic time and work $\tilde O(n^{\omega \rho})$.
Cohen's algorithm then augments the input graph $G$ with the constructed shortcut $G'$ and obtains an augmented graph $\hat G$. It then runs BFS (or DFS) to depth $O(\log n)$ on $\hat G$.
Since $G'$ is an $O(\log n)$-shortcut, the computed reachabilities (restricted to depth $O(\log n)$)
are the same as the (unrestricted) reachabilities in $G$.
The number of edges in $\hat G$ is dominated by the number of edges in the shortcut $G'$, i.e., 
it is $\tilde O(n^{\max \{1,2\rho \}})$.
Thus, $|S|$ explorations from each source $s \in S$ in $\hat G$ require polylogarithmic deterministic time (because 
they are $O(\log n)$-restricted) and $\tilde O(n^{2\rho + \sigma})$ work.
(We focus on the case $\rho \ge 1/2$.) Thus, the overall work complexity of the algorithm of~\cite{Cohen93} is 
$\tilde O (n^{\omega \rho} + n^{2\rho + \sigma} )$.

Our first algorithm (for computing $S \times V$ reachability on graphs that admit $k^{\rho}$-recursive separators) builds the same shortcut as the algorithm of~\cite{Cohen93}, but then multiplies the rectangular matrix $B$ created from $|S|$ rows that correspond to the sources $s \in S$ of the adjacency matrix $A$ of the augmented graph $\hat G$, by the matrix $A$. It then multiplies the product by $A$ again, and iterates for $O(\log n)$ times. 
The resulting parallel algorithm improves the algorithm of~\cite{Cohen93} in a wide range of parameters.

We then extend this approach in a number of directions.
First, we adapt the algorithm of~\cite{Cohen93} that builds $O(\log n)$-shortcuts so that it builds
$(1 + \epsilon, O(\log n))$-hopsets (see Definition~\ref{def:hopset}) within roughly the same time.
This extension (and analogous extensions in other settings) enables us to solve
$S \times V$ $(1 + \epsilon)$-approximate distance computation within essentially (up to polylogarithmic in $n$ and polynomial in $\epsilon^{-1}$ factors)
the same time as we solve $S \times V$ reachability problem.

The algorithm of~\cite{Cohen93} for building the shortcut works on a separator decomposition tree $T_G$ of the input graph $G$.
It starts with computing all reachabilities in the constant-sized leaf nodes of $T_G$,
and then, for any internal node $t$ whose children $t_1$, $t_2$ were already taken care of,
it builds a number of carefully crafted graphs over certain small subsets of vertices that appear in $t$, and computes reachabilities between all pairs of vertices in these graphs. Cohen~\cite{Cohen93} also devised a variant of her algorithm that computes an \emph{exact}
$O(\log n)$-hopset of $G$, albeit its work complexity is $\tilde O(n^{\max \{3\rho, 1\}})$
(rather than $\tilde O(n^{\max \{\omega \rho, 1\}}) $ for computing shortcuts).
This variant invokes all-pairs-shortest-paths computation in each of the aforementioned subgraphs, rather than reachability computations.
Our algorithm instead invokes the algorithm of~\cite{APSPDirectedZwick} for $(1 + \epsilon)$-approximate all-pairs
shortest paths computation in each of these subgraphs, leading to overall complexity of $\tilde O(n^{\omega \rho})$ for building the hopset.

The main obstacle is, however, in analyzing the stretch of the resulting hopset.
The stretch accumulates as we go up from leaves of $T_G$ towards its root.
However, we argue that for a node $t$ of level $\ell = \ell(t)$ in $T_G$,
the stretch of a hopset edge $(u,v)$ between a pair of vertices $u, v \in t$
is at most $(1 + \epsilon)^{O(\ell)}$.
As $\ell = O(\log n)$, we rescale $\epsilon' = \frac{\epsilon}{O(\log n)}$, and obtain an overall stretch of $(1 + \epsilon')$, at the expense of increasing the size by a factor of $O(\log n)$. 

We also extend the algorithm of~\cite{Cohen93} to work on graphs with treewidth $O(n^{\rho})$,
and to graphs that exclude $K_h$ as a minor for $h = n^{\eta}$, and for $k$-NN geometric graphs (and their extensions) for $k = n^q$, for parameters $0 < \rho < 1$, $0 < \eta < 1$, $0 < q < 1$.
Note that these graphs do not necessarily admit sublinear recursive separators.
Indeed, a subgraph with $n^{\rho}$ vertices of a graph with treewidth $O(n^{\rho})$
may not have a sublinear separator.
(Consider, for example, a graph with $n^{1 - \rho}$ layers, with $n^{\rho}$
vertices each, with a complete bipartite graph between each pair of 
consecutive layers. It has treewidth $O(n^{\rho})$, but no sublinear recursive separator.
In particular, it has $O(n^{1 + \rho})$ edges while graphs with sublinear recursive separators have $O(n)$ edges.)
The situation is similar with $K_h$-minor free graphs: their subgraphs of size $h = n^{\eta}$ may not 
admit sublinear separators.
This is also the case for $k$-NN graphs and subgraphs with $k = n^q$ vertices. Thus, the result of~\cite{Cohen93} is not applicable to them.

We generalize the construction of shortcuts from~\cite{Cohen93}
(and our extension to $(1 + \epsilon, O(\log n))$-hopsets) to these families of graphs.
In particular, we show that such shortcuts and hopsets can be constructed in $\tilde O (n^{\omega \rho + 1 - \rho})$ time for graphs with treewidth $O(n^{\rho})$ (Corollary~\ref{cor:flatTime}), in $\tilde O ((h \sqrt n)^{\omega})$
time for graphs that exclude $K_h$-minor (Corollary~\ref{cor:excludedMinor}), and in $\tilde O (k^{\omega/d} \cdot n^{\omega - \omega/d})$ time for geometric $d$-dimensional $k$-NN graphs 
and their extensions, (Corollary~\ref{cor:knn}).
These results lead to the first non-trivial bounds for \emph{centralized}
$S \times V$ reachability and approximate distance computation on these families of graphs.
Indeed, the result of~\cite{Cohen93} applies only to graphs with $O(n)$ edges,
for which these problems are easy in the centralized setting. 
By extending her approach to denser graph families we obtained non-trivial bounds for the centralized setting as well.

\textbf{Related Work}
  Fast rectangular matrix multiplication in conjunction with hopsets was recently employed in a similar way in the context of
  distance computation in \emph{undirected} graphs by Elkin and Neiman~\cite{ElkinN22}.
  Directed hopsets, based on Kogan-Parter constructions of shortcuts~\cite{KoganParterDiShortcuts1, KoganParterDiShortcuts2}, 
  were devised in~\cite{BernsteinWein}.
  Shortest paths in disk graphs were extensively studied; see~\cite{ChanS19a,HaitaoJ20,brewer2024, debergC2025} and references therein.
These algorithms, though very efficient, apply only to undirected graphs, while our graphs
are directed (and arcs can be oriented arbitrarily).
Similarly these algorithms apply either on unit-weight graphs, or on graphs with a naturally defined geometric weight function.
Our algorithm applies to geometric graphs with a general weight function.
Finally, these algorithms apply for $d = 2$, while our algorithm provides 
non-trivial bounds for geometric graphs in any constant dimension.

Another thread of related work is the work on the {\em fully polynomial FPT\footnote{FPT stands for {\em fixed parameter tractability}.} framework} \cite{Fomin2015}. 
Our algorithm can be viewed as a fully polynomial FPT algorithm, parameterized by treewidth, for multi-source reachability and approximate distance computation.
Fomin et al.\ \cite{Fomin2015} initiated a systematic study of polynomially solvable problems parameterized by treewidth $t$, and devised algorithms with running time of the form $O(t^d p(n))$, for some constant $d$ and polynomial $p(n)$, for a number of fundamental graph and matrix problems, including maximum matching and maximum flow. Abboud et al.\ \cite{abboud2015} devised FPT algorthms for computing diameter and eccentricities parameterized by treewidth; the dependencies of their running times on treewidth are, however, at least exponential in $t$. Distance oracles for graphs with constant treewidth were devised by Chaudhuri and Zaroliagis \cite{ChaudZar00,ChaudZar98}. 

\subsection{Outline}\label{sec:Outline}
In Section~\ref{sec:prelims} we introduce definitions and basic concepts.
We describe our reachability algorithm for general graphs in Section~\ref{Sec:BTwoStepTheAlgo}.
Section~\ref{Sec:BoundedSep} is devoted to our algorithm for multi-source reachability for graphs that admit $k^\rho$-recursive separators and graphs with treewidth $O(n^\rho)$. Section~\ref{sec:ApproxDistance}
extends this algorithm to the problem of computing approximate distances in these graphs.
In Section~\ref{sec:Shorcuts-GraphFamilies} we analyze the running time of our shortcut
and hopset constructions in specific graph families. In Section~\ref{sec:reachShortFamilies}
we employ the analysis of our shortcut/hopset constructions from Section~\ref{sec:Shorcuts-GraphFamilies} to analyze the running time of our reachability and distance estimation algorithms in these graphs.

\section{Preliminaries}\label{sec:prelims}
We abbreviate the phrases \textsf{with high probability}, \textsf{with respect to} and \textsf{without loss of generality} by \textsf{w.h.p.}, \textsf{w.r.t.}  and \textsf{w.l.o.g.}, respectively.
\subsection{Graphs} 
For an $n$-vertex, $m$-edge directed graph $G = (V,E)$, let $TC(G)$ denote its transitive closure.


\noindent For a pair $(u, v) \in TC(G)$, the distance from $u$ to $v$ in $G$
denoted $d_G(u,v)$ is the length of a shortest path from $u$ to $v$.
For $(u, v) \notin T C(G)$, we have $d_G(u, v) = \infty$. 
The \emph{diameter} of $G$, denoted $Diam(G)$, is defined as 
$\max_{(u,v) \in TC(G)} d_G(u,v)$. 

\noindent An edge $(u,v) \in TC(G)$ which is not in $E$ is called a \emph{shortcut} edge.

\begin{definition} \label{def:kBoundedDistance}
Given an integer parameter $k$ and a pair $(u,v) \in TC(G)$, the \emph{$k$-bounded distance} from $u$ to $v$ in $G$ denoted  $d^{(k)}_G(u,v)$ is the length of a shortest $u-v$-path in $G$ that 
contains no more than $k$ edges. Note that that $d_G(u,v)$ is the unrestricted version of $d^{(k)}_G(u,v)$.
\end{definition}

As mentioned in Section~\ref{sec:Intro}, a naive solution to the $S \times V$ reachability involves performing a BFS or DFS exploration 
from each of the source vertices in $S$.
\begin{definition}\label{def:naiveReach}
For an $n$-vertex $m$-edge digraph $G = (V,E)$, we refer to the naive method for $S\times V$ reachability (which involve running a BFS or DFS separately
 from each of the sources in $S$) as $S \times V$-\emph{naiveReach} method.
We denote by $\mathcal{T}_{Naive}$ the time complexity of  $S \times V$-naiveReach, and it is given by $\mathcal{T}_{Naive} = O(m \cdot |S|)$.
\end{definition}

Another way to solve $S \times V$ reachability is to compute the transitive closure of the input digraph using square matrix multiplication.
\begin{definition}\label{def:squareReach}
For an $n$-vertex $m$-edge digraph $G = (V,E)$, we refer to the technique of computing the transitive closure of $G$ using 
square matrix multiplication as $S \times V$-\emph{squareReach}. Observe that computing the transitive closure of $G$ 
solves $S \times V$ reachability for up to $|S| = n$ sources.
We denote by $\mathcal{T}_{Square}$ the time complexity of  $S \times V$-squareReach, and it is given by $\mathcal{T}_{Square} = \tilde O(n^{\omega})$,
where $\omega$ is the exponent of $n$ in the number of operations required to compute the product of two $n \times n$ square matrices.
\end{definition}

\begin{definition}\label{def:shortcut}
For a digraph $G$, a set of shortcut edges $H \subseteq TC(G)$ is called a $D$-shortcut if $Diam(G \cup H) \le D$.
\end{definition}
The following theorem summarizes the fast shortcut computation algorithm by Kogan and Parter~\cite{KoganParterDiShortcuts2}.
\begin{theorem}\label{thm:koganParter1}
 (Theorem $1.4$,~\cite{KoganParterDiShortcuts2}) There exists a randomized algorithm that for every $n$-vertex $m$-edge digraph $G$ and $D = O(\sqrt n)$, computes, w.h.p., in time $\tilde O(m \cdot n/D^2 + n^{3/2})$
a $D$-shortcut $H \subseteq T C(G)$ with $|E(H)| = \tilde O(n^2/D^3 + n)$ edges.
\end{theorem}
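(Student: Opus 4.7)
The plan is to build $H$ via a randomized pivot-based scheme combined with truncated BFS/DFS explorations. First, I would independently sample each vertex with probability $p = \tilde\Theta(1/D)$ to form a pivot set $R$ of expected size $\tilde O(n/D)$; by a standard union bound, w.h.p.\ every directed path in $G$ of length at least $D$ contains at least one pivot of $R$. Thus, if we can efficiently record, for every pivot $r$, the set $B_r$ of vertices that reach $r$ within $D$ hops and the set $F_r$ of vertices reachable from $r$ within $D$ hops, then adding the shortcut edge $(u,v)$ for every $r \in R$ and every $u \in B_r$, $v \in F_r$ already guarantees that $\mathrm{Diam}(G \cup H) \le D$ in the resulting graph.

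Carried out naively, this gives $\tilde O(n^2/D)$ shortcut edges and $\tilde O(mn/D)$ time — both significantly worse than the claimed bounds. To sharpen the edge count to $\tilde O(n^2/D^3 + n)$ and the running time to $\tilde O(mn/D^2 + n^{3/2})$, the Kogan--Parter innovation is to replace individual pivot vertices by \emph{pivot paths}: we iteratively find long directed paths of length $\Theta(D)$ in the \emph{current} graph $G \cup H_i$ via DFS, and use each path as a single hitting structure that substitutes for $\Theta(D)$ pivot vertices. Because any path of length $\ge D$ either has been shortcut already or must cross such a witness path, only $\tilde O(n/D^2)$ pivot paths are produced across all iterations. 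Each pivot path requires a forward and backward truncated BFS (cost $\tilde O(m)$ each), directly producing the $\tilde O(mn/D^2)$ term, and the DFS path extraction contributes the additive $\tilde O(n^{3/2})$ overhead in the regime $D = \Theta(\sqrt n)$.

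The main obstacle is the accounting that produces the $\tilde O(n^2/D^3)$ edge bound. The natural danger is that two different pivot paths both connect an overlapping pair of large sets $B_P, F_P$, leading to duplicated shortcuts. To control this I would maintain a potential function counting the number of reachable pairs whose current hop-distance in $G \cup H_i$ still exceeds the iteration's target, and charge every added shortcut edge to a unit decrease in this potential along a witness path. Combined with the fact that only $\tilde O(n/D^2)$ pivot paths are used and that each contributes on average $\tilde O(n/D)$ \emph{new} shortcut edges (the rest being subsumed by earlier iterations), this yields the claimed $\tilde O(n^2/D^3 + n)$ bound, where the additive $n$ accounts for the trivial edges needed when $D = \Theta(\sqrt n)$.

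The correctness and high-probability guarantees then follow by taking the sampling constants large enough so that every long path is hit by a pivot path with probability at least $1 - 1/\mathrm{poly}(n)$, and then invoking a union bound over all $O(n^2)$ reachable pairs. The only subtlety left is verifying that the iterative outer loop terminates after $O(\log n)$ rounds, which follows because each round geometrically halves the remaining diameter obstruction, ensuring that after $O(\log n)$ iterations the diameter of $G \cup H$ falls below $D$.
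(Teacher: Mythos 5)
The paper does not actually prove this statement: Theorem~\ref{thm:koganParter1} is imported verbatim as Theorem~1.4 of Kogan and Parter~\cite{KoganParterDiShortcuts2} and used as a black box, so there is no internal proof to compare yours against. Judged as a blind reconstruction of the Kogan--Parter argument, your sketch captures the right arithmetic ($\tilde O(n/D^2)$ path structures times $\tilde O(n/D)$ edges each) but leaves unproven exactly the two steps where the difficulty lives, and the mechanism you describe is not the one that makes the bounds work.

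First, the claim that only $\tilde O(n/D^2)$ pivot paths of length $\Theta(D)$ are ever produced does not follow from ``any path of length $\ge D$ must cross a witness path'': a graph can contain $n/D$ pairwise vertex-disjoint paths of length $D$, so $\tilde O(n/D^2)$ structures of that length cannot hit them all, and the potential/charging argument you invoke to control duplication is named but never defined or analyzed --- it \emph{is} the theorem. In Kogan--Parter the count $\tilde O(n/D^2)$ arises differently: one first builds a path cover with $\tilde O(n/D)$ paths (covering the vertices of large reachability ``depth'') and then samples each path with probability $\tilde\Theta(1/D)$; the hitting of long shortest paths is handled by a separate vertex sample of density $\tilde\Theta(1/D)$, not by the paths themselves. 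Second, the construction never adds product sets $B_P \times F_P$: for each of the $\tilde O(n/D)$ sampled vertices $v$ and each of the $\tilde O(n/D^2)$ sampled paths $P$ it adds only $O(1)$ edges (into $v$ from the last vertex of $P$ that reaches $v$, and out of $v$ to the first vertex of $P$ that $v$ reaches), exploiting the monotonicity of reachability along a directed path; this is what yields $\tilde O(n^2/D^3)$ edges without any deduplication argument. Without that idea --- or a genuinely worked-out substitute for your potential function --- the proposal establishes neither the $\tilde O(n^2/D^3 + n)$ size bound nor the $\tilde O(mn/D^2 + n^{3/2})$ running time.
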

The \emph{hopsets} are equivalent of shortcuts for weighted graphs. 
Formally they are defined as follows:
\begin{definition}\label{def:hopset}
Given a weighted graph $G = (V, E, w)$ with real edge weights and two parameters $0 \le \epsilon < 1$ and $\beta > 0$, a $(1+\epsilon, \beta)$-hopset of $G$ is another graph $G' = (V,H, w')$
on the same set of vertices such that the following holds in $\tilde G = G \cup G' = G(V, E \cup H)$ for every pair $(u,v) \in TC(G)$
\[
       d_G(u,v) \le  d^{(\beta)}_{\tilde G}(u,v) \le (1 + \epsilon) \cdot d_G(u,v),
\]
where $d^{(\beta)}_{\tilde G}(u,v)$ is the $\beta$-bounded distance (see Definition~\ref{def:kBoundedDistance})
from $u$ to $v$ in 
$\tilde G$
\end{definition}
For a $(1 + \epsilon, \beta)$-hopset $H$ of a graph $G$, $(1+\epsilon)$ is called the \emph{stretch} and $\beta$ is called the \emph{hopbound} of $H$.
A hopset $H$ with stretch $1$, i.e, $\epsilon =0$, is called an \emph{exact} hopset.
We call an exact hopset ($\epsilon = 0$) with hopbound $\beta$ simply a $\beta$-hopset.


\subsection{Matrix Multiplication over Semirings}\label{sec:bmm}
\textbf{Matrix Notation.}
For a matrix B, we denote by $B[i,j]$ the entry in row $i$ and column $j$ of $B$. 
The transpose of $B$ is denoted $B^T$ . 

We denote by $\mathbb Z$ the set of integer numbers.
Let $R = \mathbb Z \cup \{-\infty, \infty \}$ and suppose that for two binary operations $\oplus : R \times R \rightarrow R $ and $\otimes : R \times R \rightarrow R$,
the structure $(R, \oplus, \otimes)$ is a ring without the requirement of having an inverse for the $\oplus$ operation.
Given two matrices $K$ and $L$ over $R$ of dimensions $ t \times n$ and $ n\times n$, respectively, for some positive integers $t$ and $n$,
the matrix product of $K$ and $L$ over $(R, \oplus, \otimes)$ is the $t \times n$ matrix $M$ defined as $M[i, j] = \oplus^n
_{k=1} (K[i, k] \otimes L[k, j])$ for
any $(i, j) \in \{1,\ldots , t\} \times \{1,\ldots, n\}$.
The Boolean matrix product is the matrix product over the
semiring $(\{0, 1\}, \vee, \wedge)$. 
In our setting, $t$ is typically $n^{\sigma}$, for some $0 < \sigma < 1$.

Note that the Boolean matrix product (henceforth, BMM) of $K$ and $L$ can be easily computed by treating the two matrices as standard integer matrices and computing 
the standard integer matrix product $\tilde M = K \cdot L$ over the integers, and then setting $M[i,j] = 1$ if $\tilde M[i,j] \ge 1$ and $0$, otherwise. 
Thus, the number of operations required to compute the BMM of $K$ and $L$ as above 
is $O(n^{\omega{(\sigma)}})$.
Vassilevska Williams et al. presented the current state-of-the-art upper bounds on $\omega(\sigma)$ (Table 1 of~\cite{VassilevskaBestRectangular}).
We present these bounds in Table~\ref{tab:tableLeGall} here for completeness.)
\begin{table}[!ht]
 \captionsetup{font=scriptsize}
  \begin{center}  
  {\small  
    \begin{tabular}{l|c} 
     $\sigma$ & \textbf{upper bound on $\omega(\sigma) $}\\
      \hline
                0.321334 & 2\\
                0.33 & 2.000100\\
                0.34 &  2.000600\\
                0.35 & 2.001363\\
                0.40 & 2.009541\\
                0.45 & 2.023788\\
        \end{tabular}
        \hspace{1em}
         \begin{tabular}{l|c}
          $\sigma$ & \textbf{upper bound on $\omega(\sigma) $}\\
          \hline
            0.50 & 2.042994\\
            0.527661 & 2.055322\\  
            0.55 & 2.066134\\
             0.60 & 2.092631\\
             0.65 & 2.121734\\
              0.70 & 2.153048\\                               
          \end{tabular}
          \hspace{1 em}        
      \begin{tabular}{l|c}
          $\sigma$ & \textbf{upper bound on $\omega(\sigma) $}\\
          \hline
                 0.75 & 2.186210\\               
                0.80 & 2.220929\\
                0.85 & 2.256984\\
                0.90 & 2.294209\\
                0.95  &2.332440\\
                1.00 & 2.371552\\
         \end{tabular}  
         }                  
  \end{center}
  \caption{Upper bounds on the exponent of $n$ in the number of operations required to 
multiply  an $n^{\sigma} \times n$  matrix by an $n \times n$ 
matrix (reproduced from~\cite{VassilevskaBestRectangular} here for convenience).} 
   \label{tab:tableLeGall}  
\end{table}
We denote by $\alpha$ the \emph{dual matrix multiplication exponent}, defined as the largest (supremum) $\alpha$ for which $\omega(\alpha) = 2$.

\emph{Distance Product} of two matrices $K$ and $L$ of appropriate dimensions is
denoted by $M = K \star L$. It is denoted as the product over $(R, \oplus, \otimes)$
with $\oplus = \min$ and $\otimes = +$, i.e., $M[i,j] = \min_{k} (K[i,k] + L[k,j])$.

\section{The Algorithm for General Graphs}\label{Sec:BTwoStepTheAlgo}
We describe our $S \times V$ reachability algorithm for general graphs in Section~\ref{Sec:BTwoStepTheAlgo}.
Our results for dense graphs appear in Section~\ref{sec:basicDense} and our results for sparse graphs appear in Section~\ref{sec:basicSparse}.

\subsection{The Algorithm}
In this section, we describe our two-step reachability  scheme.

Let $G = (V,E)$ be an $n$-vertex directed unweighted graph.
Let $S \subseteq V$ be a set of source vertices such that $|S| = n^{\sigma}$ for some $0< \sigma < 1$.
We compute $S \times V$ reachability by executing Algorithm~\ref{alg:nonrecursiveDireach}.
We will henceforth refer to it as \emph{Basic $S \times V$ Reachability Algorithm}.
The algorithm accepts as input the graph $G$ and the set $S$ of sources.
In addition, it accepts as input a parameter $D$. The algorithm is described for an arbitrary choice of $D$.
However, in fact, we will set it as a certain function of $n$ and $|S|$, which will be explicated in the sequel.
(See Equation~\eqref{eq:deltaNoRecursion2}.)
\begin{algorithm}[h!]
   {\small
   \caption{\small{DiReach($G,S, D$)}}
  	\label{alg:nonrecursiveDireach}
 		\begin{algorithmic}[1]
			\State \text{Compute a} $D$\text{-shortcut} $H$ \text {for} $G$ \text{(see Theorem~\ref{thm:koganParter1}) }; \label{ln:Step1}
			\State \text{Define a matrix} $A' =A + I,$ \text{~~$A$ is the adjacency matrix of} $G \cup H$; \label{ln:Step2}
			\State \text{Let} $B^{(1)} = A_{S*}$; \label{ln:Step3} \Comment{Matrix $A$ restricted to the rows corresponding to vertices in the set $S$.}
		 	\For{$k = 1$ \text{to} $D-1$}\label{ln:Step4}
        				\State \text{Compute} $B^{(k+ 1)} = B^{(k)} \cdot A'$;
     			 \EndFor\label{ln:Step5}
		 	\State \textbf{return} $B^{(D)}$;		 	
		\end{algorithmic}
		}
\end{algorithm}
The lines~\ref{ln:Step1} and~\ref{ln:Step2}  constitute the diameter reduction step.
We compute a $D$-shortcut $H$ (line~\ref{ln:Step1}) of the input digraph $G$ using~\cite{KoganParterDiShortcuts2} 
(see Theorem~\ref{thm:koganParter1})
and define an $n \times n$ boolean matrix $A' = A + I$, where $A$ is the adjacency matrix of $G \cup H$ (line~\ref{ln:Step2}).
In line~\ref{ln:Step3} we define a new matrix $B^{(1)}$,
which is the adjacency matrix of $G \cup H$ restricted to the rows corresponding to the set $S$ of source vertices.
Lines~\ref{ln:Step4} to~\ref{ln:Step5} constitute the reachability computation step.
We repeatedly perform a rectangular Boolean matrix product, and return the final product as our output.

Let $\mathcal T_{DR}$ and $\mathcal T_{RC}$ denote the time complexity of the diameter reduction step and reachability computation step, respectively.
Then the overall time complexity of our algorithm is $\mathcal T_{DR} + \mathcal T_{RC}$.
\subsection{Analysis for Dense Graphs}\label{sec:basicDense}
In this section, we analyze our two-step algorithm on general $n$-vertex, $m$-edge
digraphs, where $m$ may be as large as $n(n-1)$.

Let $D = n^{\delta}$ for some $0 < \delta \le 1/2$. 
In the diameter reduction step we compute a $D$-shortcut of $G$.
 By Theorem~\ref{thm:koganParter1}, this can be done in time $ \mathcal T_{DR} = \tilde O(m \cdot n/D^2 + n^{3/2}) = \tilde O(n^{3-2\delta})$.
In the reachability computation step we perform $D =  n^{\delta}$ iterations, each of which computes a rectangular matrix product of 
an $n^{\sigma} \times n$ matrix by an $n \times n$ matrix.
Each matrix product requires $O(n^{\omega(\sigma)})$ time.
It follows that $\mathcal T_{RC} = O(n^{ \omega(\sigma) + \delta } )$.
The overall time complexity of the algorithm is therefore $\mathcal T_{DR} +  \mathcal T_{RC} = \tilde O(n^{3-2\delta} + n^{ \omega(\sigma) + \delta })$.
Let $g(\sigma)$ denote the exponent of $n$ (as a function of $\sigma$) in the overall time complexity of our algorithm.
It follows that
\begin{equation}\label{eq:deltaNoRecursion1}
		g(\sigma) = \min_{\delta} \max \{3 -2\delta, \omega(\sigma) + \delta \}.
\end{equation}
Since $3-2\delta$ decreases when $\delta$ grows and $\omega(\sigma) + \delta$ grows,
the minimum is achieved when the following equation holds
\begin{align}\label{eq:deltaNoRecursion2}
	\begin{aligned}
		3-2\delta &= \omega(\sigma) + \delta \text{, i.e.,} \\
		\delta &= 1- \frac{1}{3} \cdot \omega(\sigma).		
	\end{aligned}
\end{align}
The parameter $D$ is therefore set as $D = n^{\delta}$, with $\delta$ given by Equation~\eqref{eq:deltaNoRecursion2}.
It follows from~\eqref{eq:deltaNoRecursion1} and~\eqref{eq:deltaNoRecursion2} that 
\begin{equation}\label{eq:g0Sigma}
g(\sigma) = 1 + \frac{2}{3} \cdot \omega(\sigma).
\end{equation}
The correctness of the basic two-step algorithm (Algorithm~\ref{alg:nonrecursiveDireach}) follows directly from the definition of a $D$-shortcut.
Indeed, it is easy to verify (by an induction on the for-loop index $k$) that after $k$ iterations of the for loop (Lines~\ref{ln:Step4} -\ref{ln:Step5}), the matrix $B^{(k + 1)}$
records $S \times V$ $(k+1)$-hop reachabilities in $G \cup H$, i.e., $B^{(k + 1)}[s,v] = 1$, iff $v$ is reachable from $s$ via a path with at most $k + 1$ hops, for any $s \in S, v \in V$.
Thus, the output matrix $B^{(D)}$ (after $D-1$ iterations of the for loop) records $S \times V$ $D$-hop reachabilities in $G \cup H$.
The latter are, however, by definition of a $D$-shortcut, the same as (unbounded-hop) reachabilities in the original graph $G$.

To compare our algorithm with existing solutions, recall that the time complexity $\mathcal{T}_{Naive}$ of the naive algorithm is given by $ \mathcal{T}_{Naive} = O(m \cdot |S|) = O( n^{2 +\sigma})$. Hence in the range $\omega -2 = 0.371552 \le \sigma \le 1$, the naive algorithm is no better than computing the transitive closure of the input digraph. The latter requires $\tilde O(n^{\omega})$ time (where $\omega = \omega(1))$.
For our algorithm to do better than the naive algorithm in this range we require
\begin{equation*}
             1 + \frac{2}{3} \cdot \omega(\sigma) < \omega, \text{~~i.e.,~~} \omega(\sigma) < \frac{3}{2}(\omega -1).
\end{equation*}
Let $\sigma^* \approx 0.535$ be the value such that $\omega(\sigma) = \frac{3}{2}(\omega -1)$.
Since $\omega(\cdot)$ is a monotone function, 
our algorithm computes $S \times V$ reachability in $o(n^\omega)$ time for $1 \le |S| \le n^{\sigma^*}$ sources, as opposed to 
naive algorithm that does so for $1 \le |S| \le n^{\omega -2}$.

Denote by $\tilde \sigma$ the threshold value such that 
\begin{equation}\label{eq:sigmaThresholdTS}
 g(\tilde \sigma) = 1 + \frac{2}{3} \omega(\tilde \sigma)  = 2 + \tilde \sigma.
\end{equation}

For $\tilde \sigma  \le \sigma < \omega -2$, $g(\sigma) = 1 + \frac{2}{3} \omega(\sigma) < 2 + \sigma$.
Observe that by convexity of $\omega(\sigma)$, it suffices to verify the inequality for the endpoints $\sigma = \sigma^*$ and $\sigma = \omega -2$.
Therefore, in the range $n^{\tilde \sigma} \le |S| < n^{\sigma^*}$, our algorithm outperforms the state-of-the-art solutions.

For all $\tilde \sigma < \sigma <1$, we have 
\begin{equation}\label{eq:sigmaThresholdTS1}
g(\sigma) < 2 + \sigma.
\end{equation}
As Inequality~\eqref{eq:sigmaThresholdTS1} holds for $\sigma = 0.3336$, it follows that $1/3 < \tilde \sigma < 0.3336$.
(Inequality~\eqref{eq:sigmaThresholdTS1} does not hold for $\sigma = 1/3$.)
See also Section~\ref{sec:basicSparse} (between Lemma~\ref{lem:thresh} and Lemma~\ref{lem:sparseSigmaZero}) for more details.

We conclude with the following theorem:
\begin{theorem}\label{thm: noRecursionDense}
Let $G = (V, E)$ be an $n$-vertex directed unweighted graph with $\Theta(n^2)$ edges.
Fix $S \subset V$, $|S| = n^{\sigma}$, for $\tilde \sigma \le \sigma \le \sigma^* \approx 0.535, \tilde \sigma \approx 0.3336$.
Then, Algorithm~DiReach($G,S, D$) (Algorithm~\ref{alg:nonrecursiveDireach}) with an appropriate choice of $D$ 
is a randomized algorithm that computes with high probability $S \times V$ reachability 
on $G$ in $\tilde O(n^{1 + \frac{2}{3} \omega(\sigma)})$ time.
This algorithm outperforms the state-of-the-art solutions for $S \times V$ reachability 
in this range of parameters.
\end{theorem}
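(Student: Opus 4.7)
The plan is to formalize the three components sketched in the text preceding the theorem: correctness of Algorithm~\ref{alg:nonrecursiveDireach}, the runtime balancing that fixes $\delta = 1 - \tfrac{1}{3}\omega(\sigma)$, and the verification that the resulting exponent strictly beats both naive baselines throughout $[\tilde\sigma, \sigma^{*}]$.

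First, I would establish correctness by induction on the loop index $k \in \{1, 2, \ldots, D\}$, showing that for every $s \in S$ and $v \in V$, $B^{(k)}[s,v] = 1$ iff there is a directed path from $s$ to $v$ in $G \cup H$ using at most $k$ hops. The base case $k = 1$ holds because $B^{(1)} = A_{S*}$ is the adjacency matrix of $G \cup H$ restricted to the rows indexed by $S$. For the inductive step, I would use that $A' = A + I$: the identity component preserves reachability already recorded in $B^{(k)}$, while the adjacency component extends it by one additional hop in $G \cup H$. Setting $k = D$ and invoking Definition~\ref{def:shortcut} then replaces ``$\le D$-hop reachable in $G \cup H$'' by ``reachable in $G$,'' which is the desired output.

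Next, I would bound the two running-time terms. Writing $D = n^{\delta}$ and using $m = \Theta(n^{2})$, Theorem~\ref{thm:koganParter1} gives $\mathcal{T}_{DR} = \tilde O(n^{3-2\delta})$ (the additive $n^{3/2}$ term is dominated whenever $\delta \le 3/4$). The reachability loop performs $D-1$ Boolean products of an $n^{\sigma} \times n$ matrix by an $n \times n$ matrix, each in $\hat O(n^{\omega(\sigma)})$ operations (Section~\ref{sec:bmm}), so $\mathcal{T}_{RC} = \hat O(n^{\delta + \omega(\sigma)})$. Since $3 - 2\delta$ decreases in $\delta$ while $\delta + \omega(\sigma)$ increases, the maximum is minimized when the two are equal, giving $\delta^{*} = 1 - \tfrac{1}{3}\omega(\sigma)$ and the claimed exponent $g(\sigma) = 1 + \tfrac{2}{3}\omega(\sigma)$. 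I would also check that $\delta^{*} \in (0, 1/2]$ so that the hypothesis $D = O(\sqrt n)$ of Theorem~\ref{thm:koganParter1} is satisfied; this reduces to $2 \le \omega(\sigma) \le 3$ and holds throughout $\sigma \in [0,1]$.

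Finally, I would verify that $g(\sigma)$ strictly beats $\min\{O(n^{2+\sigma}), \hat O(n^{\omega})\}$ on the stated interval. For the TC-based baseline, $\sigma^{*}$ is defined so that $\omega(\sigma^{*}) = \tfrac{3}{2}(\omega - 1)$, equivalently $g(\sigma^{*}) = \omega$; monotonicity of $\omega(\cdot)$ then yields $g(\sigma) < \omega$ for every $\sigma < \sigma^{*}$. For the naive baseline, $\tilde\sigma$ is defined by Equation~\eqref{eq:sigmaThresholdTS} as $g(\tilde\sigma) = 2 + \tilde\sigma$, and the function $f(\sigma) = g(\sigma) - (2 + \sigma) = \tfrac{2}{3}\omega(\sigma) - \sigma - 1$ is convex since $\omega(\cdot)$ is. Because $f(\tilde\sigma) = 0$ and $f(\sigma^{*}) = \omega - 2 - \sigma^{*} < 0$ (using $\sigma^{*} > \omega - 2$), convexity forces $f$ to lie strictly below the chord connecting these two values on the open interval, so $f(\sigma) < 0$ throughout $(\tilde\sigma, \sigma^{*}]$. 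The main obstacle in this proof, in my view, is this convexity step together with the numerical input needed to locate $\tilde\sigma$ precisely; the preamble's verification of the inequality at $\sigma = 0.3336$ confirms the bound $\tilde\sigma < 0.3336$.
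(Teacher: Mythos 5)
Your proposal is correct and follows essentially the same route as the paper: the same induction on the loop index for correctness (reducing to the definition of a $D$-shortcut), the same balancing of $\tilde O(n^{3-2\delta})$ against $O(n^{\omega(\sigma)+\delta})$ to get $\delta = 1 - \tfrac{1}{3}\omega(\sigma)$, and the same monotonicity/convexity comparison against the two baselines at the endpoints $\tilde\sigma$ and $\sigma^{*}$ (your explicit chord argument for $f(\sigma)=\tfrac{2}{3}\omega(\sigma)-\sigma-1$ is just a cleaner phrasing of the paper's appeal to convexity of $\omega(\cdot)$). The added sanity checks ($\delta^{*}\le 1/2$ so Theorem~\ref{thm:koganParter1} applies, and domination of the $n^{3/2}$ term) are implicit in the paper and correct.
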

For example, we compute $S \times V$ reachability for $S = n^{\sigma}$, $\sigma = 0.4$ 
in time $\tilde O(n^{g(0.4)}) = \tilde O(n^{2.33969})$, improving upon the state-of-the-art bound of $O(n^{2.4})$ 
 for naive algorithm and the state-of-the-art bound of $\tilde O(n^{2.371552})$ obtained by fast square matrix multiplication.
 In Table~\ref{tab:tableComparisonDense}, we present values of $g(\sigma)$ corresponding to some specific values of $\sigma$ in the range $(\tilde \sigma, 0.53]$
 and show how they compare to the exponent of $n$ in $S \times V$-naiveReach (see Definition~\ref{def:naiveReach})  and $S \times V$-squareReach (see Definition~\ref{def:squareReach}).
  \begin{table}[!h]
  \captionsetup{font=scriptsize}
  \begin{center}  
  {\small 
    \begin{tabular}{c|c|c|c} 
     $\sigma$ &$g(\sigma)$ &Exponent of $n$ in $\mathcal{T}_{Naive}$ & Exponent of $n$ in $\mathcal{T}_{Square} $\\
      \hline
                0.335 & 2.333565 & \textbf{2.335} & 2.371552 \\
                0.34 & 2.3337 &  \textbf{2.34} & 2.371552\\
                0.35 & 2.334241 &  \textbf{2.35} &2.371552\\
                0.36 &2.33533 &  \textbf{2.36} & 2.371552\\
                0.37& 2.336422 & \textbf{2.37} & 2.371552\\
                0.38 &2.33751& 2.38 & \textbf{2.371552}\\
                0.39 &2.3386 & 2.39& \textbf{2.371552}\\
                0.40 &2.33969 & 2.40 &\textbf{2.371552}\\
                0.41& 2.34159 & 2.41&\textbf{2.371552}\\
                0.42 &2.34349 & 2.42 & \textbf{2.371552}\\
                0.43 & 2.34539 &2.43 &\textbf{2.371552}\\
                0.44 & 2.34729 &2.44 & \textbf{2.371552}\\
                0.45 &2.34919 & 2.45&  \textbf{2.371552}\\
                0.46 & 2.35175 & 2.46 & \textbf{2.371552}\\
                0.47 & 2.35431& 2.47 &  \textbf{2.371552}\\
                0.48& 2.35687 & 2.48 & \textbf{2.371552}\\
                0.49 &2.359435 & 2.49 &  \textbf{2.371552}\\
                0.50 & 2.3621996 & 2.5 & \textbf{2.371552}\\
                0.51& 2.365081& 2.51 & \textbf{2.371552}\\
                0.52 & 2.368166 & 2.52 & \textbf{2.371552}\\
                0.53 & 2.371252 & 2.53 & \textbf{2.371552}                
        \end{tabular}  
        }               
  \end{center}
  \caption{Comparison of our algorithm from Theorem~\ref{thm: noRecursionDense} with the $S \times V$-naiveReach (BFS-based)  and the $S \times V$-squareReach (TC-based) methods in terms of exponent of $n$ in their respective time complexities.
 Previous state-of-the-art exponents (for corresponding exponents of $n$ in $S$) are marked by bold font.
 Our exponent (given in the column titled $g(\sigma)$) is better than the previous state-of-the-art in the entire range presented in this table (i.e., $0.335 \le \sigma \le 0.53$).
 }    
   \label{tab:tableComparisonDense}  
\end{table}
\subsection{Analysis for Sparser Graphs }\label{sec:basicSparse}
In this section we argue that when our input $n$-vertex $m$-edge digraph $G$ has $m = \Theta(n^{\mu})$ edges, for
some $\mu <2$, then our bounds from Theorem~\ref{thm: noRecursionDense} can be further improved.
Let $m = \Theta(n^{\mu})$ be the size of the edge set of $G$, where $0 \le \mu < 2$.
As before,  let $D = n^{\delta}$, for $0 < \delta \le 1/2$.
Applying Theorem~\ref{thm:koganParter1} for $m = \Theta(n^{\mu})$
and $D = n^{\delta}$, it follows that $\mathcal T_{DR} = \tilde O(n^{1 + \mu -2\delta} + n^{3/2})$.
Recall that the running time of the reachability computation step is
$\mathcal T_{RC} = O(n^{\omega(\sigma) + \delta})$.
The second term of $\mathcal T_{DR}$ will always be dominated by $\mathcal T_{RC}$.
The overall time complexity of the algorithm is therefore 
$\mathcal T_{DR} +  \mathcal T_{RC} = \tilde O(n^{1 + \mu -2\delta} + n^{ \omega(\sigma) + \delta })$.
Let $g^{(\mu)}(\sigma)$ denote the exponent of $n$ (as a function of $\sigma$) in the overall time complexity of our algorithm.
It follows that
\begin{equation}\label{eq:deltaNoRecursion1Sparse}
		g^{(\mu)}(\sigma) = \min_{\delta} \max \{1 + \mu -2\delta, \omega(\sigma) + \delta \}.
\end{equation}
This expression is minimized when
\begin{align}\label{eq:deltaNoRecursion2Sparse}
	\begin{aligned}
		1 + \mu-2\delta &= \omega(\sigma) + \delta \text{, i.e.,}\\
		\delta &=  \frac{1 + \mu - \omega(\sigma)}{3}.		
	\end{aligned}
\end{align} 
It follows from~\eqref{eq:deltaNoRecursion1Sparse} and~\eqref{eq:deltaNoRecursion2Sparse} that
\begin{equation}\label{eq:g0mu}
    g^{(\mu)}(\sigma) = \frac{1 + \mu + 2\omega(\sigma)}{3}.
\end{equation}
Note that for $\mu = 2$, we have $g^{(\mu)} (\sigma) = g(\sigma)$.
Also, when $\mu <2$, the exponent $g^{(\mu)}(\sigma)$ is smaller than $\omega$ for a wider range of $\sigma$ than
 the exponent $g(\sigma)$.
 On the other hand, the sparsity of the input digraph also improves the time complexity $\mathcal{T}_{Naive} = O(m \cdot n^{\sigma}) = O(n^{\mu + \sigma})$
 of the naive algorithm. 
 It is possible that for a given combination of $\sigma$ and $\mu$, $\mathcal{T}_{Naive}$ 
 may be better than the time complexity, $\tilde O(n^{\frac{1 + \mu + 2\omega(\sigma)}{3} })$, of our algorithm.
For our algorithm to outperform the naive algorithm, we need
\begin{equation*}
		\frac{1 + \mu + 2\omega(\sigma)}{3} < \mu + \sigma.
\end{equation*}
This condition is equivalent to
\begin{equation}\label{eq:muLower}
		\mu >    \omega(\sigma)  + \frac{1}{2} - \frac{3}{2} \sigma.		
\end{equation}
 By substituting values of $\omega(\sigma)$ from Table~\ref{tab:tableLeGall} for specific values of $\sigma$, we get corresponding lower bounds on $\mu$.
In Table~\ref{tab:table0}, we present these lower bounds for various values of $\sigma$.
\begin{table}[h!]
 \captionsetup{font=scriptsize}
  \begin{center}
   {\small 
    \begin{tabular}{l|c} 
      \textbf{$\sigma$} & \textbf{$\mu $}\\
      \hline
                0 & $>2.5$\\
                0.34 &$> 1.9906$\\
                0.4 & $>1.91$\\
                0.5 &$> 1.793$\\
                0.6& $ > 1.693$ \\
                0.7 &$>  $ 1.603\\
                0.8 & $>1.521$\\
                0.9 & $> 1.444$\\
                1 & $> \omega - 1$
  \end{tabular}
  }
  \end{center}
   \caption{Values of $\mu$ for which $g(\sigma) < \mu + \sigma$.
   The condition $\mu > 2.5$ in the first row means that for small values of $\sigma$ state-of-the-art 
   solutions outperform our algorithm for all densities of input graph.}
    \label{tab:table0}
\end{table}

For our algorithm to work in $o(n^{\omega})$ time, we need 
\begin{equation}\label{eq:muUpper}
		\frac{1 + \mu + 2\omega(\sigma)}{3} < \omega \text{, i.e.,~}
		\mu < 3 \omega - 2\omega(\sigma) - 1.
\end{equation}
For $\sigma =1$, Inequality~\eqref{eq:muUpper} implies that $\mu < \omega -1$, matching the lower bound $\mu > \omega -1$ implied by~\eqref{eq:muLower}. Indeed, our algorithm improves existing bounds only for $\sigma <1$.
For every $\sigma < 1$, we get some non-trivial intervals $I_{\sigma}$ of $\mu$ for which $g^{(\mu)}(\sigma) < \min \{\mu + \sigma, \omega\}$.
For example, for $\sigma = 0.5$, $\mu > 1.793$ (see Table~\ref{tab:table0}) ensures that $g^{(\mu)}(\sigma) < \mu + \sigma$.
For inequality $g^{(\mu)}(\sigma)  < \omega$ to hold, substituting $\sigma = 0.5$ in~\eqref{eq:muUpper} implies that
$\mu < 2.028668$. It follows that the inequality always holds for all $\mu$, i.e., the condition on $\mu$ for $\sigma =0.5$ is that $\mu > 1.793$.
We computed these values of intervals for various values of $\sigma$.
Specifically, for a number of values of $\sigma$, using the best known upper bounds on $\omega(\sigma)$, 
we numerically computed the intervals $I_{\sigma} = (\mu_1, \mu_2)$, such that for graphs with $m = \Theta(n^{\mu})$ edges, $\mu_1 < \mu < \mu_2$,
our algorithm improves the existing bounds on the $S \times V$ reachability problem.
In Table~\ref{tab:table1}, we present these intervals for various values of $\sigma$.
\begin{table}[h!]
  \captionsetup{font=scriptsize}
  \begin{center}
  {\small 
    \begin{tabular}{l|c} 
      \textbf{$\sigma$} & \textbf{Interval $I_{\sigma} = (\mu_0, \mu_1)$ }\\
      \hline
      0.335&$ \mu > 1.99785$\\
      0.34 &$\mu > 1.9911$\\
           0.4 & $\mu > 1.91$ \\
      	0.5 & $\mu > 1.793$ \\
      	0.55 & $(1.74, 1.982)$\\
      	0.6 & $(1.693, 1.93)$\\
      	0.7 & $(1.603, 1.809)$\\
      	0.8 & $(1.521, 1.673)$\\
     	 0.9 & $(1.4442, 1.526)$\\
     	 0.99 & $(1.3787, 1.3872)$\\
           1 & $(\omega-1, \omega-1) = \phi$
  \end{tabular}
  }
  \end{center}
    \caption{Intervals of values of $\mu$ for which $g^{(\mu)}(\sigma) < \min \{\mu + \sigma, \omega \}. $
                 These intervals are not empty for any $\sigma$, $\tilde \sigma \le \sigma <1$, where $\tilde \sigma <0.3336$ is a universal constant.
          }
    \label{tab:table1}
\end{table}
These results suggest that for all $\sigma, \tilde \sigma < \sigma < 1$, there is a non-empty interval of $\mu$ 
such that for digraphs with $m = \Theta(n^\mu)$ edges, our algorithm improves over existing bounds for $S \times V$ reachability
for $|S| = n^{\sigma}$. We will soon prove that this is indeed the case.

In Table~\ref{tab:tableComparisonMu}, we present a comparison of our algorithm with the previous state-of-the-art algorithms for graphs with $m = \Theta(n^\mu)$ edges 
for various combinations of $\mu$ and $\sigma$.
  \begin{table}[!ht]
  \captionsetup{font=scriptsize}
  \begin{center}   
  {\small
    \begin{tabular}{c|c|ccc} 
     $\mu$ & $\sigma$ &$g^{(\mu)}(\sigma)$ &Exponent of $n$ in $\mathcal{T}_{Naive}$ & Exponent of $n$ in $\mathcal{T}_{Square} $\\
      \hline
                 1.95 & 0.375 & 2.32 & \textbf{2.325} & 2.371552\\
                 &0.4 &  2.323 & \textbf{2.35} & 2.371552\\
                 & 0.45 & 2.3325 & 2.4 & \textbf{2.371552}\\
                 & 0.50 & 2.345 & 2.45 & \textbf{2.371552}\\ \hline
                1.9 & 0.45 & 2.3159 & \textbf{2.35} & 2.371552\\ 
                 & 0.5 & 2.3287 & 2.4 & \textbf{2.371552}\\
                 & 0.55& 2.344 & 2.45& \textbf{2.371552}\\
                 & 0.6 & 2.362 & 2.5 & \textbf{2.371552}\\  \hline
                1.75 & 0.55  &2.294 & \textbf{2.3} & 2.371552\\
                 & 0.6 & 2.312 & \textbf{2.35} &2.371552\\
                & 0.65 & 2.331 &2.4 &\textbf{2.371552}\\
                 & 0.7 & 2.352 & 2.45 &\textbf{2.371552}\\ \hline
                 1.525& 0.8 &2.323 & \textbf{2.325}&2.371552\\
                  & 0.85 & 2.346 & 2.375 & \textbf{2.371552}\\
                 & 0.9 & 2.3711 & 2.425 & \textbf{2.371552} \\ \hline                   
        \end{tabular} 
        }                
  \end{center}
  \caption{Comparison of our algorithm for graphs with $m = \Theta(n^{\mu})$ edges
  with the $S \times V$-naiveReach  and $S \times V$-squareReach methods in terms of exponent of $n$ in their respective time complexities. For each pair $(\mu, \sigma)$ that we list in the table, we have $\mu \in I_{\sigma}$.
 We mark the previous state-of-the-art bounds by bold font.}    
   \label{tab:tableComparisonMu}  
\end{table}

Recall that $\frac{1}{2} + \omega(\sigma) - \frac{3}{2} \sigma$ is the lower bound on the value of $\mu$ implied by Inequality~\eqref{eq:muLower},
i.e., for $\mu > 1/2 + \omega(\sigma) - (3/2) \cdot \sigma$, our algorithm is better than the $S \times V$-naiveReach method.
The threshold  $ \tilde \sigma$ is the value such that $\frac{1}{2} + \omega(\tilde \sigma) - \frac{3}{2} \tilde \sigma =2$, i.e., $\omega( \tilde \sigma) = \frac{3}{2} (1 +  \tilde \sigma)$ holds.
(Recall that we defined it in Section~\ref{sec:basicDense} as the value that satisfies $1 + \frac{2}{3} \omega( \tilde \sigma) = 2+  \tilde \sigma$.
These two conditions are obviously equivalent.)
\begin{lemma}\label{lem:thresh}
For every $\tilde \sigma < \sigma <1  $, we have 
\begin{align}
\begin{aligned}\label{eq:threshold}
\frac{1}{2} + \omega(\sigma) - \frac{3}{2} \sigma < 2 \text{, i.e., } \omega(\sigma) < \frac{3}{2}(1 + \sigma).
\end{aligned}
\end{align}
\end{lemma}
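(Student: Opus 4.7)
The plan is to exploit the convexity of the rectangular matrix multiplication exponent $\omega(\cdot)$, which is a standard fact in the fast matrix multiplication literature. Set $f(\sigma) = \omega(\sigma) - \tfrac{3}{2}(1 + \sigma)$. Since $\omega(\sigma)$ is convex in $\sigma$ and the subtracted term is linear (hence affine), $f$ is convex on $[0,1]$.

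First I would record the two endpoint evaluations that pin down $f$. By definition $\tilde\sigma$ satisfies $\omega(\tilde\sigma) = \tfrac{3}{2}(1 + \tilde\sigma)$, i.e., $f(\tilde\sigma) = 0$. At the other end, using the current state-of-the-art bound $\omega = \omega(1) \le 2.371552$, we get
\begin{equation*}
f(1) \;=\; \omega - 3 \;\le\; 2.371552 - 3 \;<\; 0.
\end{equation*}
(Any upper bound $\omega < 3$ would suffice here; the sharp numerical value is only needed later in the paper.)

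Next I would apply the convexity inequality on the interval $[\tilde\sigma, 1]$. Any $\sigma \in (\tilde\sigma, 1)$ can be written uniquely as $\sigma = \lambda \tilde\sigma + (1-\lambda)\cdot 1$ for some $\lambda \in (0,1)$. Convexity of $f$ yields
\begin{equation*}
f(\sigma) \;\le\; \lambda\, f(\tilde\sigma) + (1-\lambda)\, f(1) \;=\; (1-\lambda)(\omega - 3) \;<\; 0,
\end{equation*}
which rearranges to $\omega(\sigma) < \tfrac{3}{2}(1 + \sigma)$ as claimed. Equivalently, adding $\tfrac{1}{2} - \tfrac{3}{2}\sigma$ to both sides recovers the form $\tfrac{1}{2} + \omega(\sigma) - \tfrac{3}{2}\sigma < 2$ stated in the lemma.

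The only non-routine ingredient is the convexity of $\omega(\sigma)$; I would cite the standard proof (it follows from the fact that $\omega(\sigma)$ is defined as the infimum of a family of tensor-rank upper bounds and that rectangular products can be composed via a ``matrix-Schur'' style argument, making $\omega$ concave along diagonals which in turn forces $\omega(\sigma)$ to be a convex function of $\sigma$). This is the main potential obstacle, but it is a well-established property going back to the rectangular matrix multiplication literature (e.g.,~\cite{VassilevskaBestRectangular} and references therein) and requires no new argument here. The remaining verification that $\tilde\sigma \in (1/3, 0.3336)$, already anticipated just before the lemma statement by checking that the inequality fails at $\sigma = 1/3$ but holds at $\sigma = 0.3336$ via Table~\ref{tab:tableLeGall}, then follows immediately by the intermediate value theorem combined with convexity.
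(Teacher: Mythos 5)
Your proposal is correct and follows essentially the same route as the paper: both evaluate the inequality at the endpoints $\sigma = \tilde\sigma$ (where equality holds by definition) and $\sigma = 1$ (where $\omega < 3$ gives strict inequality), and then conclude by convexity of $\omega(\cdot)$. Your write-up merely makes the convex-combination step explicit, which the paper leaves implicit.
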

\begin{proof}
Note that for $\sigma = \tilde \sigma$, we have $\omega(\sigma) = \frac{3}{2} (1 + \sigma)$ and for $\sigma =1$, $\omega =\omega(1) < \frac{3}{2}(1 + \sigma) =3$. The Inequality~\eqref{eq:threshold}
now follows by convexity of the function $\omega(\sigma)$.
\end{proof}
To evaluate the value of $\tilde \sigma$ we note that for $\sigma= 0.34$,
we have $\omega(\sigma) = 2.0006$ (see Table~\ref{tab:tableLeGall})
and $\frac{3}{2}(1 + \sigma) = 2.01$.
For $\sigma =1$, we have $\omega(\sigma) = 2.371552$ and $\frac{3}{2}(1 + \sigma) = 3$.
Thus $\omega(\sigma) < \frac{3}{2} ( 1 + \sigma)$ holds for $\sigma = 0.34$ and for $\sigma =1$.
By convexity of the function $\omega(\sigma)$, it follows that this inequality holds for all $0.34 \le \sigma \le 1$.
However, for $\sigma = 1/3$, $\omega(\sigma) > \omega(0.33)>  2.0001 > 2$, whereas $\frac{3}{2} ( 1 + \sigma) =2$.
Therefore, there exists a constant $\tilde \sigma$, $1/3 < \tilde \sigma < 0.34$ (satisfying $\omega(\tilde \sigma) = \frac{3}{2} ( 1 + \tilde \sigma)$),
such that the inequality $\omega(\sigma) < \frac{3}{2}  (1 + \sigma)$
(and thus $g(\sigma) > \omega(\sigma) + \frac{1-\sigma}{2}$) holds for $\tilde \sigma < \sigma \le 1$.
In fact, one can verify that $1/3 < \tilde \sigma < 0.3336$, as $\omega(0.3336) < 2.00028$, while 
$\frac{3}{2} ( 1+ 0.3336) = 2.0004$.

In the following lemma we prove that for $\tilde \sigma < \sigma < 1$, the intervals $I_{\sigma}$ are non-empty.
\begin{lemma}\label{lem:sparseSigmaZero}
There exists a threshold $\tilde \sigma < 0.3336$ such that for every
$\tilde \sigma < \sigma < 1$, we have 
$\frac{1}{2} + \omega(\sigma) - \frac{3}{2} \sigma < 3\omega - 2\omega(\sigma) -1$,
 and there is equality for $\sigma =1$.
 \end{lemma}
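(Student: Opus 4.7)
The plan is to rewrite the target inequality in a form where the convexity of $\omega(\cdot)$ can be applied cleanly. First I would rearrange
$$\tfrac{1}{2} + \omega(\sigma) - \tfrac{3}{2}\sigma < 3\omega - 2\omega(\sigma) - 1$$
by collecting $\omega(\sigma)$ on one side, yielding the equivalent statement
$$\omega(\sigma) < \omega + \tfrac{\sigma - 1}{2}.$$
The claimed equality at $\sigma = 1$ is then immediate, since $\omega(1) = \omega$ and the correction term vanishes.

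For the strict inequality on $(\tilde\sigma,1)$, I would introduce the auxiliary function
$$h(\sigma) \;=\; \omega(\sigma) - \omega - \tfrac{\sigma - 1}{2}.$$
Because $\omega(\sigma)$ is known to be convex (this is a standard fact used repeatedly in the preceding discussion, e.g.\ in the proof of Lemma~\ref{lem:thresh}) and we subtract an affine function, $h$ is convex. We have $h(1)=0$, and the goal reduces to showing $h(\tilde\sigma) < 0$; convexity then transports this to all of $(\tilde\sigma,1)$.

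To verify $h(\tilde\sigma) < 0$, I plug in the defining identity of $\tilde\sigma$, namely $\omega(\tilde\sigma) = \tfrac{3}{2}(1+\tilde\sigma)$ (established just before the lemma). A short computation gives
$$h(\tilde\sigma) \;=\; \tfrac{3}{2}(1+\tilde\sigma) - \omega - \tfrac{\tilde\sigma - 1}{2} \;=\; 2 + \tilde\sigma - \omega.$$
Since $\tilde\sigma < 0.3336$ (as already argued in the paragraph following Lemma~\ref{lem:thresh}) and the current best bound $\omega \le 2.371552\ldots$ gives $\omega - 2 > 0.3336 > \tilde\sigma$, we conclude $h(\tilde\sigma) = 2 + \tilde\sigma - \omega < 0$.

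Finally, for any $\sigma\in(\tilde\sigma,1)$, writing $\sigma = (1-t)\tilde\sigma + t\cdot 1$ with $t = \frac{\sigma-\tilde\sigma}{1-\tilde\sigma}\in(0,1)$, convexity yields
$$h(\sigma) \;\le\; (1-t)\,h(\tilde\sigma) + t\,h(1) \;=\; \tfrac{1-\sigma}{1-\tilde\sigma}\,h(\tilde\sigma) \;<\; 0,$$
since $h(\tilde\sigma)<0$ and the coefficient is strictly positive. Unwinding, this gives exactly $\tfrac{1}{2}+\omega(\sigma)-\tfrac{3}{2}\sigma < 3\omega - 2\omega(\sigma) - 1$ on $(\tilde\sigma,1)$, with equality at $\sigma=1$. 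There is no real obstacle here; the only slightly delicate point is making sure the numerical inequality $\tilde\sigma < \omega - 2$ is justified from the bounds already available in the paper, which is immediate from $\tilde\sigma < 0.3336$ and $\omega - 2 > 0.371$.
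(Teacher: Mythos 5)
Your proof is correct and follows essentially the same route as the paper's: both rearrange the inequality to $\omega(\sigma) < \omega + \frac{\sigma-1}{2}$ (equivalently, that the slope $\frac{\omega(1)-\omega(\sigma)}{1-\sigma}$ exceeds $\frac{1}{2}$) and settle it by convexity of $\omega(\cdot)$ combined with a numerically verified anchor point. The only cosmetic difference is the anchor: the paper checks the slope at $\alpha = 0.321334$ where $\omega(\alpha)=2$, whereas you anchor at $\tilde\sigma$ itself via its defining identity $\omega(\tilde\sigma)=\frac{3}{2}(1+\tilde\sigma)$ together with the bound $\tilde\sigma < 0.3336 < \omega-2$; both are valid.
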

\begin{proof}
For $\sigma = \tilde \sigma$, we have $\frac{1}{2} +  \omega(\sigma) - \frac{3}{2}\sigma =2$ and $3\omega - 2\omega(\sigma) -1 >2$
(as $\tilde \sigma < 0.34$, and $\omega(\cdot)$ is a monotonically increasing function).
For $\tilde \sigma < \sigma < 1$, we argue that $\frac{1}{2} +  \omega(\sigma) - \frac{3}{2}\sigma < 3\omega - 2 \omega(\sigma) -1$, i.e., 
$\frac{3}{2}  + 3\omega(\sigma) - \frac{3}{2} \sigma < 3 \omega$.
It is sufficient to prove that
\begin{align*}
	\begin{aligned}
		\frac{1}{2}(1-\sigma) < \omega(1) - \omega(\sigma),
	\end{aligned}
\end{align*}
 or equivalently
 \begin{align}\label{eq:slopeOmega}
	\begin{aligned}
		\frac{\omega(1) - \omega(\sigma) } {1-\sigma} > \frac{1}{2}.
	\end{aligned}
\end{align}
We prove that the Inequality~\eqref{eq:slopeOmega} holds in the following claim.
\begin{claim}\label{clm:slope}
For $\tilde \sigma < \sigma <1$, we have
$\frac{\omega(1) - \omega(\sigma) } {1-\sigma} > \frac{1}{2}$.
\end{claim}
\begin{proof}
Let $\sigma_1 = \alpha =  0.321334$ and $\sigma_2 = 1$.
According to Table~\ref{tab:tableLeGall}, $\omega(\sigma_1) = 2$ and $\omega(\sigma_2) = \omega =  2.371552$.
It follows that 
\begin{align}\label{eq:slope}
	\begin{aligned}
			\frac{\omega(\sigma_2) - \omega(\sigma_1) } {1-\sigma_1} = 0.5475 > 1/2.
	\end{aligned}
\end{align}
Note that $\frac{\omega(1) - \omega(\sigma_1)}{1 - \sigma_1}$ is the slope of the line connecting points $(\sigma_1, \omega(\sigma_1))$ and $(1, \omega(1))$.
By convexity of the function $\omega(\cdot)$, this slope is smaller than the respective slope $\frac{\omega(1) - \omega(\sigma)}{1 - \sigma}$, for any $\sigma_1 < \sigma <1$.
Hence the latter slope is greater than $1/2$ too.
See Figure~\ref{fig:plot} for an illustration.
\begin{figure}[ht!]
 \captionsetup{font=scriptsize}
	\centering
	\fbox{
          \includegraphics[scale=0.15]{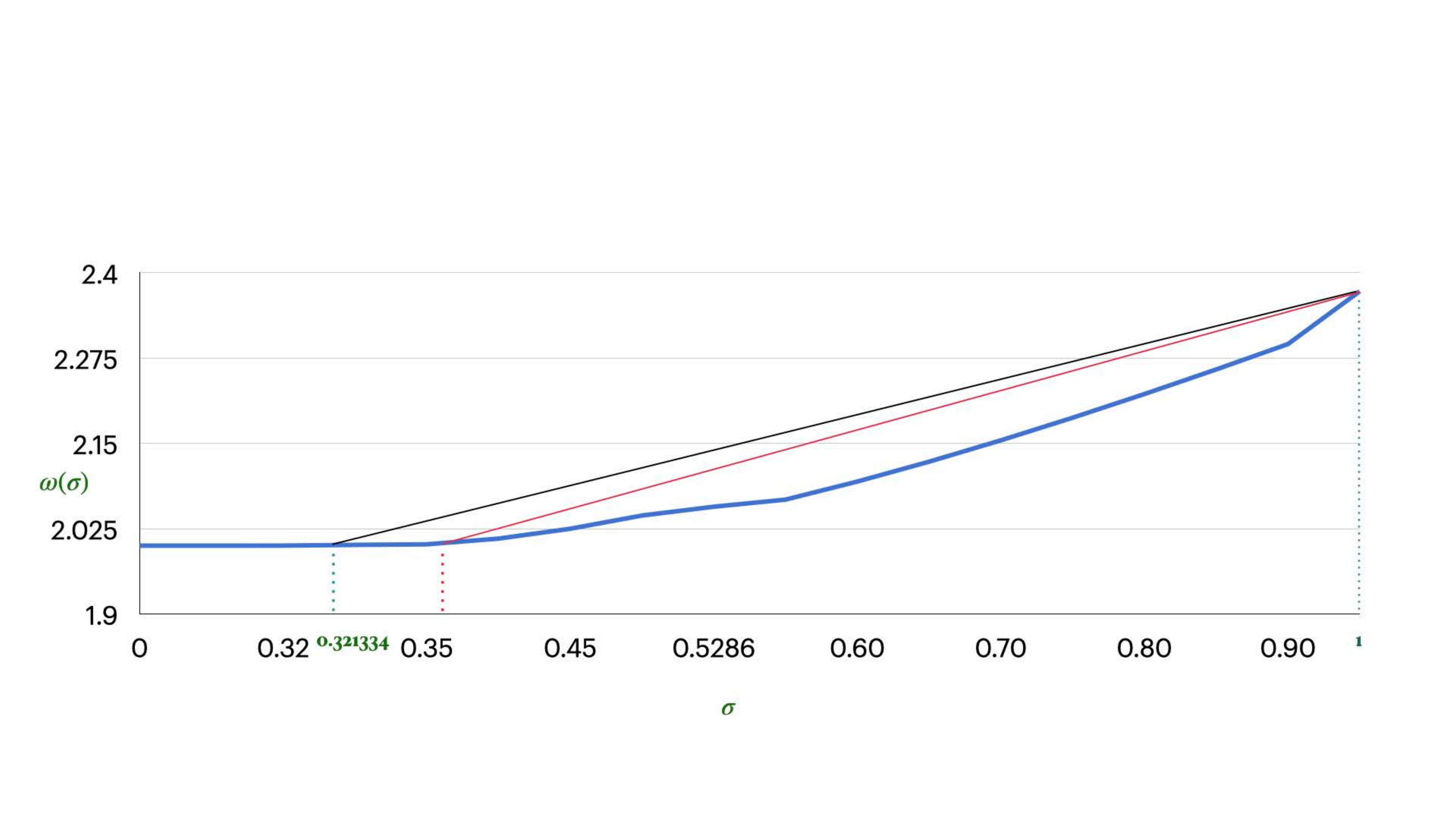}
	}
	\caption{The segment connecting the points $(0.321334, \omega(0.321334))$ and $(1,\omega(1))$ lies above the plot of $\omega(\sigma)$. Its slope is however smaller than the slope of the segment connecting the points $(\sigma, \omega(\sigma))$ and $(1,\omega(1))$ for any $\tilde \sigma < \sigma < 1$. The latter segment is illustrated by a red line.}
	\label{fig:plot}
\end{figure}

\end{proof}
Hence, the assertion of the lemma holds for $\tilde \sigma < \sigma < 1$.
For $\sigma =1$, we have $\frac{1}{2} + \omega(\sigma) - \frac{3}{2}\sigma = 3\omega - 2\omega(\sigma) -1 = \omega -1$.
Hence equality holds for $\sigma =1$.
\end{proof} 
We proved the following theorem.
\begin{theorem}\label{thm:sparseMain}
There is a threshold value $1/3 < \tilde \sigma < 0.3336$, such that for every $\tilde \sigma < \sigma < 1$, 
there exists a non-empty interval $I_{\sigma} \subseteq [1,2]$ that satisfies the following property:\\
For every $\mu \in I_{\sigma}$ and any $n$-vertex $m$-edge digraph with $m = n^{\mu}$ edges,
Algorithm~\ref{alg:nonrecursiveDireach} computes $S \times V$ reachability for $|S| = n^{\sigma}$
faster than the current state-of-the-art algorithms for the problem. 
(The latter are $S \times V$-naiveReach method (see Definition~\ref{def:naiveReach}) and $S \times V$-squareReach method (see Definition~\ref{def:squareReach}).)
See Table~\ref{tab:table1} for sample values of $I_{\sigma}$ and Table~\ref{tab:tableComparisonMu}
for sample exponents of our and state-of-the-art algorithms.
The exponent of our algorithm is given by $\frac{1 + \mu + 2\omega(\sigma)}{3}$.
\end{theorem}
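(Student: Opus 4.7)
The plan is to assemble the theorem from the pieces already proved in Section~\ref{sec:basicSparse}. I would recall that from Equations~\eqref{eq:deltaNoRecursion1Sparse}--\eqref{eq:g0mu}, the running time of Algorithm~\ref{alg:nonrecursiveDireach} on an $n$-vertex $m$-edge digraph with $m = \Theta(n^\mu)$ edges and $|S| = n^\sigma$ is $\tilde O(n^{g^{(\mu)}(\sigma)})$ with $g^{(\mu)}(\sigma) = \frac{1+\mu+2\omega(\sigma)}{3}$. I would also recall that the two state-of-the-art exponents being compared against are $\mathcal{T}_{Naive} = \Theta(n^{\mu+\sigma})$ (BFS-based) and $\mathcal{T}_{Square} = \hat O(n^{\omega})$ (TC-based).

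My first step is to translate the desired strict improvement into two arithmetic inequalities on $\mu$. The condition $g^{(\mu)}(\sigma) < \mu+\sigma$ is equivalent to $\mu > \omega(\sigma) + \tfrac{1}{2} - \tfrac{3}{2}\sigma$ (this is exactly Inequality~\eqref{eq:muLower}), while $g^{(\mu)}(\sigma) < \omega$ is equivalent to $\mu < 3\omega - 2\omega(\sigma) - 1$ (this is Inequality~\eqref{eq:muUpper}). I would then define
\[
    I_\sigma \;=\; \Bigl(\omega(\sigma) + \tfrac{1}{2} - \tfrac{3}{2}\sigma,\; 3\omega - 2\omega(\sigma) - 1\Bigr) \cap [1,2],
\]
and observe that for every $\mu \in I_\sigma$ both inequalities hold simultaneously, so Algorithm~\ref{alg:nonrecursiveDireach} beats both state-of-the-art methods.

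The heart of the theorem, and the only step that is not a direct substitution, is to show that $I_\sigma$ is genuinely non-empty for every $\tilde\sigma < \sigma < 1$. For the upper endpoint to exceed the lower endpoint, I would invoke Lemma~\ref{lem:sparseSigmaZero}, which already establishes $\omega(\sigma) + \tfrac{1}{2} - \tfrac{3}{2}\sigma < 3\omega - 2\omega(\sigma) - 1$ precisely on $(\tilde\sigma,1)$ via the convexity-based slope estimate of Claim~\ref{clm:slope}. To ensure the intersection with $[1,2]$ is non-degenerate, I would verify the two endpoint-clipping conditions: (i) the lower endpoint is below $2$, which is exactly Lemma~\ref{lem:thresh}; and (ii) the upper endpoint exceeds $1$, which follows from $3\omega - 2\omega(\sigma) - 1 > 3\omega - 2\omega - 1 = \omega - 1 > 1$ whenever $\sigma < 1$, since $\omega(\cdot)$ is monotone and $\omega > 2$.

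The main obstacle I anticipate is purely bookkeeping: making sure that the intersection with $[1,2]$ does not shrink the interval to a single point for $\sigma$ approaching either $\tilde\sigma$ or $1$. Near $\sigma = \tilde\sigma$ the lower endpoint approaches $2$ but stays strictly below $2$ by Lemma~\ref{lem:thresh}, and near $\sigma = 1$ both endpoints converge to $\omega - 1$, which is why the statement excludes the endpoint $\sigma = 1$ (consistent with the empty last row of Table~\ref{tab:table1}). Once these boundary behaviors are checked, the conclusion $g^{(\mu)}(\sigma) = \frac{1+\mu+2\omega(\sigma)}{3}$ for the running time is immediate from Equation~\eqref{eq:g0mu}, completing the proof.
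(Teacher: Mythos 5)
Your proposal is correct and follows essentially the same route as the paper: the same two inequalities \eqref{eq:muLower} and \eqref{eq:muUpper} define $I_{\sigma}$, non-emptiness is obtained from Lemma~\ref{lem:sparseSigmaZero} (via the convexity/slope argument of Claim~\ref{clm:slope}), and the clipping to $[1,2]$ is handled by Lemma~\ref{lem:thresh} together with the easy observation that the upper endpoint exceeds $\omega-1>1$. Your explicit check of the boundary behavior near $\sigma=\tilde\sigma$ and $\sigma=1$ is a small but welcome addition of bookkeeping that the paper leaves implicit.
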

\section{Graphs with Sublinear Recursive Separators}\label{Sec:BoundedSep}
In this section, following the classical work of Cohen~\cite{Cohen93},
we consider $S\times V$ reachability and $(1+\epsilon)$-approximate shortest distances in directed graphs that admit \emph{recursive separators} of sublinear size.
\subsection{Definitions}\label{sec:separatorDef}
An $n$-vertex undirected graph $G = (V,E)$ admits a
\emph{balanced separator} of size $n^{\rho}$, for a parameter $0 < \rho < 1$,
if there exists a subset $S \subseteq V$ of $n^{\rho}$ vertices such that $V\setminus S$
splits into two disjoint subsets $V_1$, $V_2$, $V_1 \cup V_2 = V\setminus S$,
with at most $\frac{2}{3} n$ vertices each, so that there are no edges in $E$ that cross between $V_1$ and $V_2$. The constant $2/3$ is called the \emph{ratio} of the separator.

A family $\mathcal F$ of undirected graphs admits a \emph{recursive $k^{\rho}$-separator},
for a parameter $0 < \rho <1$, if for every graph $G = (V,E) \in \mathcal F$ and every subset 
$U \subseteq V$, the induced subgraph $G[U] = (U, E[U])$ admits a balanced separator of size
$O(|U|^{\rho})$.

A \emph{directed} $n$-vertex graph $G = (V,E)$ is said to have a balanced separator to size $n^{\rho}$ 
if its undirected skeleton\footnote{\emph{Undirected skeleton} of a directed graph $G = (V,E)$ is the undirected graph $G'= (V,E')$,
$E' = \{ (u,v) \mid \langle u,v \rangle \in E \text{~or~} \langle v,u \rangle \in E  \}$
}
has a separator of this size.
A family of \emph{directed} graphs $\mathcal F$ has a recursive $k^{\rho}$-separator if for every
graph $G = (V,E) \in \mathcal F$ and every subset $U \subseteq V$, the undirected skeleton of induced subgraph $G[U]$ has a balanced separator of size $O(|U|^{\rho})$.

Given an $n$-vertex (directed or undirected) graph $G = (V,E)$ that admits a recursive $k^{\rho}$-separator, the separator decomposition tree $T_G$ of $G$ is the following binary tree:
Each node $t$ of $T_G$ corresponds to a vertex set $\mathcal V(t) \subseteq V$, and it is labelled by three vertex sets.
First, by the vertex set $\mathcal V(t)$ itself, second by a \emph{separator set} $\mathcal S(t)$ of $\mathcal V(t)$ and finally, the 
\emph{boundary set} $\mathcal B(t) \subseteq \mathcal V(t)$. (We will soon elaborate on the meaning of these subsets.) We also denote by $G(t) = (\mathcal V(t), E[\mathcal V(t)])$, the subgraph of $G$ induced by $\mathcal V(t)$.

Let $r$ be the root of a separator decomposition tree $T_G$ of $G$.
The vertex set $\mathcal V(r)$ of the root node $r$ is set as $\mathcal V(r) = V$, i.e., it contains all the vertices of the underlying graph $G$.
Its separator set $\mathcal S = \mathcal S(r)$ is a balanced separator of size at most $O(n^{\rho})$ of $G$, and its boundary set $\mathcal B(r)$ is an empty set.

Given a node $t$ of $T_G$ with its separator and boundary sets $\mathcal S(t)$ and $\mathcal B(t)$, respectively, 
its two children are defined as follows:\footnote{A node $t$ has children in $T_G$ if its vertex set $\mathcal V(t)$ is sufficiently large,
i.e.,  $|\mathcal V(t)|$ is at least some sufficiently large constant. Otherwise, $t$ is a leaf of $T_G$.
Specifically, the threshold size $\tau$
of leaves should be sufficiently large to ensure that for all values $\tau' \ge \tau$,
the separator size $O(\tau'^{\rho})$ is smaller than $0.01 \tau'$. (The leading constant $0.01$
is chosen arbitrarily.)
}
Let $U = \mathcal V(t)$.
Let $U_1 \cup U_2 = U \setminus \mathcal S(t)$, $U_1 \cap U_2 = \phi$, be the two subsets of $U \setminus \mathcal S(t)$
of size at most $\frac{2}{3}|U|$ each, so that $E \cap (U_1 \times U_2) = \phi$.
(The existence of $U_1$, $U_2$ is guaranteed by the definition of a recursive balanced separator.) Then the two children of $t$ in $T_G$ are nodes $t_1$ and $t_2$ with vertex sets $\mathcal V(t_i) = U_i \cup (\mathcal S(t) \cap \hat {\Gamma}(U_i))$, for $i \in \{1,2 \}$, 
 where $\hat{\Gamma}(U_i)$ is the set of vertices of $U_i$ and their neighbours.
The separator sets $\mathcal S(t_i)$ (unless $t_i$ is a leaf) are defined in the natural way,
 as balanced separators of the respective vertex sets $\mathcal V(t_i)$.
 Finally, their boundary sets $\mathcal B(t_i)$, $i \in \{ 1, 2 \}$,
 (again, assuming that $t_i$ is not a leaf) are given by 
 \begin{equation}\label{eq:boundarySet}
       \mathcal B(t_i) =  (\mathcal S(t) \cup (\mathcal B(t)) \cap \mathcal V(t_i),
 \end{equation}
i.e., it contains all the vertices from the separator of its parent $t$ and those boundary vertices of
$t$ that ended up in the vertex set $\mathcal V(t_i)$.
The \emph{level} $\ell(t)$ of a node $t \in T_G$ is defined in a natural way
as the depth of the subtree of $T_G$ rooted at $t$.

Note that if a vertex $u \in \mathcal S(t)$ is not adjacent to any vertex of $U_1 \cup U_2$, then it can be added to the smaller set among $U_1$, $U_2$ (or one of them, if they are of equal size). Hence we can assume w.l.o.g. that every separator vertex $ u \in \mathcal S(t)$ has at least one neighbor either in $U_1$ or in $U_2$.
Moreover, if $u$ has neighbours only in $U_1$ but not in $U_2$, then it can be added to $U_1$. The resulting partition $U'_1 = U_1 \cup \{ u\}$, $U'_2 = U_2 $, $\mathcal S'(t) = \mathcal S(t) \setminus \{u\}$ still satisfies that there are no edges that cross from $U'_1$ to $U'_2$.
Let $U''_1$, $U''_2$ be the sets obtained by the process of iteratively removing from the separator all
vertices $u$ which have edges to just one side of the partition, and adding them to that side.
Note that $|U''_1|$, $|U''_2| \le \frac{2}{3}n + O(n^{\rho}) \le (\frac{2}{3} + 0.01) n$.
Hence this process results in a separator $\mathcal S''(t)$ that satisfies 
the property that each of its vertices is adjacent to both parts $U''_1$, $U''_2$ of the partition,
and the partition is still sufficiently well-balanced. (It is immaterial whether the leading constant in the
definition of separator is $2/3$ or $2/3 + 0.01$.) We will therefore assume w.l.o.g.
that our separators have this property. As a result, since $\mathcal S(t) \subseteq \hat{\Gamma} (U_i)$, for $i \in \{1,2 \}$,
the definitions of $\mathcal V(t_i)$, $\mathcal B(t_i)$ can be simplified to $\mathcal V(t_i) = U_i \cup \mathcal S(t)$ and $\mathcal B(t_i) = \mathcal S(t) \cup (\mathcal B(t) \cap \mathcal V(t_i))$.

\subsection{A Separator-Based Reachability Algorithm}\label{sec:SeparatorDAvail}
In this subsection, we devise another $S \times V$-reachability algorithm for graphs that admit $k^{\rho}$-recursive separators.
Like the algorithm of~\cite{Cohen93}, this algorithm assumes that a separator decomposition of the 
input graph $G$ is provided to the algorithm as input. Unlike our algorithm in Section~\ref{Sec:BTwoStepTheAlgo}, and similarly to~\cite{Cohen93},
this algorithm can be efficiently implemented in the \textsf{PRAM} setting.
Moreover, we will also extend it to $S \times V$ {\em $(1 + \epsilon)$-approximate distance computation}.

The algorithm itself combines Cohen's approach with fast rectangular matrix multiplication in augmented graphs. We start with a short overview of Cohen's algorithm, focusing for now on the reachability problem. The algorithm of Cohen~\cite{Cohen93} computes an $O(\log n)$-shortcut 
$G' = (V, H)$ of the input graph 
$G = (V,E)$. The shortcut and the augmented graph contain $O(n^{2\rho} + n\log n) = O(n^{2\rho})$ edges
(recall that we consider the setting where $\rho > 1/2$),
that is, $|E \cup H| = O(n^{2\rho})$.
Then, it computes $S \times V$ reachabilities in $\tilde G = G \cup G' = (V, E \cup H)$
via $O(\log n)$-bounded Bellman-Ford.
The latter provides (unrestricted) $S \times V$ reachabilities in the original graph $G$, as they are identical to the respective $O(\log n)$-bounded reachabilities in $\tilde G$.

The shortcut is computed by following reachabilities:
for every node $t \in T_G$, 
we need reachabilities in $G(t)$ between all pairs $( \mathcal B(t) \times \mathcal B(t) ) \cup ( \mathcal S(t) \times \mathcal S(t))$.
For each pair $(u,v) \in (\mathcal B(t) \times \mathcal B(t) ) \cup ( \mathcal S(t) \times \mathcal S(t))$, we insert the arc $\langle u,v \rangle$
into the shortcut $H$ iff $v$ is reachable from $u$ in $G(t)$.
(When building a hopset, the construction is the same
except the edge $e = \langle u,v \rangle$ is assigned weight $w(e) = d_{G(t)}(u,v)$.)
This computation requires polylogarithmic time and work 
$O({\sum}_{t \in T_G} (|\mathcal S(t)| + |\mathcal B(t)|)^{\omega})$ in the \textsf{PRAM} model.
(We also have $\sum_{t \in T_G} |\mathcal S(t)|^{\omega}, \sum_{t \in T_G} |\mathcal B(t)|^{\omega} \le \tilde O(n^{\omega \rho})$.
See Section $4.1$ and the proof of Theorem $5.1$ of~\cite{Cohen93}.
See also Lemma~\ref{lem:SeparatorBoundarySizeBounds} below for a proof sketch of these statements.)
It therefore follows that 
\begin{align}\label{eq:timeFromBasics}
    \begin{aligned}
\underset{t \in T_G}{\sum} (|\mathcal S(t)| + |\mathcal B(t)|)^{\omega} &\le \underset{t \in T_G, |\mathcal S(t)| > | \mathcal B(t)|}{\sum}
(2 |\mathcal S(t)|)^{\omega} + \underset{t \in T_G,  |\mathcal B(t)| \ge |\mathcal S(t)|}{\sum} (2 |\mathcal B(t)|)^{\omega}\\
       &\le  2^{\omega} \cdot \left(\underset{t \in T_G}{\sum} (| \mathcal S(t)|)^{\omega} + 
       \underset{t \in T_G} {\sum}(|\mathcal B(t)|)^{\omega} \right) = 2^{\omega} \cdot \tilde O(n^{\omega \rho }) = \tilde O(n^{\omega \rho}).
   \end{aligned}
\end{align}
Moreover, ~\cite{Cohen93} showed that if one sets weights $w(u,v)$ for every edge $\langle u,v \rangle$ in the shortcut by 
$w(u,v) = d_{G(t)}(u,v)$, then this shortcut is, in fact, an exact $O(\log n)$-hopset as well.

Once the shortcut is in place, running in parallel an $O(\log n)$-bounded Bellman-Ford from every 
source $s \in S$ requires additional polylogarithmic time and work $O(|S| \cdot (|E| +  |H|)) = O(n^{\sigma + 2\rho})$. (It is shown in~\cite{Cohen93} that $|H| = O(n^{2\rho})$; see the third bullet of Theorem 5.1 of~\cite{Cohen93}.)
Thus, the overall time of the algorithm is polylogarithmic in $n$, while its work complexity is
$\tilde O(n^{\omega \rho} + n^{\sigma + 2\rho})$.
As was mentioned above, in the centralized setting the complexity $O(n^{\sigma + 2\rho})$ is trivial,
and thus Cohen's algorithm does not provide non-trivial bounds for $S \times V$-direachability (or shortest paths) in the centralized setting.
Our algorithm builds the shortcut by Cohen's procedure, but then it uses the shortcut in the same way as we used Kogan-Parter's shortcut in Sections~\ref{sec:basicDense} and~\ref{sec:basicSparse}.
Specifically, we build the adjacency matrix $A$ of the augmented graph $\tilde G = G \cup G' = (V, E \cup H)$, and the rectangular $n^{\sigma } \times n$ matrix $B$ formed of the $n^{\sigma}$ rows of $A$ that
correspond to the sources from $S$.
We then iteratively multiply $B$ by $A$ for $O(\log n)$ times (like in Algorithm~\ref{alg:nonrecursiveDireach}), and obtain the $O(\log n)$-bounded $S \times V$ reachabilities in 
$\tilde G$.
As those are the same as unrestricted $S \times V$ reachabilities in $G$, it follows that the algorithm solves the problem of computing $S \times V$ reachabilities in $G$.

The running time of the second stage of the algorithm is $O(\log n) \cdot n^{\omega(\sigma)} = \tilde O(n^{\omega(\sigma)})$.  (As each rectangular matrix product requires $n^{\omega(\sigma)}$
time, and the algorithm computes $O(\log n)$ such products.
We note that for $\sigma \le \alpha$, even though $\omega(\sigma) = 2$, the running time is $n^{2 + o(1)}$. However, for larger $\sigma$,
the time is $n^{\omega(\sigma)}$ rather than $n^{\omega(\sigma) + o(1)}$.)
Hence, the overall centralized running time of the algorithm is $n^{\omega \rho} + n^{\omega(\sigma)}$. 
Moreover, each such matrix product can also be computed in polylogarithmic time using
$n^{\omega(\sigma) }$ work in the \textsf{PRAM} model (see, e.g.,~\cite{ElkinN22}).
Hence our algorithm requires polylogarithmic time and work $n^{\omega \rho} + n^{\omega(\sigma)}$.

Our exponent $F(\sigma, \rho) = \max \{\omega \rho, \omega(\sigma) \}$
improves upon Cohen's exponent $C(\sigma, \rho) = \max \{\omega \rho, \sigma + 2\rho \}$,
whenever the following two conditions hold:
\begin{enumerate}
\item \label{itm:1}  $\sigma + 2\rho > \omega \rho, \text{~i.e.,~} \rho < \frac{\sigma}{\omega -2}$, and
\item \label{itm:2}  $\omega(\sigma) < \sigma + 2\rho$.
\end{enumerate}
Condition~\ref{itm:1} is a restriction only as long as $\sigma < \omega -2 \approx 0.371552$.
For $\sigma = \alpha$ (which is only slightly larger than the minimum value $\tau$ for which our algorithm starts providing improvements for some values of $\rho$), we have $\rho < \frac{\sigma}{\omega -2} < 0.867 $
and the condition~\ref{itm:2} implies $\rho > 1 - \frac{\alpha}{2} = 0.839$.
The threshold value $\tau$ is obtained by solving $1 - \frac{\sigma}{2} = \frac{\sigma}{\omega - 2}$. It is given by
\begin{equation}\label{eq:threshImprovementSep}
\tau = \frac{2(\omega - 2)}{\omega} \approx 0.31334.
\end{equation}
See Table~\ref{tab:rhoInterval2} for these intervals $J_{\sigma} = (\frac{\omega(\sigma) - \sigma}{2}, \frac{\sigma }{\omega - 2})$ of values of $\rho$ for which our \textsf{PRAM} algorithm improves upon the  previous state-of-the-art bound~\cite{Cohen93}. 
See also Table~\ref{tab:comparisonRho2} for comparison between our exponent $F(\sigma, \rho)$ and the exponent of Cohen's algorithm~\cite{Cohen93} $C(\sigma, \rho)$ within the ranges $J_{\sigma}$, 
for various values of $\sigma$.
\begin{theorem}\label{thm: separator1}
Given a directed $n$-vertex graph $G = (V,E)$ that admits a $k^{\rho}$-recursive separator along with its
separator decomposition, our algorithm computes $S \times V$-direachability in it 
(for a set $S$ of $n^{\sigma}$ sources)  in polylogarithmic \textsf{PRAM} time using $O(n^{\omega \rho } + n^{\omega(\sigma)}) $ work.

For every $\sigma > \tau = \frac{2(\omega - 2)}{\omega}$,
there is a non-empty interval $J_{\sigma} = \left(\frac{\omega(\sigma) - \sigma}{2}, \frac{\sigma }{\omega - 2}\right)$ such that for every $\rho \in J_{\sigma}$, our algorithm outperforms the algorithm of~\cite{Cohen93}.
\end{theorem}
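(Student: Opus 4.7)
The plan is to assemble the theorem from two halves: a correctness-plus-complexity analysis of the algorithm itself, and an algebraic comparison with the exponent $C(\sigma,\rho)$ of Cohen's algorithm. The structure of the algorithm was already sketched in the preceding discussion, so my job is mainly to verify each ingredient and check the algebra that delimits the interval $J_\sigma$.

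First I would describe the algorithm formally in two phases. In the shortcut phase, I invoke Cohen's construction on the given separator decomposition tree $T_G$: for each node $t \in T_G$ I compute reachabilities in $G(t)$ between all pairs in $(\mathcal{S}(t) \cup \mathcal{B}(t))^2$ and add the corresponding arcs to $H$. This yields an $O(\log n)$-shortcut $G' = (V,H)$ with $|H| = O(n^{2\rho})$, as established in~\cite{Cohen93}. The PRAM depth of this phase is polylogarithmic, and its total work is bounded by $\sum_{t \in T_G} (|\mathcal{S}(t)| + |\mathcal{B}(t)|)^{\omega + o(1)} = \tilde O(n^{\omega \rho})$, by exactly the calculation displayed in~\eqref{eq:timeFromBasics}. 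In the matrix-multiplication phase, I form the adjacency matrix $A$ of $\tilde{G} = G \cup G'$, extract the $n^\sigma \times n$ submatrix $B^{(1)} = A_{S*}$ of rows indexed by $S$, and iterate $B^{(k+1)} = B^{(k)} \cdot A'$ for $O(\log n)$ rounds (with $A' = A + I$), exactly as in Algorithm~\ref{alg:nonrecursiveDireach}.

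Correctness follows by the same induction as in Section~\ref{Sec:BTwoStepTheAlgo1}: after $k$ rounds, $B^{(k+1)}[s,v] = 1$ iff $v$ is reachable from $s$ in $\tilde{G}$ via a path of at most $k+1$ hops. Choosing the number of rounds to match the hopbound of the shortcut, the output records the set of vertices reachable from each source $s \in S$ in $\tilde{G}$ within $O(\log n)$ hops; since $G'$ is an $O(\log n)$-shortcut, this coincides with ordinary reachability in $G$. Each rectangular BMM of an $n^\sigma \times n$ matrix by an $n \times n$ matrix can be executed in polylogarithmic PRAM depth with $n^{\omega(\sigma)+o(1)}$ work (as cited from~\cite{ElkinN22}), so this phase contributes polylog depth and $n^{\omega(\sigma)+o(1)}$ work in total. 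Adding the two phases yields the claimed bound of $(n^{\omega\rho} + n^{\omega(\sigma)}) \cdot n^{o(1)}$ work at polylogarithmic depth.

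For the second statement I need to characterize when $F(\sigma,\rho) = \max\{\omega \rho, \omega(\sigma)\}$ is strictly less than $C(\sigma,\rho) = \max\{\omega \rho, \sigma + 2\rho\}$. Since both expressions share the $\omega\rho$ term, $F < C$ is equivalent to the two conditions $\sigma + 2\rho > \omega\rho$ (so Cohen's bound is dominated by $\sigma + 2\rho$) and $\sigma + 2\rho > \omega(\sigma)$ (so our bound is strictly below Cohen's). The first rearranges to $\rho < \sigma/(\omega-2)$ and the second to $\rho > (\omega(\sigma) - \sigma)/2$, giving the interval $J_\sigma$.

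The main task left is to show that $J_\sigma$ is non-empty precisely when $\sigma > \tau = 2(\omega-2)/\omega$, and this is the only nontrivial piece of algebra. I would first check equality at the boundary: for $\sigma = \tau$, since $\tau < \alpha$ we have $\omega(\tau) = 2$, so the left endpoint equals $(2 - \tau)/2 = 1 - (\omega-2)/\omega = 2/\omega$, while the right endpoint equals $\tau/(\omega-2) = 2/\omega$, confirming the two coincide. For $\tau < \sigma \le \alpha$ the left endpoint decreases (since $\omega(\sigma) = 2$ is constant while $\sigma$ grows) while the right endpoint increases linearly in $\sigma$, so the interval opens up. For $\sigma > \alpha$, convexity of $\omega(\sigma)$ together with the numerical slope estimates used in Claim~\ref{clm:slope} can be invoked to bound $\omega(\sigma) - \sigma$ and confirm that the left endpoint still grows slower than the right; I expect this to be the most delicate sub-step, and I would verify it either by referencing the tabulated values of $\omega(\sigma)$ together with the convexity argument already used in Section~\ref{sec:basicSparse}, or directly from Table~\ref{tab:rhoInterval2}. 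This establishes the non-emptiness of $J_\sigma$ throughout $(\tau, 1)$ and completes the proof.
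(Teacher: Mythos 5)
Your proposal matches the paper's own argument essentially step for step: the same two-phase algorithm (Cohen's shortcut construction with work bounded via~\eqref{eq:timeFromBasics}, followed by $O(\log n)$ iterated rectangular Boolean products at $n^{\omega(\sigma)+o(1)}$ work each), and the same reduction of $F(\sigma,\rho)<C(\sigma,\rho)$ to the two conditions $\rho<\sigma/(\omega-2)$ and $\rho>(\omega(\sigma)-\sigma)/2$, with the threshold $\tau=2(\omega-2)/\omega$ obtained from the regime $\omega(\sigma)=2$. Your explicit boundary check at $\sigma=\tau$ and the convexity argument for non-emptiness when $\sigma>\alpha$ (comparing $\omega(\sigma)$ to the line $\sigma\cdot\frac{\omega}{\omega-2}$ at the endpoints $\sigma=\alpha$ and $\sigma=1$) are correct and in fact slightly more complete than what the paper writes out.
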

\begin{remark}
We note that $n$-vertex graphs that admit an $O(k^{\rho})$-recursive separator for $\rho < 1$ have $O(n)$ edges.
(See~\cite{LiptonRoseTarjan79}, Corollary 12.)
Thus, $S \times V$ reachabilities and distances,
for $|S| = n^{\sigma}$, can be computed in them in \emph{centralized} $\tilde O(n^{1 + \sigma})$ time.
\end{remark}

\begin{table}[h!]
  \captionsetup{font=scriptsize}
  \begin{center}
  {\small 
    \begin{tabular}{l|c |c} 
      \textbf{$\sigma$} & \textbf{Interval $J_{\sigma} = (\frac{\omega(\sigma) - \sigma}{2}, \frac{\sigma}{\omega -2})$ } & \textbf{Interval $\hat J_{\sigma} = (\omega(\sigma) - \sigma - 1, \frac{2\sigma}{\omega -2} -1)$ } \\
      \hline
      $\alpha = 0.321324$ & $ (0.8386, 0.8674)$ & $$(0.6672, 0.7348)$$\\
       0.37 & (0.8173, 0.9958) & $$(0.6346, 0.9916)$$\\
      0.4 & $\rho > 0.8047$ & $q > 0.6094$\\
      0.5 & $\rho > 0.7714$ & $q > 0.5428$ \\
      0.6 & $ \rho > 0.7463$ & $q > 0.4926$\\
       0.7 & $\rho > 0.7265 $ & $q > 0.4530$\\
       0.8 & $ \rho > 0.7145 $ & $q > 0.4290$\\
       0.9 & $\rho > 0.6971 $ & $q > 0.3942$ \\
       $\sigma \rightarrow 1$  & $ \rho > \frac{\omega - 1}{2} \approx 0.6858$ & $q > \omega - 2$          
  \end{tabular}
  }
  \end{center}
    \caption{Intervals $J_{\sigma}$ such that for every $\rho \in J_{\sigma}$
             our algorithm improves upon Cohen's algorithm for \textsf{PRAM} $S \times V$-reachability 
             in graphs that admit $k^{\rho}$-recursive separators.
             The last line indicates that as $\sigma$ tends to $1$, the left endpoint of the interval
             tends to $\frac{\omega -1}{2}  \approx 0.685$.
              $J_{\sigma}$ is not empty for every $\sigma > \frac{2(\omega- 2)}{\omega}$.
              The table also provides analogous intervals $\hat J_{\sigma} = (\omega(\sigma) - \sigma - 1, \frac{2\sigma}{\omega -2} -1)$ of values of the exponent $q$ of $k = n^q$, (respectively, $\xi$ of $g = n^{\xi})$ for $k$-NN
              graphs and intersection (and overlap) graphs of $k$-ply neighbourhood systems for dimension 
              $d = 2$, (respectively, of graphs with genus $g$) in which our exponents outperform the state-of-the-art bound of~\cite{Cohen93}. See Sections~\ref{sec:reachShortBoundedGenus} and ~\ref{sec:reachDistGeometric} for further details.
             }
    \label{tab:rhoInterval2}
\end{table}

\begin{table}[h!]
  \captionsetup{font=scriptsize}
  \begin{center}
  {\small 
    \begin{tabular}{c|c|c|c|c} 
     $\sigma$ & $\rho$ &$q = \xi =  2\rho - 1$&$F(\sigma, \rho) = F(\sigma, q) = F(\sigma, \xi)$ &$\min\{\omega, C(\sigma, \rho) \} = \min\{\omega, C(\sigma, q) \}$\\
      \hline
                 $\alpha = 0.321314$ & 0.85 &0.7 & 2.0145 &$ \alpha + 1.7 = 2.021314$\\ \hline
                 0.4 & 0.85 & 0.7 & 2.0145 & 2.1\\
                     &0.9 &0.8 & 2.1344 & 2.2\\
                     &0.95 & 0.9 &  2.2530&2.3\\ \hline  
                0.5 & 0.8 &0.6 & 2.0429 & 2.1\\
            \rowcolor{Gray}
                    & 0.85 &0.7 & 2.0429 & 2.2\\
            \rowcolor{Gray}
                    &0.9 &0.8 & 2.1344 & 2.3\\
                    &0.95 &0.9 & 2.2530 & $ \omega = 2.371552$\\ \hline
                0.6 & 0.75 &0.5 & 2.0926 & 2.1 \\
                    &0.8 & 0.6& 2.0926 & 2.2\\
            \rowcolor{Gray}
                    &0.85 & 0.7 &2.0926 &2.3\\
            \rowcolor{Gray}
                    &0.9 &0.8 &2.1344 & $\omega$\\
                    & 0.95 &0.9 &2.2530& $ \omega$\\ \hline
                0.7 &0.75 &0.5 &2.153 & 2.2\\
                    &0.8 &0.6   & 2.153 & 2.3\\
            \rowcolor{Gray}
                    &0.85 &0.7 & 2.153 & $ \omega$\\
            \rowcolor{Gray}
                    &0.9 &0.8  & 2.153 & $ \omega$ \\
                    &0.95 &0.9 & 2.2530 & $ \omega$\\ \hline
                0.8 & 0.75 &0.5  & 2.2209 & 2.3\\
            \rowcolor{Gray}
                    & 0.8 &0.6  & 2.2209 & $ \omega$\\
            \rowcolor{Gray}
                    & 0.85 &0.7 & 2.2209 & $ \omega$\\
            \rowcolor{Gray}
                     & 0.9  &0.8 & 2.2209 & $ \omega$\\
                     & 0.95 &0.9 & 2.2530 & $ \omega$\\ \hline
                0.9 & 0.7 &0.4 & 2.2942 & 2.3 \\
                    & 0.75 &0.5 & 2.2942 & $ \omega$ \\
                    &0.8 &0.6 & 2.2942 & $ \omega$\\
                    & 0.85 &0.7 & 2.2942 & $ \omega$\\
                    & 0.9 &0.8 & 2.2942 & $ \omega$\\
                    & 0.95 &0.9 & 2.2942 & $ \omega$              
        \end{tabular}
  }
  \end{center}
    \caption{Sample exponents of the work complexities of our algorithm and of the previous
              state-of-the-art~\cite{Cohen93} on graphs that admit $k^{\rho}$-recursive separators,
              for pairs $(\sigma, \rho)$ with $\rho \in J_{\sigma}$.
              These exponents are the same as the ones achieved for $2$-dimensional $k$-NN 
              graphs and intersection and overlap graphs of $k$-ply neighbourhood systems, 
              with $k = n^q$ and for graphs with genus $g = n^{\xi}$. See Sections~\ref{sec:reachShortBoundedGenus} and~\ref{sec:reachDistGeometric}.              
              Highlighted rows indicate particularly substantial improvements.
          }
    \label{tab:comparisonRho2}
\end{table}

\subsection{Graphs with a Flat Bound on the Size of the Recursive Separator}\label{sec:flatBound}
In this section we analyze the running time of our algorithm for building 
shortcuts on graphs $G = (V,E)$, $|V| = n$, for which every subgraph admits 
a bounded separator of size at most $c_{sep} \cdot n^{\rho}$, $0 < \rho < 1$, and
$c_{sep} > 0$ is a universal constant.

In Appendix~\ref{sec:treeDecompAppedix} we show that at the cost of a slight increase in the size of
recursive separator, one can build a tree decomposition $T_G$ that has the property 
that for every non-leaf node $t \in T_G$, its separator {\em has ratio $1/2$}.
Specifically, the size of the recursive separator grows by a logarithmic factor, i.e., it becomes $O(n^{\rho} \cdot \log n)$. This argument extends Lipton-Tarjan's proof (\cite{LiptonRoseTarjan79}, Corollary 3)
that \emph{applies} for planar graphs.
(In fact, we also show there
 that this is the case (without the increase by a factor of $O(\log  n)$) 
 for all graphs that admit $k^{\rho}$-recursive separators, for any $0 < \rho < 1$.)

 We say that a separator $S$ of a graph $G = (V,E)$  that
 partitions the graph into two parts $A$, $B$ is \emph{doubly-incident}
 if every vertex $v \in S$ has at least one neighbor in $A$ and at least 
 one neighbor in $B$. 
 By Corollary~\ref{cor:doublyIncident} (in Appendix~\ref{sec:treeDecompAppedix}), 
 $n$-vertex graphs $G = (V,E)$ with a flat bound of $c_{sep} \cdot n^{\rho}$
 on their recursive separators admit also a doubly-incident separator os size 
 $O(n^{\rho} \cdot \log n)$, that splits the graphs into parts of size at most
 $n/2 + O(n^{\rho} \cdot \log n)$ each.

Denote $C = c'_{sep} \cdot n^{\rho} \cdot \log n$, for a universal constant $c'_{sep}$, the upper bound on the size of each separator set $S(t)$, for every node $t \in T_G$. 
(This constant may be different from the constant $c_{sep}$ 
defined above.)
Let $r_T$ be the root node of $T_G$.
Let $t_1$, $t_2$ be its two children.
Then $|\mathcal V(t_1)|$, $|\mathcal V(t_2)| \le \frac{n}{2} + C$.

\begin{lemma}\label{lem:faltSize}
Let $t \in T_G$ be a node on level $i$ of $T_G$, i.e., its distance from
$r_T$ in $T_G$ is $i$. Then,
$$ |\mathcal V(t)| \le \frac{n}{2^i} + (2 -\frac{1}{2^i} ) \cdot C \le \frac{n}{2^i} + 2C. $$
\end{lemma}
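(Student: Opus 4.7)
The plan is to prove the bound by induction on the level $i$, using the fact that each separator has size at most $C$ and ratio essentially $1/2$ (guaranteed by the doubly-incident separator of size $O(n^{\rho}\log n)$ from Corollary~\ref{cor:doublyIncident} cited just above the lemma).

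First I would set up the base case. At level $i = 0$, the unique node is $r_T$ with $\mathcal V(r_T) = V$, so $|\mathcal V(r_T)| = n$. The claimed bound is $n/2^0 + (2 - 1/2^0) \cdot C = n + C$, which is trivially satisfied. (Alternatively, one could start the induction at $i = 1$, where the bound $|\mathcal V(t_1)|, |\mathcal V(t_2)| \le n/2 + C$ was already observed just before the lemma statement, and this matches $n/2 + (2 - 1/2)C = n/2 + (3/2)C$.)

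For the inductive step, fix $i \ge 0$ and assume the bound holds for every node at level $i$. Let $t'$ be a node at level $i+1$, and let $t$ be its parent (at level $i$). By the construction of $T_G$ recalled in Section~\ref{sec:separatorDef}, $\mathcal V(t')$ is obtained by intersecting $\mathcal V(t)$ with one side of the balanced partition induced by $\mathcal S(t)$ and adjoining (at most all of) $\mathcal S(t)$. Since $|\mathcal S(t)| \le C$ and the partition has ratio $1/2$, we obtain
\begin{equation*}
|\mathcal V(t')| \;\le\; \tfrac{1}{2}\,|\mathcal V(t)| + C.
\end{equation*}
Applying the inductive hypothesis on $|\mathcal V(t)|$ yields
\begin{equation*}
|\mathcal V(t')| \;\le\; \tfrac{1}{2}\!\left(\tfrac{n}{2^i} + \bigl(2 - \tfrac{1}{2^i}\bigr) C\right) + C
\;=\; \tfrac{n}{2^{i+1}} + \bigl(1 - \tfrac{1}{2^{i+1}}\bigr)C + C
\;=\; \tfrac{n}{2^{i+1}} + \bigl(2 - \tfrac{1}{2^{i+1}}\bigr) C,
\end{equation*}
which is exactly the claim at level $i+1$. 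The weaker bound $|\mathcal V(t)| \le n/2^i + 2C$ follows since $2 - 1/2^i < 2$.

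The only subtlety worth flagging is the ratio-$1/2$ assumption. The original definition of a balanced separator in Section~\ref{sec:separatorDef} only guarantees ratio $2/3$, which would propagate to a recurrence $|\mathcal V(t')| \le (2/3)|\mathcal V(t)| + C$ and a weaker bound. The lemma implicitly uses the stronger ratio $1/2$, which is precisely what Appendix~\ref{sec:treeDecompAppedix} (and Corollary~\ref{cor:doublyIncident}) provides at the cost of inflating the separator size by a $\log n$ factor already absorbed into the definition $C = c'_{sep} n^{\rho} \log n$. Once this is invoked, the rest is a routine one-line induction.
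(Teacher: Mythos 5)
Your proof is correct and follows essentially the same route as the paper: an induction on the level $i$ with base case $|\mathcal V(r_T)| \le n$ and inductive step $|\mathcal V(t')| \le \tfrac{1}{2}|\mathcal V(t)| + C$, yielding the identical arithmetic. Your remark about the ratio-$1/2$ separator being supplied by the appendix (rather than the ratio-$2/3$ of the basic definition) correctly identifies the assumption the paper itself sets up just before the lemma.
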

\begin{proof}
The proof is by induction on $i$.
The induction base $(i = 0)$ holds as $|\mathcal V(r_T)| \le n$.

Consider a node $t$ of level $i \ge 1$, and let $t'$ be its parent.
By the inductive hypothesis,
\begin{align*}
    |\mathcal V(t')| \le \frac{n}{2^{i-1}} + ( 2 - \frac{1}{2^{i -1}}) \cdot C.
\end{align*}
  Hence,
\begin{align*}
    |\mathcal V(t)| \le \frac{|\mathcal V(t')|}{2} + C \le \frac{n}{2^i} + (1- \frac{1}{2^i}) \cdot C + C
     = \frac{n}{2^i} + (2 - \frac{1}{2^i}) \cdot C.
\end{align*}
\end{proof}

We continue the decomposition up until level $k \approx \log_2 (\frac{n^{1-\rho}}{\log n})$ on which the nodes
$t$ satisfy 
\begin{align*}
    |\mathcal V(t) | \le n^{\rho} \cdot \log n + 2\cdot C = n^{\rho}\cdot \log n + 2 c'_{sep} \cdot n^{\rho} \cdot \log n = (2c'_{sep} + 1) \cdot n^{\rho} \cdot \log n.
\end{align*}
Denote by $T_k$ the set of leaves of $T_G$, i.e., nodes on level $k$.
It follows that the leaves of $T_G$ contribute to the centralized running time of the algorithm that creates the shortcut at most 
\begin{align*}
       \tilde O \left (\sum_{t \in T_k} |\mathcal V(t)|^{\omega} \right) \le \tilde O( 2^k \cdot ((2 c'_{sep} + 1)n^{\rho} )^{\omega} )
                                                  \le  \tilde O( n^{1 - \rho} \cdot O(n^{\omega \rho}) ) = \tilde O(n^{\rho \omega + 1 - \rho}).
\end{align*}
Other nodes $t \in T_G$ contribute together $O(\sum_{t \in T_G\setminus T_k} |\mathcal S(t)|^{\omega})$ and $O(\sum_{t \in T_G\setminus T_k} |\mathcal B(t)|^{\omega})$. The former is bounded by:
\begin{align*}
    \sum_{t \in T_G\setminus T_k} |\mathcal S(t)|^{\omega} \le O(2^k \cdot C^{\omega}) \le  \tilde O(n^{1 - \rho} \cdot n^{\rho \omega})
                                                   = \tilde O(n^{\rho \omega + 1 - \rho}).
\end{align*}
Also, for every internal (i.e., non-leaf) node $t \in T_G\setminus T_k$, 
denote by $W_t$ the set of ancestors of $t$ (in $T_G$).
Then,
\begin{align*}
|\mathcal B(t)| \le \sum_{t' \in W_t} |\mathcal S(t)| \le k \cdot C \le c'_{sep} \cdot \log^2 n \cdot n^{\rho}.
\end{align*}
Thus, 
\begin{align}\label{eq:FlatBound1}
    \begin{aligned}
          \sum_{t \in T_G\setminus T_k} |\mathcal B(t)|^{\omega}  \le 2^k \cdot (c'_{sep} \cdot \log^2 n \cdot n^{\rho})^{\omega}
           \le \tilde O(n^{\rho\omega + 1 -\rho}).
    \end{aligned}
\end{align}
The following corollary follows:
\begin{cor}\label{cor:flatTime}
Let $1/2 < \rho < 1$.
Consider $n$-vertex graphs
that admit a flat bound of $O(n^{\rho})$ on their recursive separator.
The centralized running time of our algorithm for creating a shortcut for these graphs 
is $\tilde O(n^{1 - \rho + \rho \omega})$.
In parallel setting, assuming that the tree decomposition is provided, the same bound applies to the work complexity of our algorithm,
while the running time is polylogarithmic in $n$.
\end{cor}

Bernstein et al.~\cite{BernsteinGS21} devised a near-linear time randomized centralized  $O(\log^3 n)$-approximation algorithm for the problem  of computing a tree decomposition. Specifically, given an $n$-vertex $m$-edge graph 
$G$ of treewidth $t(G)$, their (centralized) algorithm computes a tree decomposition with width $O(t(G) \cdot \log^3 n)$ in time $m^{1 + o(1)}$. A previous result of Chuzhoy and Saranurak~\cite{ChuzhoyS21} provides the same approximation in deterministic $n^{2 + o(1)}$
time. In the sequel we will use the latter result.

We note that a graph $G$ with treewidth $t(G)$ has a balanced separator of size $s(G) \le t(G) + 1$, and conversely, a graph with balanced separator of size $s(G)$ has treewidth $O(s(G) \log n)$~\cite{BODLAENDERGHK95}.
(In fact, for large values of $t(G)$ and $s(G)$ the ratio between them is at most $O(1)$.)
Hence in the centralized setting a tree decomposition with treewidth larger by at most $\log^{O(1)} n$ than the  optimal one (or than the size of optimal recursive separator) can be computed in deterministic $n^{2 + o(1)}$ time.
Thus, in the centralized setting, our algorithm does not need to accept a tree decomposition as input.
Rather, it can compute \emph{from scratch} a near-optimal tree decomposition (i,e, optimal up to a factor $O(\log^3 n)$) within desired
time bounds.
This is, however not the case, in general, for our parallel results.

\subsection{Reachabilities and Approximate Distances in Graphs with a Flat Bound on the size of the Separator}\label{sec:reachDistFlatBound}
In this section we analyze our algorithm for computing reachabilities/approximate distances in graphs with a flat bound of $O(n^{\rho})$ on the size of the recursive separator for some constant $0 < \rho < 1$.
Equivalently, one can think of graphs with treewidth $O(n^{\rho})$.
The algorithm is based on our algorithm for building shortcuts in these graphs, described in Section~\ref{sec:flatBound}.

In centralized setting, using the approximation algorithm of~\cite{ChuzhoyS21} we can compute a tree decomposition $T_G$ with each bag size $\tilde O(n^{\rho})$ within deterministic $n^{2 + o(1)}$ time.
In parallel setting we assume that this tree decomposition is in place before the algorithm starts working.
Given the tree decomposition, by Corollary~\ref{cor:flatTime}, our algorithm computes the required shortcut within $\tilde O(n^{\omega \rho + 1 - \rho})$ centralized time or in polylogarithmic time and 
$\tilde O(n^{\omega \rho + 1 - \rho})$ processors.
As we will argue in Section~\ref{sec:ApproxDistance} (see Theorem~\ref{thm:approxDistances}),
in the case of hopsets both bounds need to be multiplied by $O_{\epsilon} (\log W) \cdot \log^{O(1)} n$.
Given the shortcut/hopset our centralized algorithm then computes $S \times V$ reachabilities/approximate distances within additional time $n^{\omega(\sigma)}$, $|S| = n^{\sigma}$, i.e., 
the exponent $F'(\sigma, \rho)$ of our algorithm is given by 
\begin{align*}
    F'(\sigma, \rho) = \max \{ \omega \rho + 1 - \rho, \omega(\sigma) \}.
\end{align*}
 (In parallel setting this is the exponent of our number of processors, while the time is polylogarithmic in $n$.) The running time of the naive centralized algorithm on such graphs is
$\min \{ n^{\omega}, \tilde O(n^{\mu + \sigma})\}$,
where $m = n^{\mu}$ is the number of edges of the input graph.
We first  consider
the densest case, when $\mu = 1 + \rho$, i.e., the running time becomes 
$\min \{ n^{\omega} , \tilde O(n^{1 + \rho + \sigma})  \}$,
i.e., its exponent is $N'(\sigma, \rho) = \min \{\omega, 1 + \rho + \sigma \}$.

In the case of parallel algorithm, once shortcut is in place, Cohen's algorithm~\cite{Cohen93}
spends polylogarithmic time and $\tilde O(n^{1 + \rho + \sigma})$ processors to compute the 
reachabilities (even if $\mu < 1 + \rho$, because it operates on top of the shortcut, and the size of the latter is $\tilde O(n^{1 + \rho})$ in this case).
Hence its exponent of the number of processors is
\begin{align*}
    C'(\sigma, \rho) = \min \{\omega, \max\{ \omega \rho + 1 -\rho, 1 + \rho + \sigma\} \}.
\end{align*}
Note that $\omega(\sigma) < \omega$ for every $\sigma < 1$, and also 
$\omega \rho + 1 - \rho < \omega \rho + \omega \cdot ( 1- \rho) = \omega$,
for every $\rho < 1$.
Thus, for $F'(\sigma, \rho)$ to improve $N'(\sigma, \rho)$ (and, as a result, also to improve $C'(\sigma, \rho)$ too),
we need 
\begin{enumerate}[label=(\arabic*)]
    \item \label{condition1} $1 + \rho + \sigma > \omega \rho + 1 - \rho$, and
    \item \label{condition2} $1 + \rho + \sigma > \omega(\sigma)$.
\end{enumerate}
 Condition~\ref{condition1} implies that 
 \begin{align}\label{eq:cond1Implies}
     \rho < \frac{\sigma}{\omega -2}.
 \end{align}
 The condition is non-trivial only for $\sigma < \omega -2$.
 Condition~\ref{condition2}  implies that
 \begin{align}
     \rho > \omega(\sigma) - (1 + \sigma).
 \end{align}
 For $\sigma \le \alpha$, we get
 \begin{align}\label{q:cond2Implies}
     \rho > 1 - \sigma.
 \end{align}
 For the upper bound~\eqref{eq:cond1Implies} to be greater than the lower bound~\eqref{q:cond2Implies} we need
 \begin{align}
     \frac{\sigma}{\omega -2} > 1 - \sigma, \text{~i.e.,~}  \sigma > \frac{\omega - 2}{\omega -1} \approx 0.2701.
 \end{align}
Hence for $\frac{\omega - 2}{\omega -1} < \sigma \le \alpha$, the interval of values of 
$\rho$ in which $F'(\sigma, \rho) <  N'(\sigma, \rho)$ is $(1- \sigma, \frac{\sigma}{\omega -2})$, and it is non-empty for all 
$\sigma$ in this range.
 For $\sigma \ge \omega -2$, the interval of values of $\rho$ in which $F'(\sigma, \rho) < N'(\sigma, \rho) $
 is $(\omega(\sigma) - 1 - \sigma, 1)$.
 As $\omega(\sigma) - 1 - \sigma < 1$ (i.e., $\omega(\sigma) < 2 + \sigma$)
 for all $0 < \sigma < 1$, this interval is always non-empty.

 For $\alpha < \sigma < \omega -2$, the condition $\omega(\sigma) - 1 - \sigma < \frac{\sigma}{\omega -2}$
 is equivalent to $\omega(\sigma) < 1 + \sigma \cdot \frac{\omega - 1}{\omega -2}$.
 This holds for $\sigma > \frac{\omega -2}{\omega -1}$.
 Therefore, for each $\sigma > \frac{\omega - 2}{\omega -1}$, there is a range of $\rho$ such that 
 \begin{align}
     F'(\sigma, \rho) < N'(\sigma, \rho), \text{~i.e.,}
 \end{align}
our algorithm outperforms the naive algorithm (and a parallel variant of our algorithm outperforms Cohen's algorithm~\cite{Cohen93}; both parallel algorithms assume that the tree decomposition is in place).
We denote these ranges by $K'_{\sigma} = (\omega(\sigma) - (1+\sigma), \min\{1, \frac{\sigma}{\omega -2} \})$.

   Another interesting threshold $\kappa_\sigma$ is the value such that for all $\rho \le \kappa_\sigma$, the exponent $F'(\sigma,\rho)$ of our algorithm is equal to $\omega(\sigma)$, i.e., it is the best possible.
   This happens whenever $\omega(\sigma) \ge \omega \rho + 1 - \rho$, i.e., 
   $\rho \le \frac{\omega(\sigma)}{\omega - 1} = \kappa_\sigma$. 
   We denote by $K''_\sigma$ the intersection of $K'_\sigma$ with the interval $\rho \le \kappa_\sigma$, i.e., the range in which our algorithm achieves the best possible bound and improves the state-of-the-art.
Table~\ref{tab:tableRhoInterval}  provides numerical values for the ranges $K'_\sigma$ and $K''_\sigma$.

In Table~\ref{tab:FlatBoundCompare} we provide some sample values of the  exponents $F'(\sigma,\rho)$ and $N'(\sigma,\rho)$.
\begin{table}[h!]
  \captionsetup{font=scriptsize}
  \begin{center}
  {\small 
    \begin{tabular}{c|c|c} 
      \textbf{$\sigma$} & $K'_{\sigma}$ & $K''_\sigma$\\
      \hline
            0.3 & (0.7,0.8074)   & (0.7,0.7291)\\
            0.4 & (0.6095,1)  & (0.6095,0.7360)\\
      	  0.5 & (0.5430,1) & (0.5430,0.7604)\\
      	  0.6 &  (0.4926,1) & (0.4926,0.7966) \\
      	  0.7 & (0.4530,1) & (0.4530,0.8407) \\
      	  0.8 & (0.4209,1) & (0.4209, 0.8902\\
     	    0.9 & (0.3942,1) & (0.3942,0.9436) \\
          $\sigma \rightarrow 1$ & $(\omega -2,1)$ & $(\omega-2,1)$
  \end{tabular}
  }
  \end{center}
    \caption{Intervals $K'_{\sigma} = (\omega(\sigma) - (1 + \sigma), \min \{1 , \frac{\sigma}{\omega -2} \})$
              for various values of $\sigma$ are provided. 
              For $\rho \in K'_{\sigma}$ our $S \times V$ reachability algorithm for graphs with treewidth $n^{\rho}$ outperforms 
              the (\emph{naive}) state-of-the-art solutions. For $\rho \in K''_\sigma$, our exponent not only improves the state-of-the-art, but it is actually the best possible, as it is equal to $\omega(\sigma)$. 
              When $\sigma$ tends to $1$, $\omega(\sigma) - (1 + \sigma)$ tends to $\omega -2$ and $\kappa_\sigma$ (the right endpoint of $K''_\sigma$) tends to 1.
          }
    \label{tab:tableRhoInterval}
\end{table}

\begin{table}[h!]
  \captionsetup{font=scriptsize}
  \begin{center}
  {\small 
    \begin{tabular}{l|c|c|c|c} 
      $\sigma$ & $K'_{\sigma}$ & $\rho$ & $F'(\sigma, \rho)$ & $N'(\sigma, \rho)$\\
      \hline
           0.3 & (0.7, 0.81) & 0.75 & 2.0287 & 2.05\\ \hline
           0.4 & $\rho > 0.60959$  & 0.7 & $\omega(0.4) =  2.0095$ & 2.1 \\
               && 0.8 & 2.0972 & 2.2\\
               && 0.9 & 2.2344 & 2.3\\ \hline
      	0.5 & $ \rho > 0.5430$ & 0.6 & $\omega(0.5) =  2.0429$ & 2.1 \\
         \rowcolor{Gray}
                      && 0.7 & $\omega(0.5)$ & 2.2\\
          \rowcolor{Gray}
                      && 0.8 & 2.0972 & 2.3\\
                      && 0.9 & 2.2344 & $\omega$ \\ \hline
      	0.6 & $ \rho > 0.4926$ & 0.5 & $\omega(0.6)  =  2.0926$ & 2.1\\
                            && 0.6 & $\omega(0.6) $ & 2.2\\
         \rowcolor{Gray}
                            && 0.7 & $\omega(0.6) $ & 2.3 \\
          \rowcolor{Gray}
                            && 0.8 & 2.0972 & $\omega$\\
                            && 0.9 & 2.2344 & $\omega$\\ \hline
      	0.7 & $\rho > 0.4530$ & 0.5 & $\omega(0.7)= 2.1530$ & 2.2\\
                             && 0.6 & $\omega(0.7)$ & 2.3 \\
         \rowcolor{Gray}
                             && 0.7 & $\omega(0.7)$ & $\omega$\\
          \rowcolor{Gray}
                             && 0.8 & $\omega(0.7)$ & $\omega$\\
                             && 0.9 & 2.2344 & $\omega$\\ \hline
      	0.8 & $ \rho > 0.4209$ & 0.5 & $\omega(0.8) = 2.2209$ & 2.3\\
          \rowcolor{Gray}
                              && 0.6 & $\omega(0.8)$ & $\omega$\\
           \rowcolor{Gray}  
                              && 0.7 & $\omega(0.8)$ & $\omega$\\
            \rowcolor{Gray}
                              && 0.8 & $\omega(0.8)$ & $\omega$\\
                              && 0.9 & 2.2344 & $\omega$\\  \hline       
     	0.9 & $ \rho >0.3942$ & 0.4 & $\omega(0.9) =  2.2942$ & 2.3 \\
                              && 0.5 & $\omega(0.9)$ & $\omega$\\
                              && 0.6 & $\omega(0.9)$  & $\omega$\\  
                              && 0.7 & $\omega(0.9)$  & $\omega$\\    
                              && 0.8 & $\omega(0.9)$  & $\omega$\\   
                              && 0.9 & $\omega(0.9)$  & $\omega$\\    
                              && 0.95 & 2.23030 & $\omega$
  \end{tabular}
  }
  \end{center}
    \caption{Comparison between the exponent of our $S\times V$-reachability (and approximate distance computation algorithm) 
              on graphs with a flat bound of $n^{\rho}$ on their treewidth with the (naive) state-of-the-art solutions.
              Here we consider the densest regime, i.e., when the number of edges of the input graph is $m = \Theta(n^{1 + \rho})$.        
              }
    \label{tab:FlatBoundCompare}
\end{table}
\begin{theorem}\label{thm:treewidthDistance}
    Let $\epsilon > 0$, $W \ge 1$ be parameters.
    Consider a directed $n$-vertex $m$-edge graph $G = (V,E)$ with non-negative edge weights
    with aspect ratio at most $W$, $m = n^{\mu}$, $1 < \mu < 2$,
    and assume that the underlying graph of $G$ has treewidth $O(n^{\rho})$,
    for a parameter $0 < \rho < 1$.
    Let $S \subseteq V$, $|S| = n^{\sigma}$, $0 < \sigma <1$, be a set of sources.
    The centralized deterministic running time of our algorithm for $S \times V$ reachability is 
    $\tilde O(n^{F(\sigma, \rho)})$, with $F(\sigma, \rho) = \max \{\omega \rho + 1 - \rho, \omega(\sigma) \}$, and it works \emph{from scratch}.
     In the parallel setting it has polylogarithmic (deterministic) running time and work as above.
     For $S \times V$ $(1 + \epsilon)$-approximate distance computation the centralized running time and the parallel work complexities need to be multiplied by $O_{\epsilon}(\log W)\cdot \log ^{O(1)} n$.

     In the densest regime (i.e., when $\mu = 1 + \rho$) our exponent is smaller than the exponent of the state-of-the-art naive $S \times V$ reachability algorithm ($N'(\sigma, \rho) = \min \{\omega, 1 + \rho + \sigma \}$)
     in a wide range of parameters $\sigma$ and $\rho$. Specifically, for any $\frac{\omega -2}{\omega -1} < \sigma < 1$,
     there is an non-empty range $K'_{\sigma}$, such that for every $\rho \in K'_{\sigma}$,
     our algorithm outperforms the previous bound (i.e., $F'(\sigma, \rho) < N'(\sigma, \rho)$),
     see Tables~\ref{tab:tableRhoInterval} and~\ref{tab:FlatBoundCompare}.

     In the parallel setting our algorithm requires a tree decomposition with all bags of size $\tilde O(n^{\rho})$
     to be provided to it as input, and so does the previous state-of-the-art algorithm (due to~\cite{Cohen93}).
     As both our exponent and that of~\cite{Cohen93} do not improve when $\mu < 1 + \rho$,
     in the parallel setting this comparison applies to all graphs with treewidth $O(n^{\rho})$, and 
     not only in the densest regime.    
    \end{theorem}
In Appendix~\ref{app:comparison} we also consider
 the more general scenario in which our input $n$-vertex $m$-edge graph $G$ has treewidth $O(n^{\rho})$, 
but $m = n^{\mu}$, for some $0 < \rho < 1$, $1 < \mu \le 1 + \rho$.

\section{Approximate Distance Computation}\label{sec:ApproxDistance}
\subsection{Outline}\label{sec:ApproxDistanceOutline}
In this section, we extend our reachability algorithm from Section~\ref{sec:SeparatorDAvail}
to the problem of computing $S \times V$ $(1+\epsilon)$-approximate 
distances in directed graphs with non-negative edge weights.
The graphs are assumed to admit $k^{\rho}$-recursive separators
and moreover, their separator decomposition tree $T_G$ is given as a part of the input.
The algorithm is similar to Cohen's~\cite{Cohen93} algorithm for \emph{exact}
distance computation. 

We start with outlining Cohen's algorithm for exact $S \times V$ distance computation. Cohen's algorithm starts with computing an {\em exact} $O(\log n)$-hopset $H$ with $O(n^{2\rho} + n) = O(n^{2\rho})$ edges (recall that we consider the regime $\rho \ge 1/2$).
It then employs the hopset for computing $S\times V$ distances via Bellman-Ford algorithm on the augmented  graph $\tilde G = (V, E\cup H)$. 
The computation of Cohen's hopset relies on exact distance computations in all nodes $t$ of the separator
decomposition tree $T_G$. To make the algorithm more efficient, we replace these \emph{exact} computations  by $(1 + \epsilon)$-approximate ones.
We argue below that this results in a $(1 + \epsilon)^{O(\log n)}$-approximate hopset with the same size and hopbound as for Cohen's hopset. Our algorithm is also different from Cohen's algorithm in the second stage. Instead of running Bellman-Ford on graph $\tilde G$ (as Cohen's algorithm does), we iteratively compute approximate distance products between the rectangular sub-matrix of $\tilde G$ with its rows corresponding to the sources in $S$ and the (weighted) square adjacency matrix of $\tilde G$. We start with describing the first stage of the algorithm and then proceed to the second one.
\subsection{Computing the Approximate Hopset}\label{sec:approxHopset}
As was discussed above, our algorithm in this section is closely related
to Cohen's~\cite{Cohen93} algorithm (see Algorithm 4.1 of~\cite{Cohen93}) for building an exact hopset.
Let $\epsilon >0$ be a fixed parameter.
The algorithm proceeds bottom-up, starting from the leaves of $T_G$,
and dealing with a node $t \in T_G$ only after both its children $t_1$ and $t_2$ have been taken care of.
For a leaf $t$, all pairwise approximate distances in $\mathcal V(t)$
can be computed in 
 $O(1)$ time (as $|\mathcal S(t)|, |\mathcal B(t)| \le |\mathcal V(t)| = O(1)$).

In the general case, before the algorithm starts to handle an internal node
$t \in T_G$, it has already computed approximate distance estimates 
$\delta_1(u,v) = \delta_{t_1}(u,v)$ (respectively, $\delta_2(u,v) = \delta_{t_2} (u,v)$) of $d_{G(t_1)} (u,v)$ (respectively, $d_{G(t_2)} (u,v)$), for every vertex pair $(u,v) \in (\mathcal S(t_1) \times \mathcal  S(t_1)) \cup (\mathcal B(t_1) \times \mathcal B(t_1)) $ (respectively, 
$(u,v) \in (\mathcal S(t_2) \times \mathcal S(t_2)) \cup (\mathcal B(t_2) \times \mathcal B(t_2))$).
The objective is now to compute approximate distance estimates 
$\delta_t(u,v)$ for all $(u,v) \in (\mathcal S(t) \times \mathcal S(t)) \cup (\mathcal B(t) \times \mathcal B(t))$.
When this computation is over for all nodes $ t\in T_G$, the hopset 
is created by inserting all edges of $E_t = \{(u,v) \in (\mathcal S(t) \times \mathcal S(t)) \cup (\mathcal B(t) \times \mathcal B(t)) \}$, for every $t \in T_G$ into the hopset $H$, with corresponding weights $w(u,v) = \delta_t(u,v)$.

Next, we describe the computation in a node $t$, given the distance 
estimates for edge sets $E_1 = E_{t_1}, E_2 = E_{t_2}$ of its children $t_1$ and $t_2$, respectively.
This computation starts (\textsf{\textbf{Step 1}}) by defining the graph $G_S = (\mathcal S(t), \mathcal S(t) \times \mathcal S(t))$, with weights $w_S(u,v)$ (for every vertex-pair $(u,v) \in \mathcal S(t) \times \mathcal S(t)$) defined by
\begin{equation}\label{eq:weights}
    w_S(u,v) = \min \{\delta_1(u,v), \delta_2(u,v) \}.
\end{equation}
Observe that the set $\mathcal S(t)$ is contained in in both boundary sets 
$\mathcal B(t_1)$ and $\mathcal B(t_2)$ of children $t_1$ and $t_2$, respectively, of $t$,
and thus, both values $\delta_1(u,v)$ and $\delta_2(u,v)$ are available
at this stage.

The algorithm now computes $(1 + \epsilon)$-approximate all-pairs 
shortest paths (henceforth, $(1 + \epsilon)$-APASP) in $G_S$.
Denote by $\{\delta_S(u,v)~|~ (u,v) \in \mathcal S(t) \times \mathcal S(t) \}$, 
the resulting  estimates that are produced by this step. 
For pairs $(u,v) \in \mathcal S(t) \times \mathcal S(t) $, the algorithm returns these estimates, i.e., sets $\delta_t(u,v) = \delta_S(u,v)$.

To compute (\emph{approximate}) distances for pairs in $\mathcal B(t) \times \mathcal B(t)$, the algorithm proceeds to the next step (\textsf{\textbf{Step 2}}).
The algorithm now constructs the graph $G_{BS} = (\mathcal B(t) \cup \mathcal S(t), E_{BS})$,
$E_{BS} = (\mathcal B(t) \times \mathcal S(t)) \cup (\mathcal S(t) \times \mathcal S(t)) \cup (\mathcal S(t) \times \mathcal  B(t))$, with weights $w_{BS} : E_{BS} \rightarrow R$ defined as follows:
For $e = (u,v) \in \mathcal S(t) \times \mathcal S(t)$, we set 
\begin{equation}\label{eq:weight2}
     w_{BS}(u,v) = \delta_S(u,v),
\end{equation}
which was computed in the first step of the algorithm.

For $e = (u,v) \in \mathcal B(t) \times \mathcal S(t)$, observe that $v \in \mathcal S(t) \subseteq \mathcal  B(t_1) \subseteq \mathcal V(t_1)$, and also $v \in \mathcal S(t) \subseteq \mathcal B(t_2) \subseteq \mathcal V(t_2)$. The endpoint $ u \in \mathcal B(t)$ belongs to either $\mathcal V(t_1)$ or $\mathcal V(t_2)$, but not to both of them (unless $u \in \mathcal S(t)$, but this case was 
already taken care of). 
Let $i \in \{ 1,2\}$ be the index such that $u,v \in \mathcal V(t_i)$.
Observe also that $u \in \mathcal B(t_i)$ 
(as $u \in \mathcal B(t) \cap \mathcal V(t_i)$, see Equation~\eqref{eq:boundarySet}).

Since $u,v \in \mathcal B(t_i)$, we have the estimate $\delta_i(u,v)$ available.
We therefore set
\begin{equation}\label{eq:weight3}
w_{BS}(u,v) = \delta_i(u,v).
\end{equation}
The weights $w_{BS}(v,u)$ for edges $(v,u) \in \mathcal S(t) \times \mathcal B(t)$
are defined in the same way.

Now the algorithm computes $(1 + \epsilon)$-approximate all pairs
distance estimates $\delta_{BS}(u,v)$ in this graph $G_{BS}$.\footnote{As ~\cite{Cohen93} shows, it is sufficient to compute only $3$-hop-bounded
distances to and from every $u \in \mathcal B(t)$.
To simplify presentation, we however prefer to compute here all pairwise 
distances without restriction on the hopbound. In terms of work complexity,
Cohen's approach may save a logarithmic factor here.}
The final distance estimates for vertex-pairs $(u,v) \in \mathcal B(t) \times \mathcal B(t)$
is then produced in the following way:
If $(u,v) \in \mathcal B(t_i) \times \mathcal B(t_i)$, for $i \in \{1,2\}$,
then 
\begin{equation}\label{eq:finalEstimates}
    \delta_t(u,v) = \min \{\delta_i(u,v), \delta_{BS} (u,v)  \}.
\end{equation}
Otherwise, (for $(u,v) \in (\mathcal B(t_1) \times \mathcal B(t_2)) \cup (\mathcal B(t_2) \times \mathcal B(t_1))$, the algorithm returns 
\[
       \delta_t(u,v) = \delta_{BS}(u,v).
\]
This completes the description of the algorithm.

\subsection{Analysis of the Approximate Hopset}\label{sec:ApproxHopset}
$(1 + \epsilon)$-Approximate APSP in graphs $G_S$ and $G_{BS}$ are
computed via Zwick'a algorithm~\cite{APSPDirectedZwick}.
Its running time in $k$-vertex graphs with edge weights at most $W$
is $O(k^{\omega} \cdot \log k \cdot \log \frac{W}{\epsilon} \cdot \frac{1}{\epsilon})$.
The edge weights need to be non-negative, and $W$ is the ratio between the largest and the smallest non-zero edge weight. We henceforth refer to it as the {\em aspect ratio w.r.t. non-zero weights} or simply \emph{aspect ratio}.
In parallel setting it requires polylogarithmic time and 
$O(k^{\omega} \cdot \log k \cdot \log \frac{W}{\epsilon} \cdot \frac{1}{\epsilon})$ work~\cite{ElkinN22}.
In the context of our construction of hopsets, the overall running time is therefore
\begin{equation}\label{eq:appHopsetTime1}
    \sum_{t \in T_G} O(|\mathcal S(t)| + |\mathcal B(t)|)^{\omega}\cdot \log n \cdot \log \frac{W}{\epsilon} \cdot \frac{1}{\epsilon}.    
\end{equation}
where $W = O(M n)$. (Here $M$ is the maximum weight in the original graph $G$, assuming that the minimum edge weight is $1$.
We will soon argue that all weights in graphs $\{ G_S(t), G_{BS}(t)~|~ t \in T_G\}$ are at most $O(Mn)$.)
The sum 
$\sum_{t \in T_G}  (|\mathcal S(t)| + |\mathcal B(t)|)^{\omega}$ is, by Equation~\eqref{eq:timeFromBasics}, bounded by 
\[
       \sum_{t \in T_G} (|\mathcal S(t)| + |\mathcal B(t)|)^{\omega} \le O(1) \cdot \left(\sum_{t \in T_G}|\mathcal S(t)|^{\omega} + \sum_{t \in T_G}|\mathcal B(t)|^{\omega}\right).
\]
Both these sums $\sum_{t \in T_G}|\mathcal S(t)|^{\omega}$, $\sum_{t \in T_G}|\mathcal B(t)|^{\omega}$ are
bounded by at most $O(n^{\omega \rho} \cdot \log n)$ (whenever $\omega \rho > 1$).
(See~\cite{Cohen93}, Section 5.)
Hence the overall running time is 
\begin{equation}\label{eq:overallTime}
   O \big(\frac{1}{\epsilon} \cdot (\log^2 n) (\log n + \log M/\epsilon) \cdot n^{\omega \rho}\big).
\end{equation}

Next, we argue that the algorithm provides a $((1 + \epsilon)^{O(\log n)}, O(\log n))$-hopset of the input graph $G$. For a given node $ t\in T_G$, recall that $\ell(t)$ denotes the depth of the subtree of $T_G$ rooted at the node $t$. Recall also that $G(t)$ is the subgraph of $G$ induced by vertices of the node $t$,
$d_{G(t)}$ stands for the distance function in this graph, while $\delta_t$ is the distance estimate computed by our algorithm.
\begin{lemma}
For every node $t \in T_G$, and every vertex pair $(u,v) \in (\mathcal S(t) \times \mathcal S(t)) \cup (\mathcal B(t) \times \mathcal B(t))$,
the estimate $\delta_{t}(u,v)$ computed by our algorithm satisfies
\[
      (1+ \epsilon)^{3(\ell(t) + 1)} \cdot d_{G(t)}(u,v) \ge \delta_{t}(u,v) \ge d_{G(t)}(u,v).
\]
\end{lemma}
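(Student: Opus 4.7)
The plan is to prove the lemma by induction on the level $\ell(t)$ of the node $t$ in $T_G$. The base case $\ell(t)=0$ is a leaf, where all pairwise distances in $\mathcal V(t)$ are computed exactly in $O(|\mathcal V(t)|^{O(1)})$ time (since $|\mathcal V(t)| = O(1)$), so $\delta_t(u,v) = d_{G(t)}(u,v)$ and both inequalities hold trivially.

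For the inductive step, let $t$ have children $t_1, t_2$ with $\ell(t_i) \le \ell(t)-1$, and assume the lemma holds for $t_1, t_2$. The \emph{lower bound} $\delta_t(u,v) \ge d_{G(t)}(u,v)$ is the easier direction: by the inductive hypothesis, every weight used in constructing $G_S$ and $G_{BS}$ is an overestimate of the corresponding distance in $G(t)$ (since $G(t_i) \subseteq G(t)$ forces $d_{G(t_i)} \ge d_{G(t)}$, and $\delta_i \ge d_{G(t_i)}$). Consequently, the shortest-path metrics of $G_S$ and $G_{BS}$ dominate $d_{G(t)}$ on the relevant vertex pairs, and Zwick's $(1+\epsilon)$-approximation returns estimates that are no smaller than the true distances in $G_S$ and $G_{BS}$. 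Taking the minima in~\eqref{eq:finalEstimates} preserves this property.

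For the \emph{upper bound}, the main work is a path-decomposition argument. Consider first a pair $(u,v) \in \mathcal S(t) \times \mathcal S(t)$, and let $P$ be a shortest $u$-$v$ path in $G(t)$. Because no edge crosses between $\mathcal V(t_1)\setminus\mathcal S(t)$ and $\mathcal V(t_2)\setminus\mathcal S(t)$, one can partition the edges of $P$ into maximal runs lying in a single $G(t_{i_j})$, with the breakpoints $u=x_0,x_1,\ldots,x_k=v$ being vertices of $\mathcal S(t)$. Each subpath gives $d_{G(t_{i_j})}(x_j,x_{j+1}) \le |P_{x_j x_{j+1}}|$, so
\begin{equation*}
w_S(x_j,x_{j+1}) \le \delta_{i_j}(x_j,x_{j+1}) \le (1+\epsilon)^{3(\ell(t_{i_j})+1)} \cdot d_{G(t_{i_j})}(x_j,x_{j+1}) \le (1+\epsilon)^{3\ell(t)} \cdot |P_{x_j x_{j+1}}|.
\end{equation*}
Summing yields a $u$-$v$ path in $G_S$ of weight at most $(1+\epsilon)^{3\ell(t)} \cdot d_{G(t)}(u,v)$; Zwick's $(1+\epsilon)$-approximation then gives $\delta_S(u,v) \le (1+\epsilon)^{3\ell(t)+1} \cdot d_{G(t)}(u,v) \le (1+\epsilon)^{3(\ell(t)+1)} \cdot d_{G(t)}(u,v)$, as desired.

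For $(u,v) \in \mathcal B(t) \times \mathcal B(t)$, if a shortest path $P$ avoids $\mathcal S(t)$, then both endpoints lie in the same $\mathcal V(t_i)$ and $P \subseteq G(t_i)$; the induction gives $\delta_t(u,v) \le \delta_i(u,v) \le (1+\epsilon)^{3\ell(t)} \cdot d_{G(t)}(u,v)$ via~\eqref{eq:finalEstimates}. Otherwise, let $s$ be the first and $s'$ the last vertex of $\mathcal S(t)$ on $P$; the prefix $P_{us}$ lies in some $G(t_{i_1})$ and the suffix $P_{s'v}$ in some $G(t_{i_2})$. Consider the three-hop path $u \to s \to s' \to v$ in $G_{BS}$: the first and third edges have weights at most $(1+\epsilon)^{3\ell(t)} \cdot |P_{us}|$ and $(1+\epsilon)^{3\ell(t)} \cdot |P_{s'v}|$ by induction, while the middle edge has weight $w_{BS}(s,s') = \delta_S(s,s') \le (1+\epsilon)^{3\ell(t)+1} \cdot d_{G(t)}(s,s') \le (1+\epsilon)^{3\ell(t)+1}\cdot |P_{ss'}|$ by the case already handled. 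Summing and factoring gives a $u$-$v$ path in $G_{BS}$ of weight at most $(1+\epsilon)^{3\ell(t)+1} \cdot d_{G(t)}(u,v)$, hence $\delta_{BS}(u,v) \le (1+\epsilon)^{3\ell(t)+2} \cdot d_{G(t)}(u,v) \le (1+\epsilon)^{3(\ell(t)+1)} \cdot d_{G(t)}(u,v)$, completing the induction. The main subtlety is ensuring that the two nested $(1+\epsilon)$-approximations from Zwick's algorithm (one in $G_S$, then one in $G_{BS}$ that consumes the output of the first) together contribute only a factor of $(1+\epsilon)^2$ per level, so that combined with the edge-weight stretch one absorbs a total of three $(1+\epsilon)$ factors per step of the induction, matching the $3(\ell(t)+1)$ exponent in the statement.
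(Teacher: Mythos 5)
Your proof is correct and follows essentially the same route as the paper's: induction on $\ell(t)$, decomposing a shortest path at its $\mathcal S(t)$-vertices for $\mathcal S(t)\times\mathcal S(t)$ pairs, and the three-hop $u\to s\to s'\to v$ argument in $G_{BS}$ (via the first and last separator vertices) for $\mathcal B(t)\times\mathcal B(t)$ pairs. Your bookkeeping of the $(1+\epsilon)$ factors is in fact slightly tighter than the paper's (you get $(1+\epsilon)^{3\ell(t)+2}$ in the worst case versus the paper's $(1+\epsilon)^{3(\ell(t)+1)}$), and both fit the claimed bound.
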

\begin{proof}
 The algorithms only uses upper bounds on the lengths of certain paths, and thus the right-hand inequality follows. We prove the left-hand inequality by induction on $\ell(t)$.
 
  \noindent \textsf{\textbf{Base:}} When $\ell(t) = 0$ (i.e., $t$ is a leaf), the algorithm uses Zwick's algorithm~\cite{APSPDirectedZwick} to compute $(1+ \epsilon)$-approximate 
 APSP in $G(t)$. Hence, 
 \[
     (1 + \epsilon) \cdot  d_{G(t)}(u,v) \ge \delta_t(u,v) \ge d_{G(t)}(u,v),
 \]
 as desired.

\noindent \textsf{\textbf{Step:}} Consider first a pair $(u,v) \in \mathcal S(t) \times \mathcal S(t)$, for $t$ with $\ell(t) \ge 1$.
 We assume inductively that for $i \in \{1,2 \}$, the computed distance estimates for pairs
 $(x,y) \in (\mathcal S(t_i) \times \mathcal S(t_i)) \cup (\mathcal B(t_i) \times \mathcal B(t_i))$ satisfy
 \[
       \delta_{t_i}(x,y) = \delta_i(x,y) \le (1+\epsilon)^{3(\ell(t_i) + 1)} \cdot d_{G(t_i)}(x,y).
 \]
 First, suppose that the shortest path $P(u,v) = (u = u_1, u_2, \ldots, u_h = v)$ in $G(t)$
 satisfies that $u,v \in  \mathcal S(t)$, and all other vertices $u_2,\ldots, u_{h -1} \in \mathcal V(t) \setminus \mathcal S(t) = (\mathcal V(t_1) \cup \mathcal V(t_2)) \setminus \mathcal S(t)$.
  
  Note that $u,v \in \mathcal S(t)$ implies that $u,v \in \mathcal B(t_1) \cap \mathcal B(t_2)$.
 Moreover, either $V(P) \subseteq \mathcal V(t_1)$ or $V(P) \subseteq \mathcal V(t_2)$.
 (Indeed, let $u = u_1 \in \mathcal S(t) \subseteq \mathcal V(t_1)$.
 If $u_2 \in \mathcal V(t_1)$, then if there exists an index $i < h$ such that
 $u_i \in \mathcal V(t_1)$ and $u_{i + 1} \in \mathcal V(t_2)$, then $\{u_i, u_{i+1} \} \cap \mathcal S(t) \neq \phi$.
 By the assumption above, this can be the case only if $i + 1 =h$, and then $\mathcal V(P) \subseteq \mathcal V(t_1)$.
 If $u_2 \in \mathcal V(t_2)$, then similarly, $V(P) \subseteq \mathcal V(t_2)$.)

 Suppose w.l.o.g. that $V(P) \subseteq \mathcal V(t_1)$.
 As $u, v \in \mathcal B(t_1)$, the weight $w(P)$ of the path $P$ is equal to 
 $d_{G(t_1)}(u,v)$.
 Also, by the inductive hypothesis,
 \[
     w_S(u,v) = \delta_1(u,v) \le (1 + \epsilon)^{3(\ell(t_1) +1)} \cdot d_{G(t_1)}(u,v)
    = (1 + \epsilon)^{3\ell(t)} \cdot d_{G(t_1)}(u,v).
\]
 Hence, 
 \begin{equation}\label{eq:deltasInduction}
                \delta_S(u,v) \le (1 + \epsilon) \cdot w_S(u,v) \le (1 + \epsilon)^{3\ell(t) + 1} \cdot d_{G(t_1)}(u,v).
\end{equation}
Since the shortest path $P = P(u,v)$ in $G(t)$ is contained in $\mathcal V(t_1)$, we have
$\delta_{G(t_1)}(u,v) = \delta_{G(t)}(u,v)$, and thus 
\[
       \delta_S(u,v) \le (1+ \epsilon)^{3\ell(t) + 1} \cdot d_{G(t)}(u,v).
\]
As the algorithm returns $\delta_S(u,v)$ for a vertex pair $(u,v) \in \mathcal S(v) \times \mathcal S(v)$,
we are done.

Now consider the case that the path $P$ also has some internal vertices in $\mathcal S(t)$.
(We are still considering the case that $(u,v) \in \mathcal S(t) \times \mathcal S(t)$.)
Let $(i_1 = 1 < i_2 < i_3 < \ldots < i_q =h)$ be the indices (w.r.t. path $P$) such that
$u_{i_j} \in \mathcal S(t)$,
for all $j \in [q]$.
By the above argument, it follows that
\[
w_S(u_{i_j}, u_{i_{j + 1}}) \le (1 + \epsilon)^{3\ell(t) + 1} \cdot 
\min \{ d_{G(t_1)}(u_{i_j}, u_{i_{j + 1}}), d_{G(t_2)}(u_{i_j}, u_{i_{j + 1}})  \}.
\]
Also, the exact distance from $u$ to $v$ in $G(t)$ is 
\[
       d_{G(t)}(u,v) = w(P) = \sum_{j =1}^{h -1} \min \{ d_{G(t_1)}(u_{i_j}, u_{i_{j + 1}}), d_{G(t_2)}(u_{i_j}, u_{i_{j + 1}})  \}.
\]
Hence
\begin{align}\label{eq:sum}
    \begin{aligned}
        \delta_S(u,v) &\le (1 + \epsilon) \sum_{j =1}^{h-1} w_S(u_{i_j}, u_{i_{j + 1}})
                      \le (1 + \epsilon)^{3\ell(t) +2}\sum_{j = 1}^{h-1} \min \{ d_{G(t_1)}(u_{i_j}, u_{i_{j + 1}}), d_{G(t_2)} (u_{i_j}, u_{i_{j + 1}})\}\\
                      &= (1 + \epsilon)^{3\ell(t) +2} w(P) = (1 + \epsilon)^{3\ell(t) +2} \cdot d_{G(t)}(u,v).
    \end{aligned}
\end{align}
This proves the claim for pairs $(u,v) \in \mathcal S(t) \times \mathcal S(t)$.

We now turn our attention to pairs $(u,v) \in \mathcal B(t) \times \mathcal B(t)$.
If no vertex of $P = P(u,v)$ belongs to $\mathcal S(t)$, then either $V(P) \subseteq \mathcal V(t_1)$ or
$V(P) \subseteq \mathcal V(t_2)$. 
Assume w.l.o.g. that $V(P) \subseteq \mathcal V(t_1)$.
In addition, as $u,v \in \mathcal B(t)$, $u$ (respectively, $v$) belongs to a separator of some ancestor $t_u$ (respectively, $t_v$) of $t$ (in $T(G))$,
which is also an ancestor of $t_1$. Also, as $u,v \in \mathcal V(t_1)$, they therefore belong to $ \mathcal B(t_1)$.
It follows that
\[w(P) = d_{G(t)}(u,v) = d_{G(t_1)}(u,v),\]
and 
 \[ 
        \delta_1(u,v) \le (1 + \epsilon)^{3(\ell(t_1) + 1)} d_{G(t_1)}(u,v) 
         \le (1 + \epsilon)^{3\ell(t)} d_{G(t)}(u,v).
 \]
 By Equation~\eqref{eq:finalEstimates} , $\delta_t(u,v)$ is set in this case as $\min\{ \delta_1(u,v), \delta_{BS}(u,v)\}$.
 We conclude that 
 \begin{equation}\label{eq:distEstimateBT2}
        \delta_t(u,v) \le (1 + \epsilon)^{3\ell(t)} d_{G(t)}(u,v),
 \end{equation}
 as desired.
 
 Hence we are left with the case that $V(P) \cap \mathcal S(t) \neq \phi$.
 Let $i_L$ and $i_R$ be the leftmost and the rightmost indices of 
 vertices on the path $P$ such that $u_{i_L}, u_{i_R} \in \mathcal S(t)$.
 Observe that $u = u_1 \in \mathcal B(t)$, $u_{i_L} \in \mathcal S(t)$,
 $u_{i_R} \in \mathcal S(t)$, $u_h = v \in \mathcal B(t)$, and thus $(u_1, u_{i_L}) \in \mathcal B(t) \times \mathcal S(t)$,
 $(u_{i_L}, u_{i_R}) \in \mathcal S(t) \times \mathcal S(t)$, $(u_{i_R}, v) \in \mathcal S(t) \times \mathcal B(t)$.
Hence, all these three pairs are edges of the hopset $H$.
We have already seen that (by Equation~\eqref{eq:sum}) 
\[
          \delta_S(u_{i_L} , u_{i_R}) \le (1 + \epsilon)^{3\ell(t) + 2} d_{G(t)} (u_{i_L}, u_{i_R}),
\]
and $w_{BS}(u_{i_L}, u_{i_R})$ is set as $w_{BS}(u_{i_L}, u_{i_R}) = \delta_S(u_{i_L}, u_{i_R})$ (see Equation~\eqref{eq:weight2}).

The subpath $(u = u_1, u_2, \ldots, u_{i_L}), u \in \mathcal B(t), u_{i_L} \in \mathcal S(t)$,
$u_1, u_2, \ldots u_{i_{L -1}} \in \mathcal V(t) \setminus \mathcal S(t)$, is contained in either $\mathcal V(t_1)$ or $\mathcal V(t_2)$.
(Because to contain vertices from both these sets, it must traverse a separator vertex.)
Assuming w.l.o.g. that it is contained in $\mathcal V(t_1)$, we conclude that 
$u, u_{i_L} \in \mathcal B(t_1)$, and thus, by inductive hypothesis, the estimate 
\[
    \delta_1(u, u_{i_L}) \le (1 + \epsilon)^{3(\ell(t_1) + 1)} \cdot d_{G(t_1)}(u, u_{i_L})=
      (1 + \epsilon)^{3(\ell(t_1) + 1)} d_{G(t)} (u, u_{i_L}) = (1 + \epsilon)^{3\ell(t)} d_{G(t)}(u, u_{i_L})
\]
was computed for it while processing $t_1$.

Similarly, we have also computed a $(1 + \epsilon)^{3 \ell(t)}$-approximate estimate for the distance $d_{G(t)}(u_{i_R}, v)$ while processing the children $t_1$, $t_2$ of $t$.

By Equation~\eqref{eq:weight3}, we thus conclude that 
\[
    w_{BS}(u, u_{i_L}) \le (1 + \epsilon)^{3\ell(t)} \cdot  d_{G(t)}(u, u_{i_L}),
\]
\[
    w_{BS}(u_{i_R}, v ) \le (1 + \epsilon)^{3\ell(t)} \cdot d_{G(t)} (u_{i_R}, v).
\]
As the distance in $G_{BS}$ between $u$ and $v$ is at most 
\[
     w_{BS} (u, u_{i_L}) + w_{BS}(u_{i_L}, u_{i_R}) + w_{BS}(u_{i_R}, v),
\]
the computed distance estimate $\delta_{BS}(u,v)$ is therefore bounded by
\begin{align*}
    \begin{aligned}
          \delta_{BS}(u,v) &\le     (1 + \epsilon) \left( w_{BS}(u, u_{i_L}) + w_{BS}(u_{i_L}, u_{i_R}) + w_{BS}(u_{i_R}, v) \right)\\
         & \le  ( 1 + \epsilon) \left( (1 + \epsilon)^{3 \ell(t)} d_{G(t)}(u, u_{i_L}) + (1 + \epsilon)^{3 \ell(t) +2} d_{G(t)}(u_{i_L}, u_{i_R})   + (1 + \epsilon)^{3\ell(t)} d_{G(t)} (u_{i_R}, v)\right)\\
         & \le (1 + \epsilon)^{3(\ell(t) + 1)} \left( d_{G(t)} (u, u_{i_L}) + d_{G(t)} (u_{i_L}, u_{i_R}) +  d_{G(t)} (u_{i_R},v) \right)\\
         & = (1 + \epsilon)^{3(\ell(t) + 1)} d_{G(t)}(u,v).
   \end{aligned}
\end{align*}
\end{proof}
As $\ell(t) = O(\log n)$, we conclude that for any hopset edge $e = (u,v) \in H$, the weight that the algorithm computes for
it is a $(1 + \epsilon)^{O(\log n)}$-approximation of $d_{G(t)}(u,v)$.
\cite{Cohen93} showed that when all these distances are computed exactly, then $H$ is an exact $O(\log n)$-hopset 
of $G$. It follows that our algorithm computes a $(1 + \epsilon)^{O(\log n)}$-approximate hopset  $H$ with hopbound $O(\log n)$
of $G$. For a parameter $\epsilon' > 0$, we set $\epsilon = \frac{\epsilon'}{O(\log n)}$.
The resulting approximation of distances becomes $(1 + \epsilon')$.

As the distances in the original graph are at most $Mn$, it follows that all the edge weights that our algorithm sets 
are at most $(1 + \epsilon') Mn$ as well.
Thus, the parameter $W$ in our invocation of the algorithm of~\cite{APSPDirectedZwick} is $O(Mn)$.

Our hopset contains the same edges as the shortcut of~\cite{Cohen93} does.
Thus, the analysis of its size that shows that $|H| = O(n^{2\rho} + n \cdot \log n) $
from~\cite{Cohen93} (see Theorem $5.1$ therein) is applicable here.
To summarize:
\begin{theorem}\label{thm:approxDistances}
Consider an $n$-vertex directed weighted graph $G = (V,E)$ with non-negative edge weights.
Suppose that the aspect ratio w.r.t.  non-zero weights is at most $M$, for a parameter $M \ge 1$.
Suppose also that the underlying undirected unweighted skeleton of $G$ admits a $k^{\rho}$-recursive separator,
for a parameter $0< \rho < 1$.
Given a parameter $\epsilon > 0$, 
the algorithm, given a separator decomposition $T_G$ of the graph $G$,
computes a $(1 + \epsilon, O(\log n))$-hopset
$G' = (V, H)$ of $G$ with $O(n^{2\rho} + n \log n)$ edges in the centralized time 
\[
     T(n, \epsilon, \rho, M) = O \left (\frac{1}{\epsilon} \log^3 n (\log n + \log (M/\epsilon))   \cdot n^{\omega \rho}\right),
\]
or in parallel polylogarithmic time and within $ T(n, \epsilon, \rho, M)$ work (up to a factor of $\log^{O(1)} n$).
\end{theorem}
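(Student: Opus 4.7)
The plan is to assemble the theorem from the machinery already built in Section~\ref{sec:approxHopset} and the preceding lemma: we have an approximation guarantee per hopset edge, a known size bound from Cohen's construction, and a time analysis that sums a work-per-node quantity over the separator tree $T_G$. The main work is to combine these ingredients and, crucially, to rescale $\epsilon$ to absorb the multiplicative $(1+\epsilon)^{O(\log n)}$ blow-up from the per-level accumulation of stretch.

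First, I would fix the hopset structure exactly as in Cohen~\cite{Cohen93}: for every $t \in T_G$ insert into $H$ all pairs in $(\mathcal S(t) \times \mathcal S(t)) \cup (\mathcal B(t) \times \mathcal B(t))$, but weight each edge by the estimate $\delta_t(\cdot,\cdot)$ produced by our approximate procedure instead of the exact distance. The size bound $|H| = O(n^{2\rho} + n\log n)$ then carries over verbatim from Theorem~5.1 of~\cite{Cohen93}, since we insert the same edges. For the hopbound, I would invoke Cohen's combinatorial argument showing that the analogous edge set with \emph{exact} weights is an exact $O(\log n)$-hopset of $G$; since weights only enter the stretch analysis and not the reasoning that every $u$--$v$ path in $G$ is shadowed by an $O(\log n)$-hop path in $G \cup H$, the hopbound $\beta = O(\log n)$ is preserved when the exact weights are replaced by upper bounds on them.

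Second, for the stretch I would use the preceding lemma: for every $t \in T_G$ and every stored pair, $\delta_t(u,v) \le (1+\epsilon)^{3(\ell(t)+1)} d_{G(t)}(u,v)$ while also $\delta_t(u,v) \ge d_{G(t)}(u,v)$. Since $\ell(t) = O(\log n)$, each hopset edge overestimates the true distance by a factor at most $(1+\epsilon)^{O(\log n)}$. An $O(\log n)$-hop path in $G \cup H$ therefore gives stretch $(1+\epsilon)^{O(\log n)}$. I would then rescale by setting the parameter fed to Zwick's algorithm to $\epsilon' = \Theta(\epsilon/\log n)$, so that $(1+\epsilon')^{O(\log n)} \le 1+\epsilon$, yielding a true $(1+\epsilon, O(\log n))$-hopset.

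Third, the time bound. Each internal node $t$ runs $(1+\epsilon')$-APSP via~\cite{APSPDirectedZwick} on graphs of size $O(|\mathcal S(t)| + |\mathcal B(t)|)$ with aspect ratio $O(Mn)$ (the latter because all weights ever created are bounded by true distances times $(1+\epsilon)^{O(\log n)} = O(1)$, hence at most $O(Mn)$). Zwick's running time is $\tilde O(k^\omega \cdot \tfrac{1}{\epsilon'} \log(W/\epsilon'))$, and summing over $t \in T_G$ using the telescoping bound $\sum_{t \in T_G}(|\mathcal S(t)| + |\mathcal B(t)|)^\omega = \tilde O(n^{\omega\rho})$ (as already derived in Equation~\eqref{eq:timeFromBasics} and~\cite{Cohen93}, Section~5), I obtain total work $O\bigl(\tfrac{1}{\epsilon} \log^3 n (\log n + \log(M/\epsilon)) \cdot n^{\omega\rho}\bigr)$, accounting for one extra $\log n$ from the rescaling $\epsilon' = \epsilon/\Theta(\log n)$. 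For the parallel claim, I would invoke the fact (see~\cite{ElkinN22}) that Zwick's APSP admits a polylogarithmic-time \textsf{PRAM} implementation with the same work, and note that the levels of $T_G$ can be processed bottom-up in $O(\log n)$ rounds with all nodes of a single level handled in parallel.

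The only nontrivial step is the stretch-accumulation argument, and that is already discharged by the lemma preceding the theorem; everything else is a careful bookkeeping combination of Cohen's structural results and Zwick's APSP. The main obstacle I would anticipate is verifying that the aspect ratio in every recursive call stays bounded by $O(Mn)$ even after many levels of approximate updates, but this follows because the estimate is always sandwiched between the true distance and $(1+\epsilon)$ times it, so weights never exceed $(1+\epsilon) \cdot Mn = O(Mn)$.
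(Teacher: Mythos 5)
Your proposal is correct and follows essentially the same route as the paper: the same edge set as Cohen's construction (hence the same size and hopbound), the stretch-accumulation lemma with rescaling $\epsilon' = \Theta(\epsilon/\log n)$, the $O(Mn)$ aspect-ratio bound, and the time analysis summing Zwick's APSP cost over $T_G$ via $\sum_{t}(|\mathcal S(t)|+|\mathcal B(t)|)^{\omega} = \tilde O(n^{\omega\rho})$. No gaps.
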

As was mentioned above, in the centralized setting a tree decomposition with 
near-optimal tree-width can be computed in near-linear time.
Thus, $(1 + \epsilon, O(\log n))$-hopset $G'$ with $\tilde O(n^{2\rho} + n)$
size can be computed from scratch within $\tilde O_{\epsilon}(\log M \cdot n^{\omega \rho}) + m^{1 + o(1)}$ time.
\subsection{Approximate Rectangular Distance Products}\label{sec:distanceProducts}
As was discussed in Section~\ref{sec:ApproxDistanceOutline}, after the first stage of the algorithm
(computing an approximate $(1+ \epsilon, O(\log n))$-hopset) is completed, the algorithm
proceeds to the second stage. In the second stage the algorithm iteratively computes approximate 
distance products between a rectangular matrix $B$ of dimensions $n^{\sigma} \times n$,
and the square $n \times n$-matrix $A$.
The matrix $A$ is the (\textsf{weighted}) adjacency matrix (with self-loops) of the augmented graph $\tilde G = (V, E \cup H)$
, i.e, of the original graph $G = (V,E)$ union with the hopset graph $G' = (V,H)$.
The self-loops are $0$ values on the diagonal, while non-diagonal entries $A[i,j]$ store the weight of the edge from vertex $i$ to vertex $j$ (for all $i, j \in [n], i \neq j$) in $\tilde G$ assuming that $\langle i, j\rangle \in E \cup H$.
If there is no such edge, the entry contains $\infty$.
The matrix $B = B^{(0)}$ is initialized as the rectangular submatrix of the weighted adjacency matrix
$A$ that contains $|S| = n^{\sigma}$ rows of $A$ that correspond to the set of sources $S \subseteq V$. 

After iteration $i$, $i = 0,1,\ldots, O(\log n)$, the matrix $B$ is updated by 
$B^{(i +1)} = B^{(i)} \star_{\xi}A $, where $\star_{\xi}$ stands for $(1 + \xi)$-approximate distance product.
Here $\xi > 0$ is a parameter that will be determined in the sequel.
\begin{definition}\label{def:min-plus}
Given two $n$-dimensional vectors $\vec b = (b_1, b_2, \ldots, b_n)$,
and $\vec a = (a_1, a_2,\ldots a_n)$, their \textsf{distance} (or \textsf{min-plus})
\emph{product} is
\[
       \vec b \star \vec a = \underset{1 \le i \le n}{\min } \{ b_i + a_i\}.
\]

For two matrices $B$ and $A$ of dimensions $n^{\sigma} \times n$ and $n \times n$, respectively, $ 0\le \sigma \le 1$, their \textsf{distance product} is the $n^{\sigma} \times n$ matrix $C = B \star A$, where for every 
$1 \le i \le n^{\sigma}$ and $1 \le j \le n$, $C[i,j] = \vec b_i \star \vec a_j$, where $\vec b_i$ (respectively, $\vec a_j$) is the $i$-th row (respectively, the $j$-th column) of the matrix $B$ (respectively, $A$).

Finally, for a parameter $\xi > 0$, a \emph{$(1 + \xi)$-approximate distance product}
$\vec b \star_{\xi} \vec a$ of two $n$-vectors $\vec b$ and $\vec a$ is a value $c$ such that 
$ \vec b \star  \vec a \le c \le (1 + \xi) \cdot(\vec b \star \vec a)$.
A \emph{$(1 + \xi)$-approximate distance product } $B  \star_{\xi} A$ of two matrices $B$ and $A$
as above is a matrix $C$ that satisfies for any $1 \le i \le n^{\sigma} $, $1 \le j \le n$, that 
$C[i,j]$ is a $(1 + \xi)$-approximate distance product of the vectors $\vec b_i$
and $\vec a_j$, i.e., the $i$-th row of $B$ and the $j$-th column of $A$.
\end{definition}
It is easy to see 
that $B \star A^h$, for an integer parameter $h$ (and exact distance product $\star$), provides one with
exact $S \times V$ $( h + 1)$-bounded distances in $\tilde G$. 
Since $\tilde G = G \cup G'$ and $G' = (V, H)$ is a $(1 + \epsilon)$-approximate $O(\log n)$-hopset of $G$,
for some fixed parameter $\epsilon > 0$,
it follows that for $h = O(\log n)$, the matrix $B \star A^h$ provides $(1 + \epsilon)$-approximate $S \times V$ distances in $G$.

As our algorithm replaces each exact distance product in $B \star A^h$ by a $(1 + \xi)$-approximate one, we conclude that it computes $(1 + \epsilon) \cdot (1 + \xi)^{O(\log n)}$-approximate $S \times V$ distances in $G$.
Zwick showed~\cite{APSPDirectedZwick}  that $(1 + \xi)$-approximate distance product of two matrices $B$ and $A$ as above with all entries being non-negative and maximum non-infinity entries bounded by a parameter $W$, 
can be computed in $\frac{1}{\xi} \cdot \log n \cdot \log (\frac{W}{\xi}) \cdot n^{\omega(\sigma) + o(1)} $ time, i.e, by $O(\frac{1}{\xi} \log n \cdot \log(W/\xi))$ ordinary matrix products of matrices 
with the same dimensions $n^{\sigma} \times n$ and $n \times n$, respectively.
(In fact,~\cite{APSPDirectedZwick} showed it for square matrices.
See~\cite{ElkinN22}, Section~\ref{sec:prelims}, for its extension to rectangular matrices.)

For the sake of completeness, we provide details of this argument (due to~\cite{APSPDirectedZwick, ElkinN22}) in 
Appendix~\ref{sec:Argument1}.
By Theorem~\ref{thm:distanceProductsAppdx} in Appendix~\ref{sec:Argument1}, the second stage of the algorithm computes $(1 + \epsilon)$-approximate
$S \times V$-distances within $O(\log W) \cdot \tilde O (1/\xi)\cdot n^{\omega(\sigma) + o(1)}$
centralized time (or in parallel polylogarithmic time and work bounded by this expression)\footnote{Here $\tilde O(\cdot )$ hides factors polylogarithmic in $n$}.
Combining this with Theorem~\ref{thm:approxDistances}, we derive the following theorem:
\begin{theorem}\label{thm:DistancesFinal}
Consider an $n$-vertex directed weighted graph $G = (V,E)$
with non-negative edge weights.
Suppose that the aspect ratio (w.r.t. non-zero edge weights)
is at most $W$, for a parameter $W \ge 1$, and that the underlying undirected unweighted skeleton of $G$ admits
a $k^{\rho}$-separator for a parameter $0  < \rho < 1$.
Given a parameter $\epsilon > 0$, 
  $(1 + \epsilon)$-approximate $S \times V$ distances are computed by our algorithm in 
   in polylogarithmic time and work at most $T'(n, \epsilon, \rho, W)$, where
\[
            T'(n, \epsilon, \rho, W) = O(\log W) \cdot \tilde O(1/\epsilon) \cdot (n^{\omega \rho} + n^{\omega(\sigma)}).       
\]
\end{theorem}
Theorem~\ref{thm:DistancesFinal} shows that results from Section~\ref{sec:SeparatorDAvail}
(see Theorem~\ref{thm: separator1}, Tables~\ref{tab:rhoInterval2} and~\ref{tab:comparisonRho2})
apply not only to the $S \times V$ reachability problem, but also to the $S \times V$
$(1 + \epsilon)$-approximate distance computation problem on graphs with aspect ratio 
(w.r.t. non-zero edge weights) at most polynomial in $n$ (and $\epsilon > \frac{1}{\log ^{O(1)} n}$).
In other words, for any $\sigma > \tau  = \frac{2(\omega - 2)}{\omega}$, 
(see Equation~\eqref{eq:threshImprovementSep}), there is a non-empty interval $J_{\sigma}$ such that for any $\rho \in J_{\sigma}$, the exponent of the running time of our parallel
$S \times V$ $(1 + \epsilon)$-approximate distance computation algorithm is smaller that the respective exponent 
of the state-of-the-art parallel reachability algorithm (due to~\cite{Cohen93}).

Cohen~\cite{Cohen93} also provided a (slower) parallel algorithm for {\em exact} $S \times V$ distance computation
on graphs with {\em arbitrary} edge weights that admit $k^{\rho}$-recursive separators.
This algorithm runs in polylogarithmic in $n$ time, but its work complexity is 
$\tilde O (n^{3 \rho} + n^{2\rho + \sigma})$, i.e., the exponent of its work complexity is given by
\[
       C_{Dist}(\sigma, \rho) = \max \{3\rho, 2\rho + \sigma \}.
\]
Another currently existing solution for $S \times V$ $(1 + \epsilon)$-approximate 
distance computation problem on graphs with non-negative edge weights with aspect ratio (w.r.t.  non-zero weights) 
at most $W$ (for a parameter $W$) is the \textsf{APSP} algorithm due to~\cite{APSPDirectedZwick}.
Its parallel version has polylogarithmic in $n$ running time, while its work complexity is 
$O(\log W) \cdot \tilde O(1/\epsilon) \cdot \tilde O(n^{\omega})$. (Unlike our algorithm or the algorithm of~\cite{Cohen93},
this algorithm does not require a tree decomposition to be provided as a part of the input.)
Therefore, the state-of-the-art exponent of the work complexity of polylogarithmic time parallel algorithms
for the $S \times V$ $(1 + \epsilon)$-approximate distance computation problem on graphs with non-negative edge 
weights that admit $k^{\rho}$-recursive separators (and are provided the respective tree decomposition as a part of their input) is 
\[
       C'_{Dist} (\sigma, \rho) = \min \{\omega, C_{Dist}(\sigma, \rho)\} = 
       \min \{ \omega, \max\{3\rho, 2\rho + \sigma \} \}.
\]
This exponent is inferior to the exponent $F(\sigma, \rho) = \max \{ \omega \rho, \omega(\sigma)\} $
of our algorithm in wide ranges of parameters $\sigma$ and $\rho$.
Indeed $F(\sigma, \rho) < \omega$ for all $\sigma < 1$ and $\rho < 1$.
Also, as $\omega \rho < 3\rho$, it follows that if 
$\omega(\sigma) < \max\{ 3\rho, 2\rho + \sigma\}$,
then, $F(\sigma, \rho) < C'_{Dist}(\sigma, \rho)$.
This holds for all $\rho > \min \{\frac{\omega(\sigma)}{3}, \frac{\omega(\sigma) - \sigma}{2} \}$.
Denote the right-hand-side expression by $\rho_{thr}(\sigma)$.
It is easy to verify that $\rho_{thr}(\sigma) = \frac{\omega(\sigma)}{3}$ for
$\sigma \le 0.725\ldots$, and $\rho_{thr}(\sigma) = \frac{\omega(\sigma) - \sigma}{2}$
for larger values of $\sigma$.

To summarize, for every value of $\sigma$, $0 \le \sigma < 1$, there is an interval 
$\hat J_{\sigma} = (\rho_{thr}(\sigma), 1)$ in which 
$F(\sigma, \rho) < C'_{Dist}(\sigma, \rho)$, i.e., our 
parallel algorithm outperforms the state-of-the-art solutions for this problem.
In Table~\ref{tab:hatJSigma} we provide the values of $\rho_{thr}(\sigma)$ for various values of $\sigma$, along with 
sample values of $F(\sigma, \rho)$ and $C'_{Dist}(\sigma, \rho)$ for various values of $\sigma$ and $\rho$ such that
$\rho \in \hat J_{\sigma}$.

\begin{table}[h!]
  \captionsetup{font=scriptsize}
  \begin{center}
  {\small 
    \begin{tabular}{l|c |c |c |c} 
      \textbf{$\sigma$} & $\rho_{thr}(\sigma)$ & $\rho$ & $F(\sigma, \rho)$ & $C'_{Dist}(\sigma, \rho)$\\
      \hline
      $\le \alpha$ & 2/3 & 0.7 & 2 & 2.1\\
      \rowcolor{Gray}
                   && 0.8 & 2 & $\omega$\\
       \rowcolor{Gray}
                   && 0.9 & $\omega \cdot 0.9 = 2.1344$ &$\omega$ \\ \hline
       0.4 & 0.6698 & 0.7 & $\omega (0.4) = 2.0095$ & 2.1\\
       \rowcolor{Gray}
                    && 0.8 & $\omega(0.4)$ & $\omega$\\
        \rowcolor{Gray}
                    && 0.9 & $\omega \cdot 0.9$ & $\omega$\\ \hline
        0.5 & 0.6810 & 0.7 & $\omega(0.5) = 2.0430$ & 2.1\\
        \rowcolor{Gray}
                     && 0.8 & $\omega(0.5)$ & $\omega$\\
        \rowcolor{Gray}
                     && 0.9 & $\omega \cdot 0.9$ & $\omega$\\ \hline
        0.6 & 0.6973 & 0.7 & $\omega(0.6) = 2.0926$ & 2.1\\
        \rowcolor{Gray}
                     && 0.8 & $\omega(0.6)$ & $\omega$\\
        \rowcolor{Gray}             
                     && 0.9 & $\omega \cdot 0.9$ & $\omega$\\ \hline
        \rowcolor{Gray}             
         0.7 & 0.7177 & 0.8 & $\omega(0.7) = 2.1530$ &$\omega$ \\
         \rowcolor{Gray}
                     && 0.9 & $\omega(0.7)$ & $\omega$\\ \hline
          0.8 & 0.7105 & 0.8 & $\omega(0.8) = 2.2209$ &$\omega$ \\
                     && 0.9 & $\omega(0.8)$ & $\omega$\\ \hline
          0.9 & 0.671 & 0.7 & $\omega(0.9) = 2.2942$ &2.3 \\
                     && 0.8 & $\omega(0.9)$ & $\omega$\\ 
                     && 0.9 & $\omega(0.9)$ & $\omega$\\ \hline
          1  & $\frac{\omega - 1}{2}$                                         
  \end{tabular}
  }
  \end{center}
    \caption{The table provides the values of intervals $\hat J_{\sigma} = (\rho_{thr}, 1 )$
               such that for all $\rho \in \hat J_{\sigma}$, the exponent $F(\sigma, \rho)$
               of our algorithm is better than the current state-of-the-art for the work complexity of $S\times V$
               $(1 + \epsilon)$-approximate distance computation problem on graphs that admit recursive $k^{\rho}$-separators with 
               non-negative edge weights with polynomial in $n$ aspect ratio.
               Our bounds, as well as bounds due to~\cite{Cohen93}, assume that a tree
               decomposition is provided to the algorithm as input.
               On the other hand, the algorithm of~\cite{APSPDirectedZwick} that provides
               the exponent $\omega$ works \emph{from scratch}.
               Highlighted rows indicate particularly significant gaps.
             }
    \label{tab:hatJSigma}
\end{table}

\section{Building Shortcuts/Hopsets in Specific Graph Families with Small Separators }\label{sec:Shorcuts-GraphFamilies}
In this section we analyze the running time of our algorithm on a number of graph families that admit small recursive separators.
In particular, we consider graphs with bounded \emph{genus}, graphs that \emph{exclude a fixed minor} (Section~\ref{sec:excludedMinor}), \emph{geometric} graphs (Section~\ref{sec:GeometricGraphs}), 
and finite element graphs (Section~\ref{sec:finiteElemnt}).
Recall that in Sections~\ref{sec:flatBound} and~\ref{sec:reachDistFlatBound} we have analyzed our algorithm 
on graphs that admit a flat bound of $O(n^{\rho})$ on the treewidth.

\subsection{Graphs with Bounded Genus and Graphs with Fixed Excluded Minor}\label{sec:excludedMinor}
In this section we analyze the running time of our algorithm for building shortcuts/hopsets on graphs with bounded genus $g = g(n)$ and graphs with a fixed excluded minor $K_h$, $h = h(n)$.
Recall that (see Equation~\eqref{eq:timeFromBasics}) the running time of shortcut/hopset construction is bounded by 
\begin{align*} O (\sum_{t \in T_G} |\mathcal S(t)|^{\omega}) + O(\sum_{t \in T_G} |\mathcal B(t)|^{\omega}).\end{align*}
We next sketch Cohen's analysis~\cite{Cohen93} of this expression.
Later we adapt it to the case of graphs with bounded genus and to graphs with fixed excluded  minor.
Following~\cite{Cohen93}, for every index $i =0,1,2,\ldots$, we denote by $T_i$ the subset of nodes $t \in T_G$
with
\begin{equation}\label{eq:boundedGenusCohen1}
(1 - \lambda)^{i + 1} n < |\mathcal V(t)| \le (1-\lambda)^i n.
\end{equation}
(We denote by $1/2 < \lambda < 1$ the ratio of the separator, i.e., a constant such that if $\tilde t$
is a child of $t$, then $|\mathcal V(\tilde t)| \le \lambda \cdot |\mathcal V (t)|$.)
Observe that $T_i$ is a forest, in which every tree has depth $O(1)$.
\begin{lemma}\label{lem:SeparatorBoundarySizeBounds}
Let $\rho \ge 1/2$, and assume that for any node $t \in T$, 
$|\mathcal S(t)| \le O(|\mathcal V(t)|^{\rho})$. Then,
    \begin{align}
    \sum_{t \in T_G} |\mathcal S(t)| = O(n^{\omega \rho}), \sum_{t \in T_G} |\mathcal B(t)| = O(n^{\omega \rho} \cdot \log n).
    \end{align}
\end{lemma}
A sketch of Cohen's proof is provided in Appendix~\ref{sec:ProofSketchCohen}.
(See~\cite{Cohen93} for more details.)

Consider now graphs with genus at most $g = g(n)$.
These graphs admit balanced recursive separators of size $O(\sqrt {ng})$ (see~\cite{GILBERTHT84}).
In this case we continue the recursive decomposition up until the number of vertices in a 
node becomes smaller than $c' \cdot g$, for a sufficiently large constant $c'$.
(Note that after that point the upper bound on the size of the separator is no longer meaningful.)
Denote by $\ell = \lceil \log_{\frac{1}{1 - \lambda}} \frac{n}{g} \rceil$ the maximum index such that
$T_j \neq \phi$. In the sequel, the constants hidden by the $O$-notation may depend on $\lambda$ and $\rho$,
which are also constant.

Similarly to Equation~\eqref{eq:boundedGenusCohen3-2}, for any index $j$, $0 \le j \le \ell$,
\begin{align}\label{eq:sjForBoundedGenus1}
    s_j = \sum_{t \in T_j} |S(t)| = \frac{1}{1 - \lambda} \cdot O\left(\frac{n_j}{(1 - \lambda)^j \cdot n} \cdot \sqrt{(1 - \lambda)^j \cdot n \cdot g}\right).
\end{align}
Also, 
\begin{align*}
    \begin{aligned}
    n_{j + 1} \le n_j + s_j &\le n_j \cdot  \left(1 + O_{\lambda}\left(\sqrt{\frac{g}{n}} \cdot (1 - \lambda)^{-j/2}\right)\right)\\
                             &= n_j \cdot  \left(1 + O\left(\sqrt{\frac{g}{n}} \cdot (1 - \lambda)^{-j/2}\right)\right).
    \end{aligned}
\end{align*}
Thus,

\begin{align*}
    \begin{aligned}
         n_j &\le n_0 \cdot \Pi_{i=0}^{j-1} \left( 1 + O\left(\sqrt{\frac{g}{n}} \cdot (1 - \lambda)^{-i/2}\right) \right)\\
             & \le n \cdot e^{(\sum_{i = 0}^{j-1} ((1 - \lambda)^{-1/2})^i) \sqrt{g/n}}.
    \end{aligned}
\end{align*}
As $j \le \ell = \lceil \log_{(1-\lambda)^{-1}} n/g \rceil$, we conclude that
\begin{align*}
        \sum_{i = 0}^{j -1} ((1 - \lambda)^{-1/2})^i = O\left(\sqrt{\frac{n}{g}}\right),    
\end{align*}
and thus, 
\begin{align*}
    n_j \le n \cdot e^{O(1)} = O(n),
\end{align*}
for every index $0 \le j \le \ell$.
By Inequality~\eqref{eq:sjForBoundedGenus1}, it follows that
\begin{align}\label{eq:sjForBoundedGenus2}
    \begin{aligned}
         s_j &= O\left(\frac{n_j}{(1-\lambda)^j \cdot n} \cdot \sqrt{(1 - \lambda)^j ng}\right) \\
             & = O\left(\frac{\sqrt {ng}}{(1 - \lambda)^{j/2}}\right).
    \end{aligned}
\end{align}
(The constant $c'$ such that the recursion continues until $n' < c' \cdot g$ satisfies 
$n'/2 + c \sqrt{n' g} \le \lambda n'$, where $c$ is the leading constant in the upper bound on the size of
separator for graphs with genus $g$.)

On the bottom level $\ell$ of the recursion the running time is
\begin{align}\label{eq:bottomLevelBoundedGenus}
    \sum_{ t \in T_{k}} O(g^{\omega}) \le O(g^{\omega}) \cdot \frac{n_{k}}{g} = O(n \cdot g^{\omega -1}).
\end{align}
Note that $n \cdot g^{\omega -1} \le (n \cdot g)^{\omega/2}$ in the range $g \le n$. (For $g > n$, the bound $O(\sqrt{ng})$
on the size of the separator becomes meaningless.)
Hence the right-hand-side of~\eqref{eq:bottomLevelBoundedGenus} is at most $O\left( (ng)^{\omega/2}\right)$.

For levels $0 \le i < \ell$, by Inequality~\eqref{eq:sjForBoundedGenus2},
\begin{align*}
    \begin{aligned}
        \sum_{t \in T_i } O(|\mathcal S(t)|^{\omega}) &=  
                                             \frac{O(s_i)}{\sqrt{(1 - \lambda)^i \cdot ng}} \cdot (1 - \lambda)^{\frac{i\omega}{2}} (ng)^{\frac{\omega}{2}}\\
                                            & = O\left(\frac{(ng)^{\omega/2}}{(1 - \lambda)^{i/2}}\right) \cdot (1 - \lambda)^{\frac{i}{2} (\omega - 1) } \\
                                            & = O((n g)^{\omega/2}) \cdot (1 - \lambda)^{\frac{i}{2} (\omega -2)}\\
                                            & = O((ng)^{\omega/2}) \cdot(1 - \lambda)^{i(\frac{\omega}{2} - 1)}
    \end{aligned}
\end{align*}
As $\omega > 2$, we have $(1- \lambda)^{\frac{\omega}{2} - 1} < 1$. Thus,
\begin{align*}
    \begin{aligned}
     \sum_{t \in T_G} O(|\mathcal S(t)|^{\omega}) &= \sum_{i = 0}^{\ell} \sum_{t \in T_i} O(| \mathcal S(t)|^{\omega}) \\
                                         &= O((ng)^{\omega/2}) \sum_{i = 0}^{\ell} (1 - \lambda)^{i \cdot(\frac{\omega }{2} - 1)}\\
                                         & = O((ng)^{\omega/2}).
     \end{aligned}
\end{align*}
(If $\omega = 2$, the upper bound grows by only a factor of $O(\log n)$.)
Now we analyze 
\begin{align}\label{eq:boundedGenusNew2}
      \sum_{t \in T_G} |\mathcal B(t)|^{\omega} = \sum_{j = 0}^{\ell} \sum_{t \in T_j} |\mathcal B(t)|^{\omega}
\end{align}
Fix a level $j$, $1 \le j \le \ell$, and analyze separately 
\begin{align}\label{eq:boundedGenusNew3}
        \sum_{t \in T_j} |\mathcal B(t)|^{\omega}.
\end{align}
To maximize the sum~\eqref{eq:boundedGenusNew3}, each of the $O((1-\lambda)^{-1})$ $T_0$-nodes
should contribute its entire separator to the boundary set of $O((1 - \lambda)^{-1})$ nodes of $T_j$.
Moreover, for a pair of nodes $t_0 \in T_0$, $t_j \in T_j$, such that $t_0$ is an ancestor of $t_j$,
to maximize $|\mathcal B(t_j)|$, this boundary set should contain all separator sets along the $t_0 - t_j$ path in $T_G$. The sum of sizes of separators on such a path is at most
\begin{align*}
    \begin{aligned}
            O(1-\lambda)^{-1} \cdot c \left( \sqrt{ng} + \sqrt{n (1 - \lambda)g} + \sqrt{n (1 - \lambda)^2g} + \ldots + O(g) \right) 
                     = O_{\lambda} (\sqrt {ng}) = O(\sqrt {ng}).
    \end{aligned}
\end{align*}
(Recall that $\lambda > 1/2$ is a universal constant.)
Thus, the overall contribution of the boundary sets of these 
$O((1 - \lambda)^{-1})$ nodes $t_j$ of $T_j$ (whose boundary set contains all
separator sets along the corresponding $t_0 - t_j$ path) to the sum~\eqref{eq:boundedGenusNew2} is 
\begin{align*}
    O((1 - \lambda)^{-1}) \cdot (O(\sqrt{ng}))^{\omega} = O((ng)^{\omega/2}).
\end{align*}
Similarly, each of the $O((1 - \lambda)^{-2})$ nodes of $T_1$ may contribute its entire separator set and all separator sets along
a descending path to some node of $T_j$ to the boundary of that latter node.
This provides an overall contribution of at most
\begin{align*}
    O((1 - \lambda)^{-2}) \cdot O\left(\sqrt{ng (1 - \lambda)}\right)^{\omega} = \frac{1}{1 - \lambda} \cdot O\left((ng)^{\omega/2}\right) (1 - \lambda)^{\frac{\omega}{2} -1}.
\end{align*}
Generally, for any $i$, $0 \le i < j$, 
the $O\left( (1 - \lambda)^{-( i +1)} \right)$
nodes of $T_i$ contribute (along with their descending $T_i - T_j$ paths) to boundary sets of $O\left( (1 - \lambda)^{-(i +1)} \right)$
$T_j$-nodes. Each of them may have a boundary set of size $O(\sqrt{ng (1 - \lambda)^i})$.
Their overall contribution to the sum~\eqref{eq:boundedGenusNew3} is therefore 
\begin{align*}
    \begin{aligned}
    O((1 - \lambda)^{-(i + 1)}) (ng (1 - \lambda)^i)^{\omega/2} &= \frac{1}{1 - \lambda} \cdot O(ng)^{\omega/2} (1 - \lambda)^{i(\omega/2 - 1)} \\
                                                        & = O_{\lambda}\left( (ng)^{\omega/2} \cdot (1 - \lambda)^{i(\omega/2 - 1)} \right) \\
                                                        & = O\left((ng)^{\omega/2} \cdot (1 - \lambda)^{i(\omega/2 - 1)}\right).   
    \end{aligned}
\end{align*}
Summing up over all levels $i$ we obtain (for $\omega > 2)$
\begin{align*}
    \sum_{t \in T_j} |\mathcal B(t)|^{\omega} = \sum_{i < j}O\left((ng)^{\omega/2} \cdot (1 - \lambda)^{i(\omega/2 -1)}\right) = O((ng)^{\omega/2}).
\end{align*}
and 
\begin{align*}
    \sum_{t \in T_G} |\mathcal B(t)|^{\omega} = O(\log n \cdot (ng)^{\omega/2}).
\end{align*}
(For $\omega = 2$ there is an additional factor of $O(\log n)$.)
\begin{cor}\label{cor:BoundedGenusTime}
  For parameters $\epsilon > 0$ and $W \ge 1$,
  the running time of our algorithm for building $O(\log n)$-shortcuts (respectively, $(1 + \epsilon, O(\log n ))$-hopsets) in the centralized model
  of computing for $n$-vertex $g$-genus directed graphs with non-negative edge weights and aspect ratio $W$ (w.r.t.  non-zero weights), $g = g(n)$, is
  $T(n, g) =  \tilde O((ng)^{\omega/2})$ (respectively, $ T(n, g, W, \epsilon ) = \tilde O_{\epsilon}(\log W \cdot (ng)^{\omega/2})$).
  In the \textsf{PRAM} model, the algorithm works in polylogarithmic in $n$ time using $T(n,g)$ (respectively, $T(n,g, W, \epsilon)$) processors. In this  setting, the algorithm assumes that the tree decomposition with the required properties is provided to the algorithm as a part of the input.
\end{cor}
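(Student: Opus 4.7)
The plan is to package the two sums just established, namely $\sum_{t \in T_G} |\mathcal S(t)|^{\omega} = O((ng)^{\omega/2})$ and $\sum_{t \in T_G} |\mathcal B(t)|^{\omega} = O(\log n \cdot (ng)^{\omega/2})$, into the general work expression from~\eqref{eq:timeFromBasics}. Arguing exactly as in~\eqref{eq:timeFromBasics}, the total work of the bottom-up procedure is dominated by $\sum_{t \in T_G} (|\mathcal S(t)| + |\mathcal B(t)|)^{\omega}$, which is $\tilde O((ng)^{\omega/2})$. For the shortcut variant this sum is exactly the cost of computing, at every node $t$, all pairwise reachabilities in the (small) subgraph on $\mathcal S(t) \cup \mathcal B(t)$, which establishes the first bound.

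For the hopset variant I would invoke Zwick's $(1+\epsilon)$-approximate APSP algorithm in each of the graphs $G_S(t)$ and $G_{BS}(t)$ from Section~\ref{sec:approxHopset}, as was already done in the proof of Theorem~\ref{thm:approxDistances}. This incurs an additional factor $\tilde O(\log W / \epsilon)$ per node, and by the same aggregation over $t \in T_G$ yields the claimed $\tilde O_{\epsilon}(\log W \cdot (ng)^{\omega/2})$ bound. Correctness (stretch $(1+\epsilon)$ after rescaling $\epsilon' = \epsilon / \Theta(\log n)$ to absorb the $(1+\epsilon')^{O(\log n)}$ error accumulation up the tree) is inherited verbatim from Section~\ref{sec:ApproxHopset}.

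For the PRAM version I would note that the three ingredients—rectangular Boolean matrix multiplication, Zwick's approximate distance product, and the bottom-up traversal of $T_G$—all admit polylogarithmic-depth implementations (see~\cite{ElkinN22}). Nodes of $T_G$ at the same level can be processed in parallel, and the depth of the tree itself is $O(\log n)$, so the overall depth is $\log^{O(1)} n$ while the work matches the centralized bound.

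Finally, for the from-scratch claim in the centralized model, I would recursively invoke the $O(\sqrt{ng})$-separator algorithm of Gilbert-Hutchinson-Tarjan~\cite{GILBERTHT84} to build $T_G$; the work at each level is linear in the size of the subproblems it handles, so a telescoping sum bounds the total separator-construction work well within $\tilde O((ng)^{\omega/2})$. The PRAM part of the statement asks for the decomposition as input because polylogarithmic-depth separator construction for arbitrary genus is not known. The main (already resolved) obstacle was the delicate level-by-level accounting performed just before the corollary; what remains is genuinely a matter of bookkeeping—carefully tracking the $\log n$, $\log W$, and $1/\epsilon$ overheads so that they land in the $\tilde O_{\epsilon}$ and $\log W$ factors as stated.
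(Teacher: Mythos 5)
Your derivation of the main bounds coincides with the paper's. The corollary is obtained exactly as you describe: substitute the two sums $\sum_{t \in T_G} |\mathcal S(t)|^{\omega} = O((ng)^{\omega/2})$ and $\sum_{t \in T_G} |\mathcal B(t)|^{\omega} = O(\log n \cdot (ng)^{\omega/2})$, together with the bottom-level contribution $O(n \cdot g^{\omega-1}) \le O((ng)^{\omega/2})$, into the aggregate cost of the bottom-up construction as in~\eqref{eq:timeFromBasics}; for the hopset variant replace exact reachability/APSP at each node by Zwick's $(1+\epsilon)$-approximate APSP on $G_S(t)$ and $G_{BS}(t)$, incurring the $\tilde O_{\epsilon}(\log W)$ factor and the $\epsilon' = \epsilon/\Theta(\log n)$ rescaling already justified for Theorem~\ref{thm:approxDistances}; and parallelize by processing the $O(\log n)$ levels of $T_G$ in sequence with polylogarithmic-depth matrix products.

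The one step that would fail as written is your justification of the from-scratch claim in the centralized model. You propose to build $T_G$ by recursively invoking the separator algorithm of~\cite{GILBERTHT84}, but that algorithm requires an embedding of the graph in a surface of genus $g$ as part of its input; for a graph given only combinatorially, producing such an embedding (indeed, even determining the genus) is NP-hard, so this route does not yield a from-scratch construction. The paper instead combines the \emph{existential} bound that genus-$g$ graphs have treewidth $O(\sqrt{ng})$ with the $O(\log^3 n)$-approximate tree-decomposition algorithms of~\cite{BernsteinGS21, ChuzhoyS21}, which need no embedding and run in $m^{1+o(1)}$ randomized or deterministic $n^{2+o(1)}$ time; this is the mechanism the paper points to immediately after the corollary. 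Your remark about the PRAM setting (no known polylogarithmic-depth separator construction for positive genus, hence the decomposition is assumed as input there) is exactly the paper's position.
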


For graphs with excluded minor $K_h$, the size of the separator is $ O(\sqrt{n \log n} \cdot h) = \tilde O(\sqrt n \cdot h)$~\cite{PlotkinRaoSmith}. In this case we continue the decomposition up until the nodes have $n' \le c' h^2$ vertices, for a sufficiently large constant $c'$. Analogous considerations lead to the bound of 
\begin{align*}
    \tilde O\left( n \cdot h^{2(\omega-1)} + (\sqrt n h )^{\omega} \right) = \tilde O\left ( (\sqrt n h)^{\omega} \right),
\end{align*}
as long as $h < \sqrt n$.
(For $h \ge \sqrt n$ the upper bound $\tilde O(\sqrt n h)$ on the size of the separator becomes meaningless.)
We get the following corollary.
\begin{cor}\label{cor:excludedMinor}
 For parameters $\epsilon > 0$ and $W \ge 1$, and a function $h = h(n)$,
    the running time of our algorithm for building shortcuts (respectively, hopsets) in $n$-vertex directed graphs
     with non-negative weights and aspect ratio $W$
    that exclude minor $K_h$ is $\tilde O\left( (\sqrt n h)^{\omega} \right)$
    (respectively, $O_{\epsilon }\left( \log W (\sqrt n h)^{\omega}\right)$).
     In the \textsf{PRAM} model the expressions above reflect the work complexity, and the time is polylogarithmic in $n$. 
    In the centralized setting a tree decomposition with 
    the properties required by the algorithm can be computed within time
    $\tilde O((\sqrt n h)^{\omega}) + m^{1 + o(1)}$,
    while in the PRAM setting it is not known if this is the case.
\end{cor}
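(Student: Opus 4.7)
The plan is to mimic, layer by layer, the analysis already carried out for genus-$g$ graphs in the proof of Corollary~\ref{cor:BoundedGenusTime}, simply substituting the Plotkin--Rao--Smith bound of $\tilde O(\sqrt{n'}\cdot h)$ on the size of a balanced separator of any $K_h$-minor-free graph on $n'$ vertices, and truncating the recursion at the natural scale $n'=\Theta(h^2)$, below which the separator bound stops being meaningful.

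First I would set up the recursion exactly as in the bounded-genus case. Partition $T_G$ into layers $T_0,T_1,\ldots,T_\ell$ according to Inequality~\eqref{eq:boundedGenusCohen1}, and take $\ell=\lceil \log_{1/(1-\lambda)}(n/h^2)\rceil$ so that nodes at level $\ell$ have $O(h^2)$ vertices. Writing $n_j=\sum_{t\in T_j}|\mathcal V(t)|$ and $s_j=\sum_{t\in T_j}|\mathcal S(t)|$, Plotkin--Rao--Smith yields
\[
s_j=\tilde O\!\left(\frac{n_j}{(1-\lambda)^j n}\cdot \sqrt{(1-\lambda)^j n}\cdot h\right),
\]
whence $n_{j+1}\le n_j\bigl(1+\tilde O(h/\sqrt n\cdot (1-\lambda)^{-j/2})\bigr)$, and a telescoping product shows $n_j=\tilde O(n)$ for all $j\le\ell$, since the geometric sum $\sum_{i\le\ell}(1-\lambda)^{-i/2}$ is dominated by its last term, which is $\tilde O(\sqrt n/h)$. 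Substituting back gives $s_j=\tilde O(\sqrt n\cdot h\cdot (1-\lambda)^{-j/2})$.

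Next I would bound the two separator/boundary sums separately. The leaves contribute $\tilde O(n/h^2)\cdot O(h^2)^\omega=\tilde O(n\cdot h^{2(\omega-1)})$. For an internal layer $T_j$, maximising the sum under the per-separator cap $\tilde O(\sqrt{(1-\lambda)^j n}\cdot h)$, exactly as in the bounded-genus calculation, gives $\sum_{t\in T_j}|\mathcal S(t)|^\omega=\tilde O((\sqrt n\,h)^\omega)\cdot (1-\lambda)^{j(\omega/2-1)}$; since $\omega>2$ the resulting geometric series converges, so $\sum_{t\in T_G}|\mathcal S(t)|^\omega=\tilde O((\sqrt n\,h)^\omega)$. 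For $\mathcal B(t)$ I would argue identically to the genus case: the boundary of a node at level $j$ is contained in the union of separators along its root-to-$t$ path, and the sizes of those separators telescope to $\tilde O(\sqrt{(1-\lambda)^j n}\cdot h)$, giving the same layer-wise bound and hence $\sum_{t\in T_G}|\mathcal B(t)|^\omega=\tilde O((\sqrt n\,h)^\omega)$.

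Combining these contributions yields overall running time $\tilde O(n\,h^{2(\omega-1)}+(\sqrt n\,h)^\omega)$, and the direct comparison $n^{\omega/2}h^\omega\ge n\,h^{2(\omega-1)} \iff \sqrt n\ge h$ shows that in the meaningful regime $h<\sqrt n$ this simplifies to $\tilde O((\sqrt n\,h)^\omega)$. The hopset variant then follows by feeding this bound into Theorem~\ref{thm:approxDistances}, which contributes the extra $O_\epsilon(\log W)$ factor, and the \textsf{PRAM} version follows as in the preceding corollaries. The main obstacle I anticipate is the careful bookkeeping of polylogarithmic factors in the telescoping estimate for $n_j$: one must verify that the $\tilde O$-factors hidden in the Plotkin--Rao--Smith bound do not accumulate badly when multiplied across $\ell=\Theta(\log(n/h^2))$ levels, and that the truncation scale $\Theta(h^2)$ is chosen so that the separator bound remains meaningful at every layer while the bottom-level cost $\tilde O(n\,h^{2(\omega-1)})$ is still absorbed by $\tilde O((\sqrt n\,h)^\omega)$.
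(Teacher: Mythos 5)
Your proposal follows essentially the same route as the paper: the paper's own argument for Corollary~\ref{cor:excludedMinor} consists precisely of substituting the Plotkin--Rao--Smith separator bound $\tilde O(\sqrt{n'}\,h)$ into the bounded-genus analysis, truncating the recursion once nodes have $O(h^2)$ vertices, and observing that "analogous considerations" give $\tilde O\bigl(n\,h^{2(\omega-1)}+(\sqrt n\,h)^{\omega}\bigr)=\tilde O\bigl((\sqrt n\,h)^{\omega}\bigr)$ for $h<\sqrt n$. You have correctly spelled out those analogous considerations (the layer sums $s_j$, $n_j$, the leaf contribution, the geometric convergence for $\omega>2$, and the regime comparison), so your write-up is a faithful, more detailed version of the paper's proof.
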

Recall that the tree decompositions required in Corollaries~\ref{cor:BoundedGenusTime} and~\ref{cor:excludedMinor} can be computed via algorithms from~\cite{BernsteinGS21, ChuzhoyS21}.
See Section~\ref{sec:flatBound} (after Corollary~\ref{cor:flatTime}) for more details.

\subsection{Geometric Graphs}\label{sec:GeometricGraphs}
There is a long line of classical work~\cite{MillerThurston90, teng1991points, MillerVavasis91, MillerTengVavasis91, MillerTTV97, EppsteinMillerTeng95, davies2025} about sublinear separators for geometric graphs. 
Here we will focus on the \emph{$k$-nearest-neighbours} ($k$-NN) 
graphs, \emph{$k$-ply neighborhood systems} (i.e., their intersection graphs) 
and \emph{$r$-overlap graphs} of $k$-ply neighborhood systems in $\Re ^d$.
In both cases the dimension $d$ should be viewed as a (possibly very large) constant, while $k$ is a  parameter that can depend on $n$. (We do require $k = n^{1 - \xi}$ for some constant $\xi > 0$, though.
The overlap parameter $r$ is a universal constant too.)

Teng~\cite{teng1991points} devised an $O(n^2)$-time 
deterministic centralized algorithm for building balanced
recursive separators of size $O(k^{1/d} n^{1- 1/d})$ in such graphs.
Miller et al.~\cite{MillerTTV97} devised a
randomized parallel $\polylog(n)$ time and near-linear work algorithm for constructing such  separators.
Eppstein et al.~\cite{EppsteinMillerTeng95} devised a linear-time 
centralized algorithm for computing such separators, though for our purposes,
the quadratic time of the algorithm of Teng~\cite{teng1991points}
is sufficient. Alternatively, one can also use the algorithm of
Chuzhoy and Saranurak~\cite{ChuzhoyS21} for computing $O(\log^3 n)$-approximate
tree decomposition in $n^{2 + o(1)}$ deterministic time.
Existential bounds on tree-width from~\cite{teng1991points}
guarantee that the resulting tree decomposition will have tree-width $\tilde O(k^{1/d} \cdot n^{1 - 1/d})$.
\begin{definition}\label{def:k-NN}
 Given a set $\mathcal P = \{ p_1, p_2, \ldots, p_n\}$   of $n$
 points in $\Re^d$, and a parameter $k$, 
 the \emph{$k$-NN graph} of $\mathcal P$ is $NN_k(\mathcal P) = (\mathcal P, E)$  with vertex set $\mathcal P$. The edge set $E$ contains all pairs of 
 distinct points $p, p' \in \mathcal P$ for which either $p'$ is among $k$ closest points to $p$ or vice versa.
\end{definition}
\begin{definition}\label{def:k-ply}
    A \emph{$d$-dimensional (Euclidean) neighborhood system}
    $\Phi = \{B_1, B_2, \ldots, B_n  \}$ is a collection of balls in $\Re^d$.
    Let $\{p_1, p_2, \ldots, p_n \}$ denote the set of their respective centers.
     For each point $p \in \Re^d$ the \emph{ply} of $p$ in $\Phi$ is the number
    of balls $B_i$ that contain $p$.
    The system $\Phi$ is said to be \emph{$k$-ply} for a parameter $k$,
    if for any point $p \in \Re^d$, its ply in $\Phi$ is at most $k$.
\end{definition}
Another closely related graph family is intersection graphs of 
\emph{$k$-neighborhood systems}.
\begin{definition}\label{k-intersection}
    A neighborhood system $\Phi = \{B_1, B_2, \ldots, B_n \}$ centered at points $P = \{p_1, p_2, \ldots, p_n \}$
    is a \emph{$k$-neighborhood system} if the interior of each ball $B_i$ contains no more than $k$ points of $P$.
\end{definition}
It is well known (see, e.g., Lemma $3.2$ of~\cite{teng1991points}) that ply of any $k$-neighborhood system is $O_d(k)$.
(Recall that we assume throughout that $d = O(1)$.) Therefore, Teng's algorithm~\cite{teng1991points} provides a 
separator of size $O_d(k^{1/d} \cdot n^{1 - 1/d}) = O(k^{1/d} \cdot n^{1 - 1/d})$ for this graph family as well.
We note, however, that neighborhood systems with small ply need not be $k$-neighborhood systems for any bounded $k$. (Consider, e.g., a large ball that contains $n-1$ disjoint smaller balls. The ply of this system is just $2$,
 while it is not a $k$-neighborhood system for any $k < n -1$. See~\cite{teng1991points}, Section $3.4$, for further discussion.)

 It is also easy to see that $k$-NN graphs are special case of intersection graphs of $k$-neighborhood systems, which are,
 in turn, a special case of intersection graphs of $O_d(k)$-ply neighborhood systems.
 (See Section $4.3.2$ in~\cite{teng1991points}.)

 A natural extension of intersection graphs of neighborhood systems $\Phi = \{B_1, B_2, \ldots, B_n\}$ are \emph{$r$-overlap}  graphs, for a parameter $r > 0$. The \emph{$r$-overlap} graph of $\Phi$ contains an edge between (vertices that correspond to) balls $B_i$ and $B_j$ iff an $r$-inflated version of one of them intersects the other.

 Finally, in addition to Euclidean neighborhood systems, one can allow arbitrary linear normed spaces.
 In other words, the system $\Phi$ may well contain balls in an arbitrary $\ell_p$ space, and still as long as
 its ply is at most $k$, its $r$-overlap graph in $d$-dimensions (for constant $r$ and $d$) admits a separator 
 of size $O(k^{1/d} \cdot n^{1 - 1/d})$~\cite{teng1991points, MillerTengVavasis91}.
 Moreover, up to polylogarithmic factors in $n$, such a separator can be computed in 
 deterministic centralized time $n^{2 + o(1)}$~\cite{teng1991points, ChuzhoyS21}, 
 or in randomized polylogarithmic time and work $n^{2 + o(1)}$~\cite{teng1991points, ChuzhoyS21}.

 It is well known (see~\cite{teng1991points}, Lemma $3.6$) that both $k-NN$ graphs in 
$\Re^d$ and intersection graphs of $k$-ply neighborhood systems in
$\Re^d$ have degeneracy $O_d(k)$. (The leading coefficient is exponential in $d$.) 
Thus, the number of edges in these graphs is $m \le O(n \cdot k)$, and 
it can be as large as that.

For dimension $d = 2$, the size of the separator is $O(\sqrt{kn})$, 
and thus, our analysis of the centralized running time and parallel 
number of processors of our algorithms for graphs with bounded genus from
Section~\ref{sec:excludedMinor} is applicable as is to $k-NN$ graphs and 
intersection graphs of $k$-ply systems for dimension $d= 2$.

For dimension $d \ge 3$, the size of the separator is $O(k^{1/d} n^{1 - 1/d})$.
We next outline how the argument from Section~\ref{sec:excludedMinor}
extends to this scenario.

The tree decomposition is continued until the number of vertices in each node becomes at most $c' \cdot k$, for a sufficiently large constant $c'$.
(Indeed, after that point, the bound of the size of the separator is $O(k)$,
i.e., it is trivial.)
The expression for $s_j$ (see Equation~\eqref{eq:boundedGenusCohen3-2} 
in the proof of Lemma~\ref{lem:SeparatorBoundarySizeBounds} and recall that
 the separator ratio $\lambda$ is a fixed constant, $1/2 < \lambda < 1$) becomes
\begin{align}\label{eq:geometric-sj}
    \begin{aligned}
            s_j &= \frac{1}{1 - \lambda} \cdot  O\left(\frac{n_j}{(1 - \lambda)^j \cdot n} \cdot ((1 - \lambda)^j\cdot n)^{\frac{d-1}{d}} \cdot k^{1/d} \right) \\
            & = O_{\lambda}\left(\frac{n_j}{((1 - \lambda)^j \cdot n)^{1/d}}    \cdot k^{1/d} \right)
              = O\left(\frac{n_j}{((1 - \lambda)^j \cdot n)^{1/d}}    \cdot k^{1/d} \right).
    \end{aligned}
 \end{align}
 Thus 
 \begin{align*}
        \begin{aligned}
             n_{j + 1} \le n_j + s_j = n_j \left( 1 + O\left( \left(\frac{k}{n}\right)^{1/d} \right)  \left(1-\lambda\right)^{-j/d} \right),
        \end{aligned}     
 \end{align*}
 and so,
 \begin{align*}
        \begin{aligned}
             n_{j} \le n \cdot \exp \{\sum_{i=0}^{j-1} (1-\lambda)^{-i/d} \cdot O\left( \left(\frac{k}{n}\right)^{1/d} \right) \}
        \end{aligned}     
 \end{align*}
As $j \le \ell = \lceil \log_{(1 - \lambda)^{-1}} n/k \rceil$, we have 
\begin{align*}
    \sum_{i = 0}^{j -1} (1 - \lambda)^{-i/d} = O((n/k)^{1/d}),
\end{align*}
i.e., $n_j = O(n)$ (for every index $1 \le j \le \ell$).
Thus, by~\eqref{eq:geometric-sj},
\begin{align*}
    s_j = O\left(n^{1 - 1/d} k ^{1/d} \cdot \frac{1}{(1 - \lambda)^{j/d}} \right).
\end{align*}
On the bottom level of the recursion we still spend
\begin{align}\label{eq:bottomLevelGeometric}
    \sum_{t \in T_\ell} O(k^{\omega}) = O(k^{\omega}) \cdot \frac{n_{\ell}}{k} =
     O(n k ^{\omega -1)})
\end{align}
centralized time (or parallel work). It holds that
 $n k^{\omega - 1} \le (n^{1 - 1/d} k^{1/d})^{\omega} = n^{\omega - \omega/d} k^{\omega/d}$,
 for any $k \le n$.
Hence the right-hand-side of~\eqref{eq:bottomLevelGeometric} is at most $O\left(\left(n^{1 - 1/d} k^{1/d}\right)^{\omega}\right)$.

\noindent For levels $0 \le i < \ell$, 
\begin{align*}
    \begin{aligned}
         \sum_{t \in T_i} O(|\mathcal S(t)|^{\omega}) &= \frac{1}{(1 - \lambda)^{1 - 1/d}} \cdot \frac{O(s_i)}{( (1 - \lambda)^i n)^{1 - 1/d} \cdot k^{1/d}} \cdot ( ((1- \lambda)^i n)^{1-1/d}k^{1/d})^{\omega}\\         
          & = \frac{O_{\lambda}(n^{1 - 1/d} \cdot k^{1/d})/(1 - \lambda)^{i/d}} 
                   {(1 - \lambda)^{i \cdot \frac{d-1}{d} } \cdot n^{\frac{d-1}{d}} \cdot k^{1/d}}
         \cdot (1- \lambda)^{\frac{i(d-1)}{d} \omega} \cdot n^{\omega \frac{d-1}{d}} \cdot k^{\omega/d} \\
         &= O(n^{\omega \frac{d-1}{d}} \cdot k^{\omega /d}) \cdot(1- \lambda)^{i(\frac{d-1}{d} \omega - 1)}.
    \end{aligned}
\end{align*}
Thus, 
\begin{align*}
    \begin{aligned}
        \sum_{t \in T_G} O(|\mathcal S(t)|^{\omega}) = \sum_{i} \sum_{t \in T_i} O(|S(t)|^{\omega}) =
            O(n^{\omega \frac{d-1}{d}} \cdot k^{\frac{\omega}{d}}).
    \end{aligned}
\end{align*}
The analysis of $\sum_{t \in T_G} O(|\mathcal B(t)|^{\omega})$ also proceeds analogously to Section~\ref{sec:excludedMinor}.
For any level $i$, $0 \le i < \ell$, the sum of sizes of separators along a path
descending from a $T_i$-node is 
\begin{align*}
    \begin{aligned}
     O_{\lambda}(1) \cdot \left (n^{\frac{d-1}{d}} k^{\frac{1}{d}} (1 - \lambda)^{i \cdot \frac{d-1}{d}} + n^{\frac{d-1}{d}} k^{\frac{1}{d}}(1 - \lambda)^{( i + 1) \cdot \frac{d-1}{d}} 
    + \ldots + O(k)\right)  &= O_{\lambda}(n^{\frac{d-1}{d}} \cdot k^{\frac{1}{d}} (1 - \lambda)^{i \cdot \frac{d-1}{d}}) \\
    &= O\left(n^{\frac{d-1}{d}} \cdot k^{\frac{1}{d}} \cdot (1 - \lambda)^{i \cdot \frac{d-1}{d}}\right). 
    \end{aligned}
\end{align*}
(Recall that $1/2 < \lambda < 1$ is a fixed constant.)
Also, 
\begin{align*}
    \sum_{t \in T_G} |\mathcal B(t)|^{\omega} = \sum_{j = 0}^{\ell} \sum_{t \in T_j} |\mathcal B(t)|^{\omega}.
\end{align*}
Fix a level $j$, $0 \le j \le \ell$.
The contribution of $O((1 - \lambda)^{-1})$ nodes of $T_0$ to $\sum_{t \in T_j} |\mathcal B(t)|^{\omega}$
is at most $O((1 - \lambda)^{-1}) \cdot O (n^{\frac{d-1}{d}} k^{\frac{1}{d}})^{\omega} = O_{\lambda}(n^{\omega \frac{d-1}{d}} \cdot k^{\frac{\omega}{d}}) = O(n^{\omega \frac{d-1}{d}} \cdot k^{\frac{\omega}{d}})$.

The contribution of $O((1 - \lambda)^{-2})$ nodes of $T_1$ to $\sum_{t \in T_j} |\mathcal B(t)|^{\omega}$
is, similarly, at most 
\begin{align*}
    O((1 - \lambda)^{-2})\cdot O( (n (1-\lambda))^{\frac{d-1}{d}}\cdot k^{\frac{1}{d}})^{\omega} = 
      ( 1- \lambda)^{-1} \cdot O( n^{\omega \frac{d-1}{d}}\cdot k^{\frac{\omega}{d}}) \cdot (1-\lambda)^{\omega \frac{d-1}{d} -1}.
\end{align*}
Generally, for any $i \le j$, the contribution of $O((1 - \lambda)^{-(i +1)})$ nodes of $T_i$ to 
$\sum_{t \in T_j} |\mathcal B(t)|^{\omega}$ is at most
\begin{align*}
     O((1 - \lambda)^{-(i +1)}) \cdot O((n(1 -\lambda)^i)^{\frac{d-1}{d}} \cdot k^{\frac{1}{d}})^{\omega} = 
       O((1 - \lambda)^{-1}) \cdot O(n^{\omega\frac{d-1}{d}} \cdot k^{\frac{\omega}{d}}) \cdot (1- \lambda)^{i(\omega \frac{d-1}{d} - 1)}.
\end{align*}
Summing up over all all these contributions we conclude that (for $\omega > 2$)
\begin{align*}
        \begin{aligned}
     \sum_{t \in T_j} |\mathcal B(t)|^{\omega} &= (1 - \lambda)^{-1} \cdot O(n^{\omega \frac{d-1}{d}} \cdot k^{\frac{1}{d}})
                        \cdot \sum_{i \le j} (1 - \lambda)^{i(\omega \cdot \frac{d-1}{d}-1)}\\
                                                & = O_{\lambda}(n^{\omega \cdot \frac{d-1}{d}} \cdot k^{1/d}) 
                                                  = O(n^{\omega \cdot \frac{d-1}{d}} \cdot k^{1/d})
         \end{aligned}
\end{align*}
and 
\begin{align*}  
    \sum_{t \in T_G} |\mathcal B(t)|^{\omega} = O(\log{n} \cdot  n^{\omega \frac{d-1}{d}} \cdot k^{\frac{1}{d}}). 
\end{align*}
(For $\omega = 2$ the bound grows by another factor of $O(\log n)$.
  The latter, is however, anyway immaterial in this context, as we hide polylogarithmic factors by $\tilde O$-notation.)
Therefore, we obtain the following corollary:
\begin{cor}\label{cor:knn}
    Let $d = 2, 3, \ldots$, $\epsilon > 0$ and $W \ge 1$.
    Consider a directed $n$-vertex $m$-edge graph whose underlying undirected graph is
    an $n$-vertex $m$-edge $k-NN$ graph in $\Re^d$
    (or intersection graph of a $k$-ply neighborhood system or of a $k$-neighborhood system in $\Re^d$),
    for $k \le n^{1- \xi}$, for some $\xi > 0$, with non-negative edges weights and aspect ratio $W$. Our algorithm computes its $O(\log n)$-shortcut (or $(1 + \epsilon, O(\log n))$-hopset) from scratch within $\tilde O(n^{\omega \frac{d-1}{d}} \cdot k^{\frac{\omega}{d}} + n^{2 + o(1)})$ deterministic centralized time or in polylogarithmic parallel randomized time using $\tilde O(n^{\omega \frac{d-1}{d}} \cdot k^{\frac{\omega}{d}} + n^{2 + o(1)})$ processors.
     For $(1+ \epsilon)$-hopset the centralized running time/parallel work complexity needs to be multiplied by $O_{\epsilon}(\log W) \cdot \log^{O(1)} n$.
\end{cor}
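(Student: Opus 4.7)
The plan is to assemble Corollary~\ref{cor:knn} from three ingredients already present in the exposition: the existential separator bound, the algorithms that realize it efficiently from scratch, and the level-by-level bookkeeping that was carried out in Section~\ref{sec:GeometricGraphs} for general $d$. First I would invoke the results of Teng~\cite{teng1991points} (and the extensions of~\cite{MillerTengVavasis91, MillerTTV97, EppsteinMillerTeng95}) to record that $k$-NN graphs, intersection graphs of $k$-ply neighborhood systems, and intersection graphs of $k$-neighborhood systems in $\Re^d$ all admit balanced recursive separators of size $O(k^{1/d} n^{1-1/d})$, where the implicit constant depends only on $d$. I would then note that such a separator decomposition $T_G$ can in fact be produced from scratch: either by Teng's deterministic $O(n^2)$ algorithm, or, if one insists on a tree decomposition with near-optimal width, by Chuzhoy--Saranurak~\cite{ChuzhoyS21} in $n^{2+o(1)}$ deterministic centralized time, and analogously in polylogarithmic parallel time with $n^{2+o(1)}$ processors via~\cite{MillerTTV97}. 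This justifies the additive $n^{2+o(1)}$ term that appears in the statement.

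Next I would feed this separator guarantee into the shortcut/hopset construction of Section~\ref{sec:SeparatorDAvail} and into Theorem~\ref{thm:approxDistances}. The overall running time is dominated by $\sum_{t \in T_G} (|\mathcal S(t)|+|\mathcal B(t)|)^{\omega}$ (up to the Zwick-overhead $O_\epsilon(\log W)\cdot \log^{O(1)} n$ for the weighted case), and the bulk of the work is to bound these two sums. This is precisely the calculation carried out immediately above the statement of the corollary: defining levels $T_i$ as in~\eqref{eq:boundedGenusCohen1}, propagating the recurrence $n_{j+1}\le n_j + s_j$ using the separator size $O((1-\lambda)^{j/d} n^{1-1/d} k^{1/d})$, and stopping the recursion once node sizes drop below $c' k$ (after which the separator bound is trivial). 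The telescoping $\sum_i (1-\lambda)^{-i/d}=O((n/k)^{1/d})$ yields $n_j=O(n)$ for all relevant $j$, and substituting this back gives $s_j = O(n^{1-1/d}k^{1/d}\cdot(1-\lambda)^{-j/d})$.

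Plugging these into $\sum_{t\in T_i}|\mathcal S(t)|^{\omega}$ and then into $\sum_{t\in T_G}|\mathcal S(t)|^{\omega}$ produces a geometric series in $(1-\lambda)^{(\omega(d-1)/d)-1}$, which converges because $\omega>2$, hence $\omega(d-1)/d>1$ for every $d\ge 2$; this gives $\sum_{t\in T_G}|\mathcal S(t)|^\omega = O(n^{\omega(d-1)/d}k^{\omega/d})$. The bound on $\sum_{t\in T_G}|\mathcal B(t)|^\omega$ follows in the same fashion: a boundary vertex in $\mathcal B(t)$ is a separator vertex from some ancestor, and the sum of separator sizes along a root-to-node path telescopes to $O(n^{(d-1)/d}k^{1/d}(1-\lambda)^{i(d-1)/d})$, which is raised to $\omega$ and summed against the $O((1-\lambda)^{-(i+1)})$ nodes on each level; convergence again uses $\omega(d-1)/d>1$, and the extra factor of $\log n$ is absorbed into $\tilde O(\cdot)$. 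Finally, the bottom level contributes at most $n\cdot k^{\omega-1}\le n^{\omega-\omega/d}k^{\omega/d}$, which is within the stated bound.

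The last step is to combine: adding the $n^{2+o(1)}$ cost for producing $T_G$ from scratch and the $\tilde O(n^{\omega(d-1)/d}k^{\omega/d})$ cost of Cohen-style reachability computations inside each node of $T_G$ gives the centralized shortcut bound, while invoking Theorem~\ref{thm:approxDistances} in place of Cohen's exact APSP at each node multiplies the second term by $O_\epsilon(\log W)\cdot \log^{O(1)}n$ and yields the $(1+\epsilon,O(\log n))$-hopset. The parallel claim is obtained by the same accounting with work substituted for time, since both the separator construction of~\cite{MillerTTV97} and Zwick's matrix-multiplication-based APSP~\cite{APSPDirectedZwick} are polylogarithmic-time in \textsf{PRAM} at the same work. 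The main subtlety, and the one place I would be most careful, is ensuring that the geometric series over levels actually converges; this forces the condition $\omega(d-1)/d>1$, which holds for all $d\ge 2$ since $\omega>2$, and it is also why the recursion must be truncated at $|\mathcal V(t)|=\Theta(k)$ rather than at constant size, so that the bottom-level contribution $nk^{\omega-1}$ does not dominate.
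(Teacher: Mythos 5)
Your proposal is correct and follows essentially the same route as the paper: the same separator bounds of Teng et al.\ realized from scratch in $n^{2+o(1)}$ time/work, the same level-indexed recurrence $n_{j+1}\le n_j+s_j$ truncated at node size $\Theta(k)$, the same geometric-series convergence via $\omega\frac{d-1}{d}>1$, and the same bottom-level bound $nk^{\omega-1}\le n^{\omega-\omega/d}k^{\omega/d}$. No substantive differences from the paper's argument.
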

Observe that in centralized setting, the separator decomposition $T_G$ can be computed using the 
algorithm of Eppstein et al.~\cite{EppsteinMillerTeng95} in $\tilde O(m)$ time or by the algorithm of~\cite{ChuzhoyS21}, 
while in parallel setting 
we use a randomized algorithm of Miller et al.~\cite{MillerTengVavasis91}
(see Theorem $2.5$ in~\cite{MillerTengVavasis91}).
In either case, the decomposition is computed within claimed resource bounds.

See also Appendix~\ref{sec:finiteElemnt} for an analysis of finite element graphs. The result is summarized in the following corollary:
\begin{cor}\label{cor:finiteElement}
    Given a directed weighted graph with non-negative edge weights and aspect ratio $W$, such that its undirected unweighted skeleton is an
    $n$-vertex $k$-finite element graph, for a parameter $k = k(n) \ge 4$, along with a planar
    embedding of its skeleton, our algorithm constructs an $O(\log n)$-shortcut for it in randomized polylogarithmic time and using $(\sqrt n k)^{\omega} + n^{1 + \epsilon} = (\sqrt n k)^{\omega}$ processors. For a $((1 + \epsilon), O(\log n))$-hopset, for a parameter $\epsilon > 0$, the centralized running time and the number of processors are multiplied by $O_{\epsilon} (\log W) \cdot \log^{O(1)} n$.
\end{cor}

\section{Reachabilities and Approximate Distances in Restricted Graph Families}\label{sec:reachShortFamilies}
 In this section we employ the algorithms for building shortcuts/hopsets in various graph families (developed in Section~\ref{sec:Shorcuts-GraphFamilies}). Based on them we devise algorithms for computing multi-source reachabilities and approximate distances in directed weighted graphs whose underlying undirected graphs are from one of these families.
 
\subsection{Reachabilities and Approximate Distances in Graphs with Bounded Genus}\label{sec:reachShortBoundedGenus}
Graphs with bounded genus $g \le n$
 have $O(n + g) = O(n)$ edges~\cite{GILBERTHT84}.
 As our algorithms require $\Omega(n^2)$ time, they do not improve upon naive shortest path 
 computation the centralized setting for this graph family.
We therefore analyze them only in the parallel setting.
Assume that we are given a tree decomposition $T_G$ in advance. (To the best of our knowledge,
there is currently no known efficient \textsf{PRAM} algorithm for computing recursive separators in graphs with bounded genus.) By Corollary~\ref{cor:BoundedGenusTime}, our $S \times V$ reachability algorithm for $|S| = n^{\sigma}$, $0 < \sigma < 1$, 
requires $(ng)^{\omega/2} + n^{\omega(\sigma)}$ work and polylogarithmic time.
(For distance computation, the work is multiplied by $O(\log W) \cdot \tilde O(1/\epsilon)\cdot \polylog(n)$.)
There are two known solutions for this problem.
First, a parallel version of Zwick's algorithm~\cite{APSPDirectedZwick}
requires $O_{\epsilon}(\log W)\cdot n^{\omega}$ work and polylogarithmic time.
(For reachabilities, it is $n^{\omega}$ work.)
This algorithm computes, however, \emph{all pairs} approximate distances/reachabilities,
and it does not assume that a separator decomposition $T_G$ is in place.

The second existing solution is Cohen's algorithm~\cite{Cohen93},
that like our algorithm, assumes that decomposition $T_G$ is in place, 
and computes $S \times V$ reachabilities/approximate distances.
Cohen's algorithm computes the same shortcut/hopset as our algorithm and then runs 
Bellman-Ford explorations from $|S| = n^{\sigma}$ sources.
The overall number of processors used in that algorithm is 
\begin{align*}
    C(n, g, \sigma) = (ng)^{\omega/2  } + O((ng + (n + g)) \cdot n^{\sigma}) =  (ng)^{\frac{\omega}{2}} + O(n^{1 + \sigma} \cdot g).
\end{align*}
(Observe that the size of the shortcut/hopset is $(O(\sqrt {ng}))^2 = O(ng)$,
and the number of edges in the input graph is $O(n + g)$.)

For our bound
\begin{align*}
    R(n, g, \sigma) = (ng)^{\frac{\omega}{2} } + n^{\omega(\sigma) },
\end{align*}
to improve upon $C(n,g,\sigma)$, we need
\begin{align*}
    (ng)^{\frac{\omega}{2}} \ll n^{1 + \sigma} \cdot g, \text{~~i.e.,~~} 
                          g \ll n^{\frac{2 + 2\sigma - \omega}{\omega -2}} = n^{\frac{2\sigma}{\omega -2} -1}.    
\end{align*}
We write $g = n^{\xi}$, for a parameter $0 < \xi < 1$.
The condition becomes $\xi < \frac{2\sigma}{\omega -2} - 1$.
For this condition to be feasible we need
\begin{align*}
 \frac{2\sigma}{\omega -2} >1, \text{~~i.e.,~~} \sigma > \frac{\omega -2}{2} \approx 0.185\ldots
\end{align*}
In addition, for $R(n, \sigma ,g)$ to be smaller than $C(n, \sigma, g)$,
we also need
\begin{align*}
    n^{1 + \sigma} \cdot g &= n^{1 + \sigma + \xi} \gg n^{\omega(\sigma)}, \text{~~i.e.,~~}
                       \xi > \omega(\sigma) - \sigma -1.
\end{align*}
For every value $\frac{\omega - 2}{2} < \sigma < 1$, we define the interval
\begin{align*}
    J'_{\sigma} = (\omega(\sigma) - \sigma -1, \frac{2 \sigma}{\omega -2} -1),
\end{align*}
such that for every value $\xi \in J'_{\sigma}$, $R(n, g, \sigma) < C(n, g, \sigma)$.
If we write $1 + \xi = 2\rho$
(i.e,. $n \cdot g = n^{1 + \xi} = n^{2\rho}$, and $n^{\rho}$ is the size
of the recursive separator), then we obtain the interval $J_{\sigma} = (\frac{\omega(\sigma) - \sigma}{2}, \frac{\sigma}{\omega -2})$.
(This is exactly the interval of values of $\rho$ for which, for a given $\sigma$,
our algorithm improves upon existing solutions on graphs with balanced recursive separator $n^{\rho}$.
See Theorem~\ref{thm: separator1} and Table~\ref{tab:rhoInterval2}.)

In the range $\frac{\omega -2}{2} < \sigma < \alpha$, we have $\omega(\sigma) = 2$.
We get 
\begin{align*}
    \omega(\sigma) - \sigma - 1 = 1 - \sigma < \xi < \frac{2}{\omega - 2} \cdot  \sigma - 1.
\end{align*}
This is feasible when 
\begin{align*}
    \sigma  > \frac{2\cdot (\omega -2)}{\omega} = 0.312\ldots
\end{align*}
This is the threshold value such that $J'_{\sigma} \neq \phi$, i.e., when there is a non-empty interval of values of the genus-exponent $\xi$ for which our algorithm improves upon the state-of-the-art solution. See Table~\ref{tab:rhoInterval2} for the values of intervals $J'_{\sigma}$ for sample values of $\sigma$.

Next, we compare some specific values of the exponent of our algorithm with the state-of-the-art exponent in the range we improve (see Table~\ref{tab:comparisonRho2}).
Recall that our exponent is 
\begin{align}\label{eq:boundedGenusExponent}
F(\sigma, \xi) = \max\{\frac{\omega(1 + \xi)}{2}, \omega(\sigma) \},
\end{align}
(because our number of processors is $(ng)^{\omega/2 + o(1)} + n^{\omega(\sigma) + o(1)}$.)
The state-of-the-art exponent is $ C'(\sigma, \xi) = \min\{\omega, C(\sigma, \xi)\}$, with 
\begin{align}\label{eq:bgCohen}
    C(\sigma, \xi) = \max\{\frac{\omega(1 + \xi)}{2}, 1 + \sigma + \xi \}.
\end{align}
(The number of processors in Cohen's Algorithm~\cite{Cohen93} is $\min \{ n^{\omega + o(1)}, (n \cdot g)^{\omega/2 + o(1)} + O((n\cdot g) \cdot n^{\sigma})\}$.)
\begin{theorem}\label{thm:approxDistBoundedGenus}
Let $\epsilon > 0$, $W \ge 1$ be parameters.
Consider an $n$-vertex directed weighted graph $G = (V,E)$ of genus $g = n^{\xi}$
and a set $S \subseteq V$ of $|S| = n^{\sigma}$ sources, with non-negative edge weights 
and aspect ratio $W$, and a tree decomposition $T_G$ for $G$ with bags of size $\tilde O(\sqrt {ng})$. Then our algorithm solves $S \times V$ reachability problem 
in parallel polylogarithmic time and $n^{F(\sigma, \xi)}$ work,
where $F(\sigma, \xi)$ is given by Equation~\eqref{eq:boundedGenusExponent}.
For $S \times V$ distance computation problem this work complexity needs to be 
multiplied by $\tilde O _{\epsilon} (\log W) \cdot \log^{O(1)} n$.

Our exponent $F(\sigma, \xi)$ improves the previous state-of-the-art exponent $C'(\sigma, \xi)$ of algorithm of~\cite{Cohen93} (which also requires $T_G$ to be provided as input)
in a wide range of parameters. Specifically, for every $\sigma > \frac{2(\omega - 2)}{\omega}$, there is a non-empty interval $J'_{\sigma} = ( \omega(\sigma) - \sigma - 1, \frac{2 \sigma}{\omega - 2} - 1)$ such that for every $\xi \in J'_{\sigma}$, our 
exponent $F(\sigma, \xi)$ is smaller than $C'(\sigma, \xi)$.
See Tables~\ref{tab:rhoInterval2} and~\ref{tab:comparisonRho2} for further details.
\end{theorem}

\subsection{Reachabilities and Approximate Distances in Graphs with Bounded Excluded Minor}\label{sec:reachDistExcludedMinor}
In this section we show that for $n$-vertex $m$-edge graphs with excluded minor $K_h$
the centralized version of our algorithm provides improved bounds for $S \times V$ reachabilities
and $(1 + \epsilon)$-approximate distance computations, in certain ranges of parameters 
$m = n^{\mu}$, $h = n^{\eta}$ and $|S| = n^{\sigma}$.
Notably, our algorithm for graphs with bounded excluded minor does not need to receive
as input a separator decomposition tree $T_G$. Rather it can compute it within its time bound even if it starts \emph{from scratch}.

It is known~\cite{Thomason82, Kostochka84} that $K_h$-minor-free graphs have degeneracy $O(h \sqrt {\log h})$.
It follows that the number of edges $m$ satisfies $m \le O(n h \sqrt {\log h})$,
i.e., (up to polylogarithmic factors) 
\begin{equation}\label{eq:excludedMinorSize1}
\mu \le 1 + \eta.
\end{equation}
There is a long line of research concerning efficient constructions of balanced separators in $K_h$-minor-free graphs~\cite{AlonSeymourThomas, PlotkinRaoSmith, KawarabayashiR10, WulffNisen11,ReedW09}.
We will use the following result due to Wulff-Nilsen~\cite{WulffNisen11}.
\begin{theorem}[\cite{WulffNisen11}, Corollary 1]
Given a vertex-weighted graph $G$ with $m$ edges and $n$ vertices, given $h \in N$, 
and a constant $\delta > 0$, there exists a deterministic algorithm with running time $O(m + h n^{3/2 + \delta})$
that either produces a $K_h$-minor or finds a separator of size $O(h \sqrt {n \log n})$.
\end{theorem}
For a constant $\eta < 1/2$, we can set $\delta = 1/2 - \eta$, and obtain an algorithm with running time $O(n^2)$ that builds a balanced separator of size $\tilde O(n^{1/2 + \eta})$ in $K_h$-minor-free graphs ($h = n^{\eta}$).
By using this algorithm recursively one obtains the separator decomposition $T_G$ within the same time of $O(n^2 + m \cdot \log n) = \tilde O(n^2)$. Alternatively, we can use here an approximation algorithm from~\cite{BernsteinGS21} or from~\cite{ChuzhoyS21} to build a tree
decomposition with essentially the same (up to a factor of $\log^{O(1)} n$) treewidth.

By Corollary~\ref{cor:excludedMinor}, the running time of our algorithm for computing $S \times V$-reachabilities on $n$-vertex $m$-edge $K_h$-minor-free graphs (with $|S| = n^{\sigma}$, $h = n^{\eta}$) is 
\begin{align*}
    n^{\omega \cdot (\eta + 1/2) + o(1)} + n^{\omega(\sigma) + o(1)}.
\end{align*}
(Note that the size of the separator is $O(n^{\rho}) = O(n^{\eta + 1/2} \cdot \sqrt{\log n})$,
i.e., $\rho = \eta + 1/2 + o(1)$.)
For $(1 + \epsilon)$-approximate distance computation, the running time is the same, up to 
a multiplicative factor of $O_{\epsilon}(\log W) \cdot \polylog n$.

Denote the exponent of this running time by
\begin{equation}\label{eq:gEtaSigma}
g(\eta, \sigma) = \max\{\omega \cdot (\eta + 1/2) , \omega(\sigma) \}.
\end{equation}
The running time of the naive algorithm for computing reachability/exact distances in these graphs is
$ \min \{\tilde O(n^{\mu + \sigma}) , n^{\omega + o(1)} \}, \text{ where } n^{\mu} = m$.
Denote the exponent of this running time by 
\begin{align*}
    N(\mu, \sigma) = \min \{\mu + \sigma, \omega \}.
\end{align*}
The following basic property of the function $\omega(\cdot)$ is provided without proof.
\begin{fact}
       If $\mu \le \omega - 1$, then for any $\sigma \in [0,1]$, we have
         \begin{align*}
              \mu + \sigma \le \omega(\sigma).
         \end{align*}    
\end{fact}
Hence, from now on we only consider the case $\mu > \omega - 1$.
Then there exists a value $\sigma'_{\mu}$ ($0 < \sigma'_{\mu} < 1$) such that
$\omega(\sigma'_{\mu}) = \mu + \sigma'_{\mu}$.

\begin{lemma}
    For all $\sigma'_{\mu} < \sigma \le 1$, $\omega(\sigma) < \mu + \sigma$.
\end{lemma}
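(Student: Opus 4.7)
The plan is to argue by convexity of $\omega(\cdot)$. Let $f(\sigma) = \mu + \sigma - \omega(\sigma)$; we wish to show $f(\sigma) > 0$ on $(\sigma'_{\mu}, 1]$. Since $\omega(\cdot)$ is convex (a standard property of rectangular matrix multiplication exponents, used implicitly throughout the excerpt, e.g., in Claim~\ref{clm:slope} and in the proof of Lemma~\ref{lem:thresh}), the function $-\omega(\sigma)$ is concave, and adding the affine function $\mu + \sigma$ preserves concavity. Hence $f$ is concave on $[0,1]$.

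First I evaluate $f$ at the two relevant endpoints of the interval $[\sigma'_{\mu}, 1]$. By definition of $\sigma'_{\mu}$, we have $f(\sigma'_{\mu}) = \mu + \sigma'_{\mu} - \omega(\sigma'_{\mu}) = 0$. At $\sigma = 1$, using the hypothesis $\mu > \omega - 1$, we get
\begin{equation*}
f(1) = \mu + 1 - \omega(1) = \mu + 1 - \omega > 0.
\end{equation*}

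Next I use concavity of $f$: for any $\sigma \in (\sigma'_{\mu}, 1)$, writing $\sigma = (1-\lambda) \sigma'_{\mu} + \lambda \cdot 1$ with $\lambda = \frac{\sigma - \sigma'_{\mu}}{1 - \sigma'_{\mu}} \in (0,1)$, concavity gives
\begin{equation*}
f(\sigma) \;\ge\; (1-\lambda) f(\sigma'_{\mu}) + \lambda f(1) \;=\; \lambda \cdot (\mu + 1 - \omega) \;>\; 0,
\end{equation*}
and at $\sigma = 1$ the inequality $f(1) > 0$ is already established. Hence $\omega(\sigma) < \mu + \sigma$ for every $\sigma \in (\sigma'_{\mu}, 1]$, as required.

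The only potentially subtle point is the appeal to convexity of $\omega(\cdot)$; this, however, is a classical property (a linear combination of two exponents corresponding to achievable matrix multiplication schemes is itself achievable by tensor product/block constructions) and has already been invoked freely in the paper (see the convexity arguments supporting Lemma~\ref{lem:thresh} and Claim~\ref{clm:slope}), so no new ingredient is needed. The rest is routine one-variable calculus.
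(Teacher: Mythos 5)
Your proof is correct and rests on the same key ingredient as the paper's own proof: convexity of $\omega(\cdot)$ combined with the values of $f(\sigma) = \mu + \sigma - \omega(\sigma)$ at the two endpoints $\sigma'_{\mu}$ (where $f = 0$ by definition) and $1$ (where $f > 0$ precisely because $\mu > \omega - 1$, which is in force at this point of the paper). The only difference is presentational: you argue directly via the chord inequality for the concave function $f$, whereas the paper argues by contradiction through a hypothetical second crossing point $\sigma^*$; your direct version is, if anything, slightly cleaner, since the strict bound $f(\sigma) \ge \lambda f(1) > 0$ follows from $\lambda > 0$ alone and needs no strict-convexity caveat.
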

\begin{proof}
    For $\sigma = 0$, $\omega(0) = 2 \ge \mu + 0 = \mu$.
    For $\sigma = 1$, $\omega(1) < \mu + 1$.\\
    Both functions are continuous and monotonically non-decreasing.
    Thus, there exists $\sigma'_{\mu}$  such that $\omega(\sigma'_{\mu}) = \mu + \sigma'_\mu$.
    Suppose for contradiction that there exists some $\sigma''$, such that $\sigma'_{\mu} < \sigma'' <1$ such that
    $\omega(\sigma'') \ge \mu + \sigma''$.
    But then there also exists $\sigma^{*}$, $\sigma'' < \sigma^* <1$,
    such that $\omega(\sigma^*) = \mu + \sigma^*$.
    (because $\omega(1) < \mu + 1$.)
    By convexity of the function $\omega(\cdot)$, we have $\omega(\sigma) < \mu + \sigma$
    for all $\sigma'_{\mu} < \sigma < \sigma^*$.
    However, $\sigma'_{\mu} < \sigma'' < \sigma^*$ and 
    $\omega(\sigma'') \ge \mu + \sigma''$, contradiction.
\end{proof}
In addition to the condition $\sigma > \sigma'_{\mu}$, we also require $\omega \cdot (\eta + 1/2) < \mu + \sigma$.
(As otherwise our algorithm is slower than the naive one.)
It follows that 
\[\sigma > \omega(\eta + 1/2) - \mu.\]
We also have $\eta \ge \mu - 1$, as graphs that exclude $K_{n^{\eta}}$-minor have $\tilde O(n^{1 + \eta})$ edges
~\cite{Kostochka84, Thomason82}.
Hence $1/2 > \eta \ge \mu -1 > \omega -2$, and $\omega - 1 < \mu \le 1 + \eta$, and 
\begin{align}\label{eq: muOmegaEta1}
    \sigma > \max\{\sigma'_{\mu}, \omega \cdot (\eta + 1/2) - \mu \}.
\end{align}
Denote by $\sigma_{thr}$ the right-hand-side of Inequality~\eqref{eq: muOmegaEta1}, i.e., 
\begin{align}\label{eq:sigmaThrExdMinor}
    \sigma_{thr} = \max\{\sigma'_{\mu}, \omega \cdot (\eta + 1/2) - \mu \}.
\end{align}
This range of $\sigma$ is not empty for all $\mu, \eta$ as above.
Indeed for $\eta < 1/2$, we have $\mu > \omega - 1 > \omega \cdot(\eta + 1/2) -1$.
Hence,
\begin{align}\label{eq:sigmaThrExdMinor1}
          \begin{aligned}
                 \omega \cdot (\eta + 1/2) - \mu < 1.
          \end{aligned}
\end{align}
In Tables~\ref{tab:sigmaThr1}-\ref{tab:sigmaThr4} in Appendix~\ref{sec:tablesExcludedMinor} we provide some values of $\sigma_{thr}$, $g(\eta, \sigma)$ and $N(\mu, \sigma)$ for some sample values of $\eta, \mu, \sigma$ within these ranges.
\begin{theorem}\label{thm:excludedMinorDistances}
 Suppose that we are given parameters $\epsilon > 0$, $W \ge 1$, 
 and a directed weighted $n$-vertex $m$-edge graph $G = (V,E)$, $m = n^{\mu}$,
 with non-negative edge weights and aspect ratio $W$, that excludes minor $K_h$,
 $h = n^{\eta}$, such that $\omega - 2 < \eta < 1/2$ and $\omega - 1 < \mu \le 1 + \eta$.
 Suppose that we are also given a subset $S \subseteq V$ of sources, $|S| = n^{\sigma}$.
 Suppose that $\sigma > \sigma_{thr}$, where $\sigma_{thr}$ is given by Equation~\eqref{eq:sigmaThrExdMinor}.
 Then our centralized algorithm computes $S \times V$-reachability and $(1 + \epsilon)$-approximate
 $S \times V$ distances in $\tilde O(n^{g(\eta, \sigma)})$ time,
 with $g(\eta, \sigma)$ given by~\eqref{eq:gEtaSigma}, from scratch.
 This exponent improves the state-of-the-art exponent $N(\sigma, \mu) = \min\{\mu + \sigma , \omega\}$
 for this problem in the aforementioned range of parameters.
 See Tables~\ref{tab:sigmaThr1}-\ref{tab:sigmaThr4} for more details.
\end{theorem}

The same bounds apply also to $n$-vertex $m$-edge $k$-finite element graphs.
In this case, $k = n^{\eta}$, $m = n^{\mu}$, $|S| = n^{\sigma}$.

\subsection{Reachabilities and Approximate Distances in Geometric Graphs}\label{sec:reachDistGeometric}
In this section we analyze the performance of our $S \times V$-direachability/$(1 + \epsilon)$-approximate distance computation algorithm on geometric graphs in centralized and parallel settings, and compare it 
with the state-of-the-art solutions. Specifically, we consider here $k$-NN graphs and intersection (and $r$-overlap) graphs of $k$-neighborhood systems, and more generally, $k$-ply neighborhood systems in dimension $d$.
Both $d$ and $r$ are assumed to be universal constants. See Section~\ref{sec:GeometricGraphs} for more details about these graph families.

For $d = 2$, the size of the separator is $O(k^{1/2} \cdot n^{1/2})$.
By Corollary~\ref{cor:knn}, the running time of our algorithm for building the separator decompositions 
is $\tilde O(n^{\omega/2} \cdot k^{\omega/2} + n^{2 + o(1)})$,
i.e., the running time of our $S \times V$-reachability algorithm is $O((n^{\omega/2} \cdot k^{\omega/2} + n^{\omega(\sigma)})$.
Denote $k = n^{q}$, $0 < q <1$.
The exponent $F(\sigma, q)$ of the running time of our algorithm is therefore
\begin{align}\label{eq:GeometricDistanceTime}
        F(\sigma, q) = \max \{\omega \cdot \left(\frac{1 + q}{2}\right), \omega(\sigma) \},
\end{align}
While the exponent of running time of the state-of-the-art algorithm on $m = n^{\mu}$ edge graphs is
\begin{align}\label{eq:GeometricDistanceTimeSOA}
    N(\sigma, \mu) = \min\{\omega, \mu + \sigma \}.
\end{align}
(In the parallel setting, the exponent of the number of processors is 
$C(\sigma, q) = \min \{\omega, 1 + q + \sigma \}$.)

The comparison of our exponent with the state-of-the-art 
parallel ($C(\sigma, q)$) exponent is now the same as in Section~\ref{sec:SeparatorDAvail}, except that $\rho$
has to be replaced by $\frac{1 + q}{2}$.
In other words, $q = 2\rho -1$.
See Tables~\ref{tab:rhoInterval2} and~\ref{tab:comparisonRho2} for details.
The following theorem summarizes the properties of our parallel algorithm for geometric graphs. 
\begin{theorem}\label{thm:geometricTimeDim2}
    Let $\epsilon > 0$, $W \ge 1$ be parameters.
    Given an $n$-vertex directed weighted graph $G = (V,E)$ with non-negative weights and 
    aspect ratio $W$, whose underlying undirected graphs is an $r$-overlap graph ($r = \Theta(1))$)
    of a $k$-ply neighborhood system in dimension $d = 2$, $k = n^q$, and a subset $S \subseteq V$ of $|S| = n^{\sigma}$ sources, our parallel algorithm for $S \times V$ reachability has polylogarithmic running time and work $n^{F(\sigma, q)}$, where $F(\sigma, q)$ is given by~\eqref{eq:GeometricDistanceTime}.
    For $(1 + \epsilon)$-approximate $S \times V$ distance computation the work complexity grows by a factor
    of $O_{\epsilon}(\log W)\cdot \log^{O(1)} n$.

    Our algorithm outperforms the previous state-of-the-art algorithm due to~\cite{Cohen93} in a wide range of
    parameters $\sigma$ and $q$.
    Specifically, for every $\frac{2(\omega - 2)}{2} < \sigma < 1$, there is a non-empty interval 
    $\hat J_{\sigma} = ( \omega(\sigma) - \sigma - 1, \frac{2 \sigma}{\omega -2} - 1)$,
    such that for every $q \in \hat J_{\sigma}$, our exponent $F(\sigma, q)$ is better than the state-of-the-art 
    exponent $C(\sigma, q)$ of~\cite{Cohen93}.
    Both algorithms are randomized (because the parallel algorithm of~\cite{MillerTengVavasis91, teng1991points}
    for computing the tree decomposition is randomized), and work \emph{from scratch}.
    See Tables~\ref{tab:rhoInterval2} and~\ref{tab:comparisonRho2} for more details.
\end{theorem}
Next, we compare the performance of our algorithm with previous state-of-the-art (in two dimensions) in the \emph{centralized} setting.
First note that $\omega(\sigma) < \sigma + \mu$, for every $\sigma > \sigma'_{\mu}$.
(For $\sigma'_{\mu}$ to exist, one needs $\mu > \omega -1$.)
Also, $\omega \cdot \frac{1 + q}{2} < \sigma + \mu$ for $\sigma > \omega \cdot \frac{1 + q}{2} -\mu$.

Let
\begin{align}\label{eq:sigmaThreshGeometric1}
    \sigma'_{thr} = \sigma'_{thr}(\mu, q) = \max \{\sigma'_{\mu}, \omega \cdot \frac{1 + q}{2} - \mu \}.
\end{align}
We therefore have $F(\sigma, q) < N(\sigma, q)$ as long as $\sigma > \sigma'_{thr}(\mu, q) $.
Observe that $\omega \cdot \frac{1+ q}{2} - \mu \le \omega - \mu < 1$, as long as $\mu > \omega -1$.
Thus, for $\mu > \omega -1$, $\sigma'_{thr}(\mu, q) < 1$, i.e., 
the interval $(\sigma'_{thr}(\mu, q),1)$ of values of $\sigma$ for which 
$F(\sigma, q) < N(\sigma,q)$ is not empty for all $\omega - 1 < \mu \le 1 + q $.
(Note that $k$-NN graphs and intersection and overlap graphs of $k$-ply neighborhood systems 
have $m = n^{\mu} \le O(k n) = O(n^{1 + q})$ edges, and thus $\mu \le 1 + q$.)
\begin{theorem}\label{thm:geometricTimeDim2Cetlzd}
    Let $\epsilon$, $W$, $G$, $S$, $\sigma$, $q$, $r$ be as in Theorem~\ref{thm:geometricTimeDim2}. Denote by $m = n^{\mu}$ the number of edges in $G$.
    Our centralized algorithm for $S\times V$-reachability/$(1 + \epsilon)$-approximate 
    distance computation requires $n^{F(\sigma, q)}$ deterministic time 
    (respectively, $O_{\epsilon}(\log W) \cdot n^{F(\sigma, q) }$ time), 
    and works \emph{from scratch}.
    For all $\omega - 1 < \mu \le 1 + q$ there is a value $\sigma'_{thr} = \sigma'_{thr}(\mu, q)$ given by Equation~\eqref{eq:sigmaThreshGeometric1} such that for all $\sigma \in (\sigma'_{thr}, 1)$, $F(\sigma, q) < N(\sigma, \mu) $, i.e., our algorithm outperforms the state-of-the-art solution for this problem. See Tables~\ref{tab:GeometricDim2Comp1} and~\ref{tab:GeometricDim2Comp2} for specific values of the two exponents.
\end{theorem}

\begin{table}[h!]
  \captionsetup{font=scriptsize}
  \begin{center}
  {\small 
    \begin{tabular}{l|c|c|c|c|c} 
      $q$ & $\mu$ & $\sigma'_{thr}(q,\mu)$ & $\sigma$ &  $F(\sigma, \rho)$ & $N(\sigma, \mu)$\\
      \hline
      	       0.4 & 1.4 & 0.875 & 0.9 & $\omega(0.9) = 2.2942$ & 2.3\\ \hline
               0.5 & 1.4 & 0.875 & 0.9 & $\omega(0.9)$ & 2.3\\
                   & 1.5 & 0.585 & 0.6 & $\omega(0.6) = 2.0926$ & 2.1\\
                                &&& 0.7 & $\omega(0.7) = 2.1530$ & 2.2\\
                                &&& 0.8 & $\omega(0.8) = 2.2209$ & 2.3\\
                                &&& 0.9 & $\omega(0.9)$ & $\omega$\\ \hline
             0.6 & 1.4 & 0.875 & 0.9 & $\omega(0.9)$ & 2.3\\
                 & 1.5 & 0.585 & 0.6 & $\omega(0.6)$ & 2.1 \\
                              &&& 0.7 & $\omega(0.7)$ & 2.2 \\
                              &&& 0.8 & $\omega(0.8)$ & 2.3 \\
                              &&& 0.9 & $\omega(0.9)$ & $\omega$ \\
                & 1.6 & 0.435 & 0.5 & $\omega(0.5) = 2.0430$ & 2.1 \\
                                 &&& 0.6 & $\omega(0.6)$ & 2.2 \\
                        \rowcolor{Gray}
                                  &&& 0.7 & $\omega(0.7)$ & 2.3 \\
                        \rowcolor{Gray}
                                   &&& 0.8 & $\omega(0.8)$ & $\omega$ \\
                                    &&& 0.9 & $\omega(0.9)$ & $\omega$\\ \hline
              0.7 & 1.4 & 0.875 & 0.9 & $\omega(0.9)$ & 2.3\\
                  & 1.5 & 0.585 & 0.6 & $\omega(0.6)$ & 2.1\\
                               &&& 0.7 & $\omega(0.7)$ & 2.2 \\
                               &&& 0.8 & $\omega(0.8)$ & 2.3 \\
            \rowcolor{Gray}
                                 &&& 0.9 & $\omega(0.9)$ & $\omega$ \\
                  & 1.6 & 0.435 & 0.5 & $\omega(0.5)$ & 2.1\\
                                &&& 0.6 & $\omega(0.6)$ & 2.2 \\
                        \rowcolor{Gray}
                                &&& 0.7 & $\omega(0.7)$ & 2.3 \\
                        \rowcolor{Gray}
                                &&& 0.8 & $\omega(0.8)$ & $\omega$ \\
                                &&& 0.9 & $\omega(0.9)$ & $\omega$ \\
                 & 1.7 & 0.315 & 0.4 & $\omega \cdot 0.85 = 2.0158$ & 2.1\\
                    \rowcolor{Gray}
                               &&& 0.5 & $\omega(0.5)$ & 2.2 \\
                    \rowcolor{Gray}
                               &&& 0.6 & $\omega(0.6)$ & 2.3 \\
                    \rowcolor{Gray}
                               &&& 0.7 & $\omega(0.7)$ & $\omega$ \\
                    \rowcolor{Gray}
                               &&& 0.8 & $\omega(0.8)$ & $\omega$ \\
                               &&& 0.9 & $\omega(0.9)$ & $\omega$ \\
  \end{tabular}
  }
  \end{center}
    \caption{ This table and Table~\ref{tab:GeometricDim2Comp2} provide a comparison between             the exponent                        $F(\sigma, q)$ 
              of the running time of our centralized algorithm for $S \times V$-reachability 
              (and $(1 + \epsilon)$-approximate distance computation) on $k$-NN graphs and 
              intersection (and overlap) graphs of $k$-ply neighborhood systems in two dimensions, with $k = n^q$, $|S| = n^{\sigma}$, with the exponent $N(\sigma, \mu)$ of the previous state-of-the-art algorithm.
              Highlighted rows indicate particularly significant improvements.
              Values for $q \ge 0.8$ appear in Table~\ref{tab:GeometricDim2Comp2}.
              }
    \label{tab:GeometricDim2Comp1}
\end{table}

\begin{table}[h!]
  \captionsetup{font=scriptsize}
  \begin{center}
  {\small 
    \begin{tabular}{l|c|c|c|c|c} 
      $q$ & $\mu$ & $\sigma'_{thr}(q,\mu)$ & $\sigma$ &  $F(\sigma, \rho)$ & $N(\sigma, \mu)$\\
      \hline
      	       0.8 & 1.4 & 0.875 & 0.9 & $\omega(0.9)$ & $\omega$\\ \hline
                   & 1.5 & 0.585 & 0.6 & $\omega \cdot 0.9 = 2.1344$ & 2.1\\
                                &&& 0.7 & $\omega(0.7)$ & 2.2\\
                                &&& 0.8 & $\omega(0.8)$ & 2.3\\
                                &&& 0.9 & $\omega(0.9)$ & $\omega$\\ \hline
                   &1.6 & 0.5344 & 0.6 & $\omega \cdot 0.9$ & 2.2\\
                    \rowcolor{Gray}
                                &&& 0.7 & $\omega(0.7)$ & 2.3\\
                    \rowcolor{Gray}
                                &&& 0.8 & $\omega(0.8)$ & $\omega$ \\
                                &&& 0.9 & $\omega(0.9)$ & $\omega$ \\ \hline
                   &1.7 & 0.4344 & 0.5 & $\omega \cdot 0.9$ & 2.2\\
                      \rowcolor{Gray}
                                &&& 0.6 & $\omega \cdot 0.9$ & 2.3\\
                       \rowcolor{Gray}
                                &&& 0.7 & $\omega(0.7)$ & $\omega$ \\
                        \rowcolor{Gray}
                                &&& 0.8 & $\omega(0.8)$ & $\omega$ \\
                                &&& 0.9 & $\omega(0.9)$ & $\omega$ \\ \hline
                  &1.8 & 0.3344 & 0.4 & $\omega \cdot 0.9$ & 2.2\\
                      \rowcolor{Gray}
                               &&& 0.5 & $\omega \cdot 0.9$ & 2.3\\
                      \rowcolor{Gray}
                                &&& 0.6 & $\omega \cdot 0.9$ & $\omega$\\
                       \rowcolor{Gray}
                                &&& 0.7 & $\omega(0.7)$ & $\omega$ \\
                        \rowcolor{Gray}
                                &&& 0.8 & $\omega(0.8)$ & $\omega$ \\
                                &&& 0.9 & $\omega(0.9)$ & $\omega$ \\ \hline
        0.9 & 1.4 & 0.875 & 0.9 & $\omega(0.9)$ & 2.3\\ 
            & 1.5 & 0.7530 & 0.8 & $\omega \cdot 0.95 = 2.2530$ & 2.3\\ 
                           &&& 0.9 & $\omega(0.9)$ & $\omega$ \\ \hline
            & 1.6 & 0.6530 & 0.7 & $\omega \cdot 0.95$ & 2.3\\ 
                           &&& 0.8 & $\omega \cdot 0.95$ & $\omega$ \\ 
                           &&& 0.9 & $\omega(0.9)$ & $\omega$ \\ \hline
             & 1.7 & 0.5530 & 0.6 & $\omega \cdot 0.95$ & 2.3\\ 
                           &&& 0.7 & $\omega \cdot 0.95$ & $\omega$ \\ 
                           &&& 0.8 & $\omega \cdot 0.95$ & $\omega$ \\ 
                           &&& 0.9 & $\omega(0.9)$ & $\omega$ \\ \hline
            & 1.8 & 0.4530 & 0.5 & $\omega \cdot 0.95$ & 2.3\\ 
                             &&& 0.6 & $\omega \cdot 0.95$ & $\omega$ \\ 
                            &&& 0.7 & $\omega \cdot 0.95$ & $\omega$ \\ 
                           &&& 0.8 & $\omega \cdot 0.95$ & $\omega$ \\ 
                           &&& 0.9 & $\omega(0.9)$ & $\omega$ \\ \hline
            & 1.9 & 0.3530 & 0.4 & $\omega \cdot 0.95$ & 2.3\\ 
                           &&& 0.5 & $\omega \cdot 0.95$ & $\omega$\\ 
                            &&& 0.6 & $\omega \cdot 0.95$ & $\omega$ \\ 
                            &&& 0.7 & $\omega \cdot 0.95$ & $\omega$ \\ 
                           &&& 0.8 & $\omega \cdot 0.95$ & $\omega$ \\ 
                           &&& 0.9 & $\omega(0.9)$ & $\omega$ \\ \hline            
  \end{tabular}
  }
  \end{center}
    \caption{ Continuation of Table~\ref{tab:GeometricDim2Comp1}. 
               Contains a comparison between             the exponent                        $F(\sigma, q)$ 
              of the running time of our centralized algorithm for $S \times V$-reachability 
              (and $(1 + \epsilon)$-approximate distance computation) on $k$-NN graphs and 
              intersection (and overlap) graphs of $k$-ply neighborhood systems in two dimensions, with $k = n^q$, $|S| = n^{\sigma}$, with the exponent $N(\sigma, \mu)$ of the previous state-of-the-art algorithm for $q = 0.8$ and $q = 0.9$.
              Highlighted rows indicate particularly significant improvements.               
              }
    \label{tab:GeometricDim2Comp2}
\end{table}

Similarly, for a general dimension $d \ge 2$ we have a separator of size 
$O(n^{\frac{d-1}{d}} \cdot k^{\frac{1}{d}})$, i.e., $\rho = \frac{q + (d-1)}{d}$. and thus, 
$q = d\rho - (d -1) = 1 - d(1 - \rho)$. Here the separator's exponent $\rho$ must be at least $\frac{d-1}{d}$. i.e., only those rows of Table~\ref{tab:comparisonRho2} are relevant 
for which $\rho \ge \frac{d-1}{d}$. (For dimensions $d = 2$ and $d = 3$ the conditions $\rho \ge 1/2$ and $\rho  \ge 2/3$ always 
hold for our results to be meaningful. In fact,
we even require $\rho  > \frac{\omega - 1}{2}$.
For higher dimensions $d \ge 4$, the condition $\rho \ge \frac{d-1}{d}$
becomes more stringent, and actually restricts feasibility intervals from 
Table~\ref{tab:rhoInterval2} (see Section~\ref{sec:SeparatorDAvail}).

To verify that also for any dimension $d \ge 3$, we obtain improvement in some non-trivial range of parameters, consider the densest regime of $d$-dimensional graphs with $\Theta(k \cdot n)$ edges and recursive separator of size $O( n^{\frac{d-1}{d}} \cdot k^{\frac{1}{d}}) = O(n^{\frac{d - 1 + q}{d}})$.
Then the size exponent is $\mu = 1 + q$, and the exponent of the separator is $\rho = \frac{q}{d} + \frac{d-1}{d}$.
We have 
\begin{align}\label{eq:geometricTimelarged}
        F(\sigma , q) = \max \{\omega \cdot \left(\frac{q + d-1}{d}\right) , \omega(\sigma)\}, \text{~and~}  
        N(\sigma, q) = \min \{\omega, 1 + q + \sigma\}.
\end{align}
and 
To have $\omega(\sigma) < 1 + q + \sigma$ we require $\sigma > \sigma_{ 1 + q}$.
(The threshold $\sigma_{1 + q}$ is such that $\omega(\sigma_{1 + q}) = 1 + q + \sigma_{1 + q}$.)
This value is defined as long as $q > \omega -2$.)
We also need that 
\[\omega \cdot  \frac{q + d -1}{d} < 1 + q + \sigma.\]
This holds when 
\begin{align}\label{eq:qForLarged}
    q > \frac{\omega (d -1) - d( 1 + \sigma)}{d-\omega}  = \frac{d(\omega -1) - \omega - d\sigma}{d - \omega}.
\end{align}
The right-hand-side is smaller than $1$ (and thus the range of $q$ is non-empty)
whenever $ \sigma > \omega -2$.
Hence the range of improvement is not empty for every dimension $d\ge 2$.

\begin{theorem}\label{thm:geometricGraphsTimeLarged}
Consider the $S\times V$-reachability (and $(1 + \epsilon)$-approximate distance computation) problem for $n$-vertex directed weighted graphs whose underlying unweighted undirected graph is a $k$-NN graph or an intersection (or $r$-overlap) graph of a $k$-ply neighborhood system in $d$ dimensions, for $k = n^q$, $|S| = n^{\sigma}$, $r = \Theta(1)$.
For any constant dimension $d \ge 3$, our algorithm solves this problem in $n^{F(\sigma, \rho)}$
centralized time deterministically or in parallel randomized polylogarithmic time with work as above.
(For approximate distance computation this expression needs to be multiplied by $O_{\epsilon}(\log W) \cdot \log^{O(1)} n$.) $F(\sigma, q)$ is given by Equation~\eqref{eq:geometricTimelarged}.

This exponent of centralized time (and parallel work) complexity is strictly better the previous state-of-the-art 
$N(\sigma,q)$ in the densest regime for the centralized setting and in all regimes for parallel setting as long as
$\sigma > \omega - 2$ and $q$ satisfies Inequality~\eqref{eq:qForLarged}.
The latter interval of $q$ is non-empty for all $\sigma > \omega -2$ and all $d \ge 3$.
\end{theorem}

\clearpage
\pagenumbering{gobble}
\bibliographystyle{alphaurl}
\bibliography{DirectedReachabilityBib}
\clearpage
\pagenumbering{roman} 
\setcounter{page}{1}
\pagestyle{plain}
\appendix
\section{Proof Sketch of Lemma~\ref{lem:SeparatorBoundarySizeBounds}}
\label{sec:ProofSketchCohen}
\begin{proof}
Let $s_i = \sum_{t \in T_i} |\mathcal S(t)|$, and $\ell = \lceil \log_{1 -\lambda} n \rceil$.
Observe that for any index $1 \le j \le \ell$, we have
\begin{equation}\label{eq:boundedGenusCohen2}
\sum_{t \in T_j} |\mathcal B(t)| \le O\left(\sum_{i \le j} s_i \right).
\end{equation}
To see this, observe that a boundary set may contain only vertices that belong to separators of its ancestors.
Note also that each occurrence of a vertex $v \in \mathcal S(t)$ (in some node $t \in T_G)$
contributes to at most two boundary sets of descendants of $t$ on each level.
Thus, it contributes to at most $O(1)$ nodes in a subset $T_j$, for a given index $j$.

Note also that for every node $t \in T_i$ (for a fixed index $i$), we have 
$|\mathcal V(t) | \le  (1 -\lambda)^i n$, $|\mathcal S(t)| \le  O((1 - \lambda)^{i \rho} \cdot n^{\rho})$.
It follows that
\begin{equation}\label{eq:boundedGenusCohen3}
    s_i = \sum_{t \in T_i} |\mathcal S(t)| \le O\left( \sum_{t \in T_i} (1-\lambda)^{i\rho} n^{\rho}\right).
\end{equation}
For any index $j$, let $n_j = \sum_{t \in T_j} |\mathcal V(t)|$.
Note that $n_{j + 1} \le n_j + O(s_j)$, and 
\begin{equation}\label{eq:boundedGenusCohen3-2}
      s_j = \frac{1}{1 - \lambda} \cdot O\left(\frac{n_j}{(1 - \lambda)^j n} \cdot((1 - \lambda)^j \cdot n)^{\rho}\right).
\end{equation}
The constant hidden by the $O$-notation here and in the sequel may depend on $\lambda$ and $\rho$,
which are assumed to be constant as well.
It follows that 
\begin{align*}
    n_{j+1} \le n_j \cdot(1 + O\left(\left(\left(1 - \lambda \right)^j) \cdot n\right)^{\rho -1}\right).
\end{align*}
As $j \le \ell = \lfloor \log_{\frac{1}{1-\lambda}} n \rfloor$, we conclude that
\begin{align}\label{eq:boundedGenusCohen4}
    \begin{aligned}
        n_j &\le n \cdot  \Pi_{i = 0}^{j -1} (1 + O(( 1- \lambda)^i \cdot n)^{\rho -1})\\
            & \le O(n) \cdot e^{O(1) \cdot \sum_{i =0}^{j-1}((1 - \lambda)^i n)^{\rho -1}} = O(n).
   \end{aligned}
\end{align}
By~\eqref{eq:boundedGenusCohen3} and~\eqref{eq:boundedGenusCohen4} we have, 
\begin{align}\label{eq:boundedGenusCohen5}
    \begin{aligned}
        s_i &\le O \left (\sum_{t \in T_i} (1-\lambda)^{i\rho} \cdot n^{\rho} \right) 
            = O\left(\frac{n_i}{(1-\lambda)^{i + 1}n} \cdot (1-\lambda)^{i\rho} \cdot n^{\rho} \right)\\
            &= \frac{1}{1 - \lambda} \cdot \frac{O(n)}{(1-\lambda)^in} \cdot(1-\lambda)^{i\rho} \cdot n^{\rho}
            = \frac{1}{1 -\lambda} \cdot O \left(n^{\rho} \cdot \left(1 - \lambda \right)^{i(\rho -1)}\right).
    \end{aligned}
\end{align}
We also have,
\begin{align*}
    \sum_{t \in T_i} |\mathcal S(t)|^{\omega} \le \left(\frac{1}{1 - \lambda} \right)^{\rho} \cdot O\left(\frac{s_i}{((1 - \lambda)^in)^\rho} \cdot ((1-\lambda)^in)^{\rho\omega} \right).
\end{align*}
(The sum is maximized by setting its individual terms $|\mathcal S(t)|$ as large as possible.)
By Inequality~\eqref{eq:boundedGenusCohen5}, the right-hand-side is at most
\begin{align*}
            \left( \frac{1}{1 - \lambda} \right)^{\rho + 1} \cdot  O\left(\frac{n^{\rho} \cdot (1-\lambda)^{i (\rho -1)}}{(1 -\lambda)^{i\rho} \cdot n^{\rho}} \cdot ((1-\lambda)^in)^{\rho\omega} \right)
    = \left( \frac{1}{1 - \lambda} \right)^{\rho + 1} \cdot O(n^{\rho \omega} \cdot (1 - \lambda)^{i(\rho \omega -1)}).
\end{align*}
As $\rho \ge 1/2$, we have $\rho \omega -1 > 0$, and thus, $(1-\lambda)^{\rho\omega -1} < 1$.
Hence
 \begin{align}
    \begin{aligned}
        \sum_{t \in T_G} |\mathcal S(t)|^{\omega} &\le \sum_{i =0}^{\ell} \sum_{t \in T_i} |\mathcal S(t)|^{\omega}\\
                                         & \le \left( \frac{1}{1 - \lambda} \right)^{\rho + 1} \cdot \sum_{i = 0}^{\ell} O(n^{\rho \omega})[(1- \lambda)^{\rho \omega -1}]^i \\
                                         & \le O_{\lambda, \rho}(n^{\rho \omega}) \sum_{i = 0}^{\infty} [(1- \lambda)^{\rho \omega -1}]^i\\
                                         & = O_{\lambda,\rho}(n^{\rho \omega}) = O(n^{\rho \omega}).
    \end{aligned}    
 \end{align}
 
 Next, we argue that a similar bound applies for $\sum_{t \in T_G} |\mathcal B(t)|^{\omega}$ too.
For a fixed index $0 \le i \le \ell$, we focus on $\sum_{t \in T_i} |\mathcal B(t)|^{\omega}$.
Given a node $t' \in T_i$ and its descendant $t \in T_j$, the sum of sizes of separator sets $\mathcal S(\tilde t)$
of nodes along the $t' - t$ path $p(t'-t)$ in $T_G$ is 
\begin{align}\label{eq:boundedGenus1}
    \begin{aligned}
        \sum_{\tilde t \in p(t'-t)} |\mathcal S(\tilde t)| &= O(1) \cdot [n^{\rho} (1 - \lambda)^{i \rho} + n^{\rho} \cdot (1 - \lambda)^{(i + 1) \rho} + \ldots + n^{\rho}(1 - \lambda)^{j \rho}] \\
        & = O(n^{\rho}) \sum_{k = i}^{j} [(1- \lambda)^{\rho}]^{k} = O_{\lambda, \rho}(n^{\rho}) \cdot (1 - \lambda)^{i\rho}.        
    \end{aligned}
\end{align}
Fix an index $j \in \{0,1, \ldots \ell \}.$
To maximize  $\sum_{t \in T_j} |\mathcal B(t)|^{\omega}$ under the constraint given by~\eqref{eq:boundedGenusCohen2}, we need to maximize individual terms of this sum. Let $rt$ denote the root node of the decomposition tree $T_G$.
Recall that each of the at most $s_0$ vertices of $\mathcal S(rt)$ can contribute to at most two boundary sets in each level, and thus to $O(1)$ boundary sets of nodes of $T_j$.
While doing so, $\mathcal S(rt)$ can also contribute all separator sets along $rt - t$ path in $T_G$ (for some fixed node $t \in T_j$) to $\mathcal B(t)$. However, as we have seen (Inequality~\eqref{eq:boundedGenus1} with $i = 0$), its total contribution is still $O(n^{\rho})$.

This maximizes $\mathcal B(t)$, as it can contain only vertices that belong to a separator of one of the ancestors of $t$.
Therefore, the total contribution of $\mathcal S(rt)$ to $\sum_{t \in T_j } |\mathcal B(t)|^{\omega}$ is 
$O(1) \cdot O(n^{\rho})^{\omega}  = O(n^{\rho \cdot \omega})$.
(The sum $\sum_{t \in T_j} |\mathcal B(t)|^{\omega}$ is maximized when there are $O(1)$ nodes $t \in T_j$ that have in their 
boundary sets $\mathcal B(t)$ all vertices of $\mathcal S(rt)$ and all vertices from sets $\mathcal S(\tilde t)$ of nodes on $rt - t$  path.)
This argument applies not just to $rt$, but to each of the $O((1- \lambda)^{-1})$ nodes of $T_0$, increasing the overall maximum possible contribution of $T_0$ nodes to $\sum_{t \in T_j} |\mathcal B(t)|^{\omega}$ by a constant factor, i.e., it is $O(n^{\rho \omega}) \cdot \frac{1}{1- \lambda}$.

In addition, to maximize values of $\mathcal B(t)$ for other nodes $t \in T_j$ (other than those that received all vertices of $S(rt)$ or $S(t'')$ for some $t'' \in T_0$), each node $t' \in T_1$ may contribute to $O(1)$ nodes of $T_j$, and altogether all nodes of $T_1$ can contribute to $O((1 - \lambda)^{-2})$
nodes of $T_j$ (as there are $O((1- \lambda)^{-2})$ nodes of $T_1$).
Each of these $O((1 - \lambda)^{-2})$ nodes $t'$ may contribute up to a sum of sizes of separator sets along the $t'-t$ path to $\mathcal B(t)$ of a node $t \in T_j$.

The total number of vertices in such a path is still $O((n(1-\lambda))^{\rho})$ (by Inequality~\eqref{eq:boundedGenus1} with $i =1$), and thus, this is an upper bound on the size of sets $\mathcal B(t)$ created in this way.
Their overall contribution to $\sum_{t \in T_j} |\mathcal B(t)|^{\omega}$ is, therefore, 
\begin{align*}
        O((1 - \lambda) ^{-2}) \cdot O((n (1 - \lambda))^{\rho})^{\omega} = \frac{1}{1 -\lambda} \cdot O(n^{\rho \omega} \cdot( 1- \lambda)^{\rho \omega -1}).
\end{align*}
Generally by~\eqref{eq:boundedGenusCohen5}, for any $i \le j$, each of the at most
\begin{align*}
    \frac{s_i}{(n(1-\lambda)^{i + 1})^{\rho}} = \left (\frac{1}{1 - \lambda} \right)^{1 + \rho } \cdot \frac{O(n^{\rho} (1 - \lambda)^{i (\rho -1)} )}{n^{\rho} \cdot (1- \lambda)^{i\rho} } = O((1-\lambda)^{-i}) \cdot \left ( \frac{1}{1 - \lambda} \right)^{1 + \rho}.
\end{align*}
nodes of $T_i$ contributes to $O(1)$ nodes of $T_j$. 
Altogether they may contribute to $O_{\lambda, \rho}((1 - \lambda)^{-i}) = O((1 - \lambda)^{-i})$
nodes of $T_j$ whose boundary set is created in this way.
Each of these $T_i$-nodes $t'$ may contribute to $O(1)$ boundary sets of $T_j$ nodes. 
 Its contribution to each of these sets is all of its own separator vertices and all separator 
vertices along the path between $t'$ and the $T_j$-node to which it contributes.
Overall, it is still $O(n^{\rho } (1 - \lambda)^{i\rho})$ vertices in these separator sets along each such path.
Hence, overall by Inequality~\eqref{eq:boundedGenus1},
\begin{align*}
             \sum_{t \in T_j}|\mathcal B(t)|^{\omega} \le \sum_{i \le j} O((1 - \lambda)^{-i}) \cdot [O(n^{\rho} \cdot (1 - \lambda)^{i\rho})]^{\omega} = O(n^{\rho \omega}) \cdot \sum_{i \le j} (1 - \lambda)^{i (\rho \omega -1)}.
\end{align*}
As $\rho \omega -1 > 0$ and $1/2 < \lambda < 1$, we have $0 < (1 - \lambda)^{\rho \omega -1} < 1$,
and thus, 
\begin{align*}
    \sum_{i \le \ell} (1 - \lambda)^{i (\rho \omega - 1)} =  O_{\lambda, \rho}(1) =  O(1).
\end{align*}
Hence, 
\begin{align*}
    \sum_{t \in T_j} |\mathcal B(t)|^{\omega} = O(n^{\rho\omega}),
\end{align*}
and thus,
\begin{align*}
\sum_{t \in T_G} |B(t)|^{\omega} = \sum_{j = 0}^{\ell} \sum_{t \in T_j} |\mathcal B(t)|^{\omega} = O(n^{\rho \omega} \cdot \log n).
\end{align*}
\end{proof}

\section{A Comparison for General Densities }\label{app:comparison}
In this appendix we compare our algorithm
for $n$-vertex $m$-edge graphs with treewidth $O(n^{\rho})$ and
$m = n^{\mu}$, $0 < \rho < 1$, $1 < \mu \le 1 + \rho$, in the
centralized setting with the naive centralized algorithm for 
$S \times V$ reachability and approximate distance computation.
As everywhere throughout the paper, $G$ may be directed and weighted, while the structural assumptions are about the undirected unweighted
underlying graph.
This is an extension of the scenario that we analyzed in Section~\ref{sec:reachDistFlatBound} in which $\mu = 1 + \rho$.

It is easy to see that the exponent of the running time of our algorithm is still
\[ F'(\sigma, \rho) = \max \{ \omega \rho + 1 - \rho, \omega(\sigma) \}.\]
On the other hand, the exponent of the running time of the state-of-the-art (naive) centralized
solution improves now to 
\[ N(\sigma, \mu) = \min \{\omega, \mu + \sigma \}.\]
For $F'(\sigma, \rho)$ to be smaller than $N(\sigma, \mu)$, we need
\[\omega(\sigma) < \mu + \sigma.\]
As we argued in Section~\ref{sec:basicSparse}, for any $\mu > \omega - 1$, there exists a threshold value
$\sigma'_{\mu} $, $0 < \sigma'_{\mu} < 1 $, such that 
\[\omega(\sigma'_{\mu}) = \sigma'_{\mu} + \mu, \]
and for all $\sigma > \sigma'_{\mu}$ we have
\[\omega(\sigma) < \sigma + \mu.\]
We also need the inequality 
\[\mu + \sigma > \omega\rho + 1 - \rho = \rho(\omega - 1) + 1\]
to be satisfied.
This is true for 
\begin{align}\label{eq:sparseFlatTreeW1}
 \sigma > \rho(\omega - 1) - (\mu -1).
\end{align}
Observe that $\omega \rho + (1 - \rho) < \omega$, for all $0 < \rho < 1$.
Also, for all $\sigma$,  $\sigma < 1$, $\omega(\sigma) < \omega$.
Thus the condition
\begin{align}\label{eq:sparseFlatTreeW2}
    \sigma > \max\{\sigma'_{\mu}, \rho(\omega -1) - (\mu -1) \}
\end{align}
is equivalent to $F'(\sigma, \rho) < N(\sigma, \mu)$.

For the condition~\eqref{eq:sparseFlatTreeW1} to be feasible, we need its right-hand-side to be smaller than $1$, i.e., 
$\rho(\omega - 1) - (\mu -1) < 1$, i.e., $\rho(\omega -1) < \mu$.
But, in fact, for $\sigma'_{\mu}$ to exist (i.e., for condition $\omega(\sigma) < \mu + \sigma$ to be feasible),
we need a stronger condition $\omega - 1 < \mu$.
Hence the relevant range of $\mu$ is 
\begin{equation}\label{eq:sparseTreeWidthMu}
     \omega - 1 < \mu \le 1 + \rho.
\end{equation}
For each $\mu$ in the range~\eqref{eq:sparseTreeWidthMu} and each
\begin{align}\label{eq:sigmaThresh}
    \sigma  > \sigma'_{thr} = \max \{ \sigma'_{\mu}, \rho(\omega - 1) - (\mu -1) \},
\end{align}
we have $F'(\sigma, \rho) < N(\sigma, \mu)$.
Note also that $ 1 + \rho \ge \mu > \omega -1$ implies $\rho > \omega - 2$.

In Tables~\ref{tab:FlatBoundSparseComp1} and~\ref{tab:FlatBoundSparseComp2} we provide sample values of $F'(\sigma, \rho)$ and $N(\sigma, \mu)$
within the range in which $F'(\sigma, \rho) < N(\sigma, \mu)$.
\begin{theorem}\label{thm:flatBoundDistance}
 Let $\epsilon$, $W$, $G$, $\rho$, $\mu$, $S$ and $\sigma$ be as in Theorem~\ref{thm:treewidthDistance}.
 Consider the centralized setting and a general regime $\mu \le 1 + \rho$.
 Then the exponent $F'(\sigma, \rho)$ of our algorithm outperforms the exponent 
 $N(\sigma, \mu) = \min \{\omega, \mu + \sigma \}$ of the state-of-art algorithm 
 for $S \times V$ reachability and $(1 + \epsilon)$-approximate distance computation problem in a wide range of parameters $\sigma$, $\rho$ and $\mu$.
 Specifically, for any $\rho$, $\mu$ such that
 \[
         1 + \rho \ge \mu > \omega -1,
 \]
 there is a non-empty range $\sigma > \sigma'_{thr}(\rho, \mu)$ with
 $\sigma'_{thr}(\rho, \mu)$ given by Equation~\eqref{eq:sigmaThresh}, in which 
 $F'(\sigma, \rho) < N(\sigma, \rho)$.
 See Tables~\ref{tab:FlatBoundSparseComp1} and~\ref{tab:FlatBoundSparseComp2} for more details.
\end{theorem}
\begin{table}[h!]
  \captionsetup{font=scriptsize}
  \begin{center}
  {\small 
    \begin{tabular}{l|c|c|c|c|c} 
      $\rho$ & $\mu$ & $\sigma'_{thr}$ & $\sigma$ & $F'(\sigma, \rho)$ & $N(\sigma, \mu)$\\
      \hline
               0.4 & 1.4 & 0.875 & 0.9 & $\omega(0.9) = 2.2942$ & 2.3\\ \hline
      	      0.5 & $1.4$ & 0.875 & 0.9 & $ \omega(0.9)$ & 2.3 \\ \hline
                   & 1.5 & 0.57 & 0.6 & $\omega(0.6) = 2.0926$ & 2.1\\
                               &&& 0.7 & $\omega(0.7) = 2.1530 $ & 2.2\\
                               &&& 0.8 & $\omega(0.8) = 2.2209$ &  2.3\\
                               &&& 0.9 & $\omega(0.9)$ & $\omega$\\ \hline
            \rowcolor{Gray}
              0.6 & 1.4 & 0.875 & 0.9 & $\omega(0.9)$ & 2.3 \\
                  & 1.5 & 0.57 & 0.6 & $\omega(0.6)$ & 2.1\\
                             &&& 0.7 & $\omega(0.7) $ & 2.2\\
                             &&& 0.8 & $\omega(0.8) $ & 2.3\\
                             &&& 0.9 & $\omega(0.9) $ & $\omega$\\ \hline
                & 1.6 & 0.45 & 0.5 & $\omega(0.5) = 2.0430$ & 2.1\\ 
                             &&&0.6 & $\omega(0.6)$ & 2.2\\
                        \rowcolor{Gray}
                             &&&0.7 & $\omega(0.7)$ & 2.3\\
                        \rowcolor{Gray}
                             &&&0.8 & $\omega(0.8)$ & $\omega$\\
                            &&&0.9 & $\omega(0.9)$ & $\omega$\\ \hline
              0.7 & 1.4 & 0.875 & 0.9 & $\omega(0.9)$ & 2.3 \\ \hline
                 & 1.5 & 0.57 & 0.6 & $\omega(0.6)$ & 2.1 \\
                            &&& 0.7 & $\omega(0.7)$  & 2.2 \\
                             &&& 0.8 & $\omega(0.8)$ & 2.3 \\
                             &&& 0.9 & $\omega(0.9)$ & $\omega$ \\ \hline 
                 & 1.6  & 0.46 & 0.5 & $\omega(0.5) = 2.0428$ & 2.1\\
                              &&& 0.6 & $\omega(0.6)$  & 2.2\\   
                     \rowcolor{Gray}
                               &&& 0.7 & $\omega(0.7)$ & 2.3\\
                    \rowcolor{Gray}
                               &&& 0.8 & $\omega(0.8)$ & $\omega$\\
                                &&& 0.9 & $\omega(0.9) $ & $\omega$\\ \hline
                 & 1.7  & 0.3 & 0.4 & $\omega(0.4) = 2.0095$ & 2.1\\
                   \rowcolor{Gray}
                             &&& 0.5 & $\omega(0.5)$  & 2.2\\ 
                    \rowcolor{Gray}
                              &&& 0.6 & $\omega(0.6)$ & 2.3\\
                    \rowcolor{Gray}
                              &&& 0.7 & $\omega(0.7)$ & $\omega$\\
                    \rowcolor{Gray}
                                &&& 0.8 & $\omega(0.8)$ & $\omega$\\
                               &&& 0.9 & $\omega(0.9) $ & $\omega$\\ \hline

  \end{tabular}
  }
  \end{center}
    \caption{The table provides a comparison between the exponent $F'(\sigma, \rho)$ of our algorithm for $S \times V$-reachability/approximate distance computation  ($|S| = n^{\sigma}$), 
              with the current (naive) state-of-the-art bound $N(\sigma, \mu)$ on 
              $n$-vertex graphs with treewidth $n^{\rho}$ with $m = n^{\mu}$ edges. 
               The comparison for $\rho = 0.8$ and $\rho = 0.9$ is provided in Table~\ref{tab:FlatBoundSparseComp2}.
              }
    \label{tab:FlatBoundSparseComp1}
\end{table}

\begin{table}[h!]
  \captionsetup{font=scriptsize}
  \begin{center}
  {\small 
    \begin{tabular}{l|c|c|c|c|c} 
      $\rho$ & $\mu$ & $\sigma'_{thr}$ & $\sigma$ & $F'(\sigma, \rho)$ & $N(\sigma, \mu)$\\
      \hline
       0.8 & 1.4 & 0.875 & 0.9 & $\omega(0.9)$ & 2.3 \\ \hline
                & 1.5 & 0.597 & 0.6 & $\omega \cdot  \rho +  (1 - \rho) = 2.0972$ & 2.1\\
                             &&& 0.7 & $\omega(0.7)$ & 2.2\\
                             &&& 0.8 & $\omega(0.8)$ & 2.3\\
                              &&& 0.9 & $\omega(0.9)$ & $\omega$\\ \hline
                & 1.6 & 0.45 & 0.5 & $\omega \cdot  \rho +  (1 - \rho)$ & 2.1\\
                             &&& 0.6 & $\omega \cdot \rho + (1 - \rho)$ & 2.2\\
                       \rowcolor{Gray}
                             &&& 0.7 & $\omega(0.7)$ & 2.3\\
                    \rowcolor{Gray}
                              &&& 0.8 & $\omega(0.8)$ & $\omega$\\
                              &&& 0.9 & $\omega(0.9)$ & $\omega$\\ \hline
                 & 1.7 & 0.397 & 0.4 & $\omega \cdot  \rho +  (1 - \rho)$ & 2.1\\
                   \rowcolor{Gray}
                             &&& 0.5 & $\omega \cdot  \rho +  (1 - \rho)$ & 2.2\\
                    \rowcolor{Gray}
                             &&& 0.6 & $\omega \cdot  \rho +  (1 - \rho)$ & 2.3\\
                    \rowcolor{Gray}
                              &&& 0.7 & $\omega(0.7)$ & $\omega$\\
                    \rowcolor{Gray}
                              &&& 0.8 & $\omega(0.8)$ & $\omega$\\
                               &&& 0.9 & $\omega(0.9)$ & $\omega$\\ \hline
                & 1.8 & 0.297 & 0.3 & $\omega \cdot  \rho +  (1 - \rho)$ & 2.1\\
                \rowcolor{Gray}
                             &&& 0.4 & $\omega \cdot  \rho +  (1 - \rho)$ & 2.2\\
                \rowcolor{Gray}
                             &&& 0.5 & $\omega \cdot  \rho +  (1 - \rho)$ & 2.3\\
                \rowcolor{Gray}
                              &&& 0.6 & $\omega \cdot  \rho +  (1 - \rho)$ & $\omega$\\
                \rowcolor{Gray}
                              &&& 0.7 & $\omega(0.7)$ & $\omega$\\
                \rowcolor{Gray}
                               &&& 0.8 & $\omega(0.8)$ & $\omega$\\ 
                               &&& 0.9 & $\omega(0.9)$ & $\omega$\\ \hline
      	       0.9 & $1.4$ & 0.875 & 0.9 & $ \omega(0.9)$ & 2.3 \\ \hline
                  & 1.5 & 0.734 & 0.8 & $\omega\rho + ( 1- \rho) = 2.2344$ & 2.3\\
                              &&& 0.9 & $\omega(0.9)$ & $\omega$\\ \hline
                  & 1.6 & 0.634 & 0.7 & $\omega\rho + ( 1- \rho)$ & 2.3\\
                        \rowcolor{Gray}
                               &&& 0.8 & $\omega\rho + ( 1- \rho)$ & $\omega$\\ 
                               &&& 0.9 & $\omega(0.9)$ & $\omega$\\ \hline
                  & 1.7 & 0.534 & 0.6 & $\omega\rho + ( 1- \rho)$ & 2.3\\
                        \rowcolor{Gray}
                               &&& 0.7 & $\omega\rho + ( 1- \rho)$ & $\omega$\\ 
                         \rowcolor{Gray}
                               &&& 0.8 & $\omega\rho + ( 1- \rho)$ & $\omega$\\ 
                               &&& 0.9 & $\omega(0.9)$ & $\omega$\\ \hline
                 & 1.8 & 0.434 & 0.5 & $\omega\rho + ( 1- \rho)$ & 2.3\\
                       \rowcolor{Gray}
                               &&& 0.6 & $\omega\rho + ( 1- \rho)$ & $\omega$\\ 
                        \rowcolor{Gray}
                               &&& 0.7 & $\omega\rho + ( 1- \rho)$ & $\omega$\\ 
                         \rowcolor{Gray}
                               &&& 0.8 & $\omega\rho + ( 1- \rho)$  & $\omega$\\ 
                               &&& 0.9 & $\omega(0.9)$ & $\omega$\\ \hline
                  & 1.9 & 0.334 & 0.4 & $\omega\rho + ( 1- \rho)$ & 2.3\\
                       \rowcolor{Gray}
                               &&& 0.5 & $\omega\rho + ( 1- \rho)$ & $\omega$\\ 
                        \rowcolor{Gray}
                               &&& 0.6 & $\omega\rho + ( 1- \rho)$ & $\omega$\\ 
                         \rowcolor{Gray}
                               &&& 0.7 & $\omega\rho + ( 1- \rho)$  & $\omega$\\ 
                         \rowcolor{Gray}
                               &&& 0.8 & $\omega\rho + ( 1- \rho)$ & $\omega$\\ 
                               &&& 0.9 & $\omega(0.9)$ & $\omega$\\ \hline                                   
  \end{tabular}
  }
  \end{center}
    \caption{ Continuation of Table~\ref{tab:FlatBoundSparseComp1} for $\rho = 0.8$ and $\rho = 0.9$.
              The table provides a comparison between the exponent $F'(\sigma, \rho)$ of our algorithm 
              for $S \times V$-reachability/approximate distance computation; $|S| = n^{\sigma}$, 
              with the current (naive) state-of-the-art bound $N(\sigma, \mu)$ on 
              $n$-vertex graphs with treewidth $n^{\rho}$ with $m = n^{\mu}$ edges.              
              }
    \label{tab:FlatBoundSparseComp2}
\end{table}

\section{Approximate Distances using Approximate Distance Products}\label{sec:Argument1}
Recall that our algorithm from Section~\ref{sec:distanceProducts} is given a graph 
$G = (V,E)$, a subset $S \subseteq V$ of sources, $|S| = n^{\sigma}$, $n = |V|$
and a $(1 + \epsilon, O(\log n))$-hopset $G' = (V,H)$ for $G$.
It creates the augmented graph $\tilde G = (V, E \cup H)$ and an adjacency matrix $A$ for $\tilde G$
(with self-loops). The self-loops are $0$ values on its diagonal.
The matrix $B = B^{(0)}$ is initialized as a rectangular sub-matrix of $A$ that contains only the rows
that correspond to the set of sources $S$.

After iteration $i$, $i = 0,1,\ldots, O(\log n)$, the matrix $B$ is updated by $B^{(i + 1)} = B^{(i)} \star_{\xi} A$, where $\star_{\xi}$ stands for a $(1 + \xi)$-approximate distance product.
(Here $\xi > 0$ is a parameter.)
Below we argue that the algorithm computes a $( 1 + \epsilon) \cdot (1 + \xi)^{O(\log n)}$-approximation of all
$S \times V$ distances (in $G$) efficiently.

One can assume that our original edge weights (in the augmented graph $\tilde G$)
are such that the minimal non-zero weight is equal to $1$.
Denote by $W$ the maximum edge weight.
Furthermore, by multiplying all the edge weights by $1/\xi$, and rounding them up, one obtains integral edge weights.
Observe that the resulting edge weights are in $\{\lceil 1/\xi, \rceil, \ldots, M \}$, $M =      
\left\lceil \frac{W}{\xi} \right\rceil$.
These edge weights are, up to a multiplicative factor of $1 + \xi$, 
the same as inflated by the factor $1/\xi$ original edge weights.
Thus, by computing distances in the resulting integer-weighted graph and scaling them down 
by a factor of $\xi$, one obtains a $(1 + \xi)$-approximation of original distances.

Hence we assume hereafter that all edge weights are non-negative integers, upper bounded by $M$.
Hence, when computing the first approximate distance product $B^{(1)} = B^{(0)} \star_{\xi } A$,
all non-infinity entries in $B^{(0)}$ and $A$ are integers. 
Moreover, we will argue that all entries in the product matrix $B^{(1)}$ are integers as well and this is also the case in general, i.e., for $B^{(i)}$, for every $i = 0,1, \ldots, h + 1$. 
Hence from now on we focus on computing an approximate distance product $B^{( i + 1)} = B^{(i)} \star_{\xi} A$
for integer-valued matrices $B^{(i)}$ and $A$ (except for some entries equal to infinity),
with all finite values in $\{0, 1, \ldots, M' \}$, with $M' = O(Mn)$.
(Non-infinity entries of $B^{(i)}$ are bounded by $O(M n)$ because they provide a 
$(1 + \xi)^{i +1}$-approximation to $(i + 1)$-restricted distances in $G$, which are at most 
$M n$. Hence, these entries are at most $(1 + \xi)^{O(\log n)} M n$.
We will set $\xi = O(\frac{\xi'}{\log n})$, for a parameter $0 < \xi' < 1$, to ensure that 
$(1 + \xi)^{O(\log n)} = O(1)$.)

The procedure relies on a simple (albeit more time-consuming) routine for computing an 
exact distance product $C = B' \star A'$ of two matrices as above (with dimensions $n^{\sigma} \times n$ and $n \times n$, respectively).
The latter procedure is due to Yuval~\cite{Youval76}, see also~\cite{AGM91, APSPDirectedZwick}.
This procedure multiplies instead two matrices $\bar B$ and $\bar A$ 
of the same dimensions as $B'$ and $A'$ respectively, using \emph{ordinary} (i.e., $(+, \cdot)$) matrix product.
These matrices, however, contain entries of larger magnitude.
Specifically, each such entry may be at most $(n + 1)^{M'}$.
As a result, the product requires $n^{\omega(\sigma) + o(1)}$
operations on $O(M' \cdot \log n)$-bit integers. Hence, the overall time
required by this computation is $\tilde O(M') \cdot n^{\omega(\sigma)}$.
\begin{cor}(\cite{Youval76,AGM91,APSPDirectedZwick} )~\label{cor:minPlusProduct}
An exact distance product between two integer-valued matrices of dimensions $n^{\sigma} \times n$
and $n \times n$ with non-negative entries upper-bounded by at most $M'$ can be computed in $\tilde O(M') \cdot n^{\omega(\sigma) + o(1)}$
time.
\end{cor}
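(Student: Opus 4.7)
The plan is to carry out Yuval's classical reduction from distance product to ordinary integer matrix product, adapted to the rectangular setting. Given matrices $B'$ of dimensions $n^\sigma \times n$ and $A'$ of dimensions $n \times n$ with entries in $\{0,1,\ldots,M'\} \cup \{\infty\}$, I would define two non-negative integer matrices $\bar B$ and $\bar A$ of the same respective dimensions by
\[ \bar B[i,k] = \begin{cases} (n+1)^{M' - B'[i,k]} & \text{if } B'[i,k] \neq \infty \\ 0 & \text{otherwise}\end{cases}, \qquad \bar A[k,j] = \begin{cases} (n+1)^{M' - A'[k,j]} & \text{if } A'[k,j] \neq \infty \\ 0 & \text{otherwise}\end{cases}. \]
I would then compute the ordinary integer product $\bar C = \bar B \cdot \bar A$ using a rectangular matrix multiplication algorithm with exponent $\omega(\sigma)$.

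The key observation is that each entry of $\bar C$ has the form
\[ \bar C[i,j] = \sum_{k=1}^{n} (n+1)^{2M' - (B'[i,k] + A'[k,j])}, \]
where the sum is taken only over indices $k$ for which both corresponding entries are finite. Letting $s_{ij} = \min_k \{B'[i,k] + A'[k,j]\}$ be the true distance-product entry, the largest power appearing in this sum is $(n+1)^{2M' - s_{ij}}$, while the remaining at most $n-1$ terms each contribute a power at most $(n+1)^{2M' - s_{ij} - 1}$. Their combined contribution is therefore strictly less than $(n+1)^{2M' - s_{ij}}$, so the dominant power is recoverable: I would extract $s_{ij} = 2M' - \lfloor \log_{n+1} \bar C[i,j] \rfloor$ (and declare the entry $\infty$ whenever $\bar C[i,j] = 0$).

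For the running time analysis, each entry of $\bar B$ and $\bar A$ is an integer of magnitude at most $(n+1)^{M'}$, hence of bit-length $O(M' \log n)$. Each of the $n^{\omega(\sigma) + o(1)}$ arithmetic operations performed by the underlying rectangular matrix multiplication algorithm therefore acts on integers of bit-length $\tilde O(M')$, and can be carried out in $\tilde O(M')$ time using standard integer arithmetic. Multiplying these two factors yields the claimed $\tilde O(M') \cdot n^{\omega(\sigma) + o(1)}$ bound, and the log-extraction post-processing contributes only an additive $\tilde O(M') \cdot n^{1 + \sigma}$ overhead which is absorbed. The one subtlety I would be careful about is ensuring that the rectangular matrix multiplication algorithms underlying the bound $\omega(\sigma)$ work over rings of integers (not just over $\{0,1\}$), which is standard for the algebraic algorithms of~\cite{VassilevskaBestRectangular} and their predecessors; no other real obstacle is anticipated since the reduction is essentially a direct transcription of Yuval's square-matrix argument into the rectangular regime.
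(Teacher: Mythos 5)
Your proposal is correct and is essentially the same argument the paper gives (and attributes to Yuval~\cite{Youval76}): encode each finite entry $x$ as the power $(n+1)^{M'-x}$, perform one ordinary rectangular integer product on numbers of bit-length $O(M'\log n)$, and recover each minimum from the dominant power of $n+1$. One cosmetic point: ties in the minimizer mean several terms of the sum can equal $(n+1)^{2M'-s_{ij}}$, not just one, but your recovery formula is unaffected since $\bar C[i,j]\le n\cdot(n+1)^{2M'-s_{ij}}<(n+1)^{2M'-s_{ij}+1}$.
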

In \textsf{PRAM} setting, a naive implementation of this routine requires $\tilde O(M')$ time
and $\tilde O(M') \cdot n^{\omega(\sigma) + o(1)}$ work.
\cite{APSPDirectedZwick} points out however, that one can instead use Chinese 
Remainders Theorem (CRT) to compute the product $B' \cdot A'$ more efficiently.
Specifically, one selects $O(M')$ different primes with $O(\log n)$ bits in each,
and computes $B' \cdot A'$ modulo these primes. 
Then the result $B' \cdot A'$ can be reconstructed (entry-wise) using CRT. This results in  polylogarithmic time and work $\tilde O(M') \cdot n^{\omega(\sigma) + o(1)}$.

Next, we use this routine for describing the approximate distance 
product $B' \star_{\xi} A'$ (due to~\cite{APSPDirectedZwick}).
We assume that both $M'$ and $R = \frac{1}{\xi}$
are integer powers of $2$.
If this is not the case, one can set $M''$ (respectively, $R''$)
to be the closest integer power of $2$ greater than $M'$ (respectively, $R$),
and set $\xi'' = \frac{1}{R''}$.
Computing the product with these values results in estimates that are even better 
than desired, while the time and work complexities increase only
by a constant factor.
(The dependence of the complexity on $M'$ is at most logarithmic, 
while the dependence on $\xi$ is at most $O(\frac{1}{\xi} \log \frac{1}{\xi})$.)
Hence the assumption that $M'$ and $R$ are integer powers of $2$ is without 
loss of generality.
Furthermore, we can assume that $M' > R$ as otherwise exact distance product provides the desired time and work 
complexities. (Recall that we aim at complexity roughly $O(1/\xi) \cdot n^{\omega(\sigma) + o(1)}$.)
We write $M' = 2^m, R = 2^r, m > r$.

The algorithm runs for $m - r+1$ iterations. 
On each iteration $k = 0,1,\ldots, m-r$,
the algorithm defines two matrices $B'_k$ and $A'_k$
in the following way.
First, it replaces all entries larger than $R \cdot 2^k = 2^{r + k}$
by infinity. Second, it scales down all remaining entries by dividing them by $2^k$
and rounding them up.
Observe that all entries in $B'_k$ and $A'_k$ are in $\{0,1, \ldots, R \} \cup \{\infty \}$.
We now compute an exact distance product $C'_k = B'_k \star A'_k$
(for every $k \in \{0,1, \dots, m-r\}$),
and scale it back by multiplying every entry of it by $2^k$.
Finally, the matrix $C'$ is defined by setting each entry $C'[i,j]$ to be equal to 
\begin{equation}\label{eq:Cij}
   C'[i,j] = \min \{2^k C'_k[i,j]~|~ k \in \{0,1, \dots, m-r \} \}. 
\end{equation}

\begin{lemma}\label{lem:integerEntriesCij}
$C'$ is a $(1 + O(\xi))$-approximate distance product of $B'$ and $A'$,
and moreover, all its non-infinity entries are integers.
\end{lemma}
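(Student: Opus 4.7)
My plan has three parts: (i) integrality, (ii) the ``no underestimate'' direction $C'[i,j] \geq (B' \star A')[i,j]$, and (iii) the approximation upper bound.

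For the integrality claim I will argue almost directly from the construction. By definition, every non-infinity entry of $B'_k$ (respectively $A'_k$) is of the form $\lceil x/2^k\rceil$ for some integer $x \in \{0,1,\dots,2^{r+k}\}$, hence lies in $\{0,1,\dots,R\}$. Consequently the exact distance product $C'_k = B'_k \star A'_k$ has entries in $\mathbb{Z}_{\geq 0} \cup \{\infty\}$, its rescaling $2^k C'_k$ lies in $\mathbb{Z}_{\geq 0} \cup \{\infty\}$, and the coordinatewise minimum defining $C'[i,j]$ preserves this property.

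For (ii), the upward rounding gives, for every scale $k$ and every index $\ell$,
\begin{align*}
2^k\bigl(B'_k[i,\ell] + A'_k[\ell,j]\bigr) = 2^k\lceil B'[i,\ell]/2^k\rceil + 2^k\lceil A'[\ell,j]/2^k\rceil \geq B'[i,\ell] + A'[\ell,j],
\end{align*}
so minimizing over $\ell$ gives $2^k C'_k[i,j] \geq (B'\star A')[i,j]$, and then minimizing over $k$ in~\eqref{eq:Cij} preserves the inequality. The main work is in (iii). Let $\ell^*$ be an index that attains $c := (B'\star A')[i,j] = b + a$ with $b = B'[i,\ell^*]$ and $a = A'[\ell^*,j]$, and choose $k^*$ to be the smallest nonnegative integer with $\max(b,a) \leq 2^{r+k^*}$. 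This choice guarantees that neither entry is killed off in $B'_{k^*}$ or $A'_{k^*}$, so
\begin{align*}
2^{k^*} C'_{k^*}[i,j] \leq 2^{k^*}\lceil b/2^{k^*}\rceil + 2^{k^*}\lceil a/2^{k^*}\rceil \leq c + 2^{k^*+1}.
\end{align*}
I will then split into two cases. If $\max(b,a) \leq R$, then $k^* = 0$, no rounding occurs, and $C'_0[i,j] \leq c$ exactly; combined with (ii) this gives $C'[i,j] = c$. Otherwise $k^* \geq 1$ and, by the minimality of $k^*$, $2^{k^*+r-1} < \max(b,a) \leq c$, so $2^{k^*+1} < 4c/R = 4\xi c$, yielding
\begin{align*}
C'[i,j] \;\leq\; 2^{k^*} C'_{k^*}[i,j] \;\leq\; c + 4\xi c \;=\; (1+4\xi)\,c,
\end{align*}
which is the desired $(1+O(\xi))$-bound.

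The one place where I would have to be careful, and which I expect will be the only real point of substance, is the small-distance regime $c = O(R)$: naively the additive error $2^{k^*+1}$ from two ceilings could overwhelm the relative slack $\xi c$. The case split above is precisely what disposes of this obstacle---when $c$ is too small to afford a relative error of $\xi c$, the corresponding $k^* = 0$ iteration of the algorithm carries the computation out without any rounding at all, so the estimate is exact rather than approximate. Everything else is bookkeeping around the choice of $k^*$ and the standard ceiling identity $2^k\lceil x/2^k\rceil \in [x, x + 2^k)$.
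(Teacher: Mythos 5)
Your proposal is correct and follows essentially the same route as the paper's proof: integrality from the rescaled exact products, the lower bound from upward rounding, and the upper bound by selecting the scale $k^*$ with $R\cdot 2^{k^*-1} < \max(b,a) \le R\cdot 2^{k^*}$ (the paper phrases this via the WLOG-larger entry $B'[i,t]$ rather than $\max(b,a)$, which is the same thing), including the exact $k^*=0$ case for small entries and the $(1+4\xi)$ bound otherwise.
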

\begin{proof}
Denote by $C = B' \star A'$ the (exact) distance product of $B'$ and $A'$.
For every $k \in \{0, 1, \ldots, m-r \}$, all entries of $B'_k$ and $A'_k$ belong to 
$\{0,1, \dots, R\} \cup \{\infty\}$.
 Thus, all entries of $C'_k = B'_k \star A'_k$ are integers as well, and this is, therefore, 
 also the case with each entry $C'[i,j]$ of $C'$ (see Equation~\eqref{eq:Cij}).

 Also, for every $k \in \{0,1, \dots, m-r \}$, and every entry $C'_{k}[i,j]$
 satisfies 
 \[
        C'_{k}[i,j] = \underset{1 \le t \le n}{\min }  \{B'_k[i,t] + A'_k[t,j] \} \ge \underset{1 \le t \le n}{\min } \{(B'[i,t] + A'[t,j])/2^k\}.
 \]

(The inequality holds because each entry $B'_k[i,t]$ is obtained from $B'[i,t]$ by either replacing it by $\infty$, or by dividing it by $2^k$ and rounding it up.) Hence we get that
\begin{equation}\label{eq: CijApprox}
    2^k C'_k[i,j] \ge \min_{1 \le t \le n}\{ B'[i,t] + A'[t,j] \} = C[i,j],
    \end{equation}
and therefore, 
\[
    C'[i,j] = \underset{k}{\min} \{2^k C'_k[i,j] \} \ge C[i,j].
\]
In the opposite direction, for a fixed entry $C[i,j]$ of the product matrix $C$, 
consider the index $t$ such that
\begin{equation}\label{eq:Cij2}
      C[i,j] = B'[i,t] + A'[t,j] = \underset{1 \le q \le n}{\min} \{B'[i,q] + A'[q,j] \}.
  \end{equation}
  Suppose that $B'[i,t] \ge A'[t,j]$. (Otherwise, the argument is symmetric.)

  If $B'[i,t] \le R$, then (for $k = 0$) we have $B'_0[i,t] = B'[i,t]/2^0  = B'[i,t] $,
  and $A'_0[t,j] =  A'[t,j]/2^0 = A'[t,j]$.
  For any index $q \in [n]$, $B'_0[i,q] \ge B'[i,q]$ and $A'_0[q,j] \ge A'[q,j]$.
  (Strict inequality may occur if, e.g., $B'[i,q] > R$, and then $B'_0[i,q] = \infty$.)
  Hence (by Equation~\eqref{eq:Cij2}),
  \begin{align}\label{eq:Cij3}
      \begin{aligned}
          B'_0[i,q] + A'_0[q,j] \ge B'[i,q] + A'[q,j] &\ge B'[i,t] + A'[t,j] \\
                                                          & = \min_q \{B'[i,q] + A'[q,j] \} \\
                                                          & = B'_0[i,t] + A'_0[t,j].
        \end{aligned}
  \end{align}
Thus, $C'_0[i,j] = B'_0[i,t] + A'_0[t,j] = B'[i,t]  + A'[t,j] = C[i,j]$.

Otherwise, $B'[i, t] > R$. Let $k \in \{1,2, \ldots, m-r \}$ be the index such that $R \cdot 2^{k-1} < B'[i,t] \le R \cdot 2^k$.
Then
\[
     B'_k[i,t] = \bigg \lceil \frac{B'[i,t]}{2^k} \bigg \rceil  \text{~and~}  A'[t,j] = \bigg \lceil \frac{A'[t,j]}{2^k} \bigg \rceil.
\]
Hence
\begin{align}\label{eq:Cij4}
    \begin{aligned}
                      2^k C'_k[i,j] &= \underset{ 1\le q \le n}{\min} \{2^k(B'_k[i,q] + A'_k[q,j]) \}\\
                                    & \le 2^k\left (B'_k[i,t] + A'_k[t,j] \right) \\
                                    & \le 2^k\left( \frac{B'[i,t]}{2^k} + \frac{A'[t,j]}{2^k} + 2\right)\\
                                    &= C[i,j] + 2^{k+1}.       
    \end{aligned}
\end{align}
As $C[i,j] = B'[i,t] + A'[t,j] \ge B'[i,t] > 2^{k-1}R$, Equation~\eqref{eq:Cij4} implies that
\begin{align*}
    \begin{aligned}
            2^k C'_k[i,j] \le C[i,j] + 2^{k + 1} &= C[i,j] \left(1 + \frac{2^{k + 1}}{C[i,j]}\right) \\
                                       &< C[i,j] \left( 1+ \frac{2^{k + 1}}{2^{k-1}R}\right) = C[i,j]\left( 1+ \frac{4}{R}\right)\\
                                       & = C[i,j](1 + 4\xi).    
    \end{aligned}
\end{align*}
Hence 
\[C'[i,j] = \min_{k'} \{2^{k'} C'_k[i,j]\} \le 2^k C'_k[i,j] \le  C[i,j](1 + 4\xi), \]
proving the lemma.
\end{proof}
The algorithm for computing $B' \star_{\xi} A'$ involves at most $m = O(\log M') = O(\log M + \log n)$ computations of exact distance products with all non-infinity entries bounded by $R = O (1/\xi)$.
By Corollary~\ref{cor:minPlusProduct}, each such computation requires $\tilde O(R) \cdot n^{\omega(\sigma) + o(1)}$ 
centralized time, or polylogarithmic \textsf{PRAM} time and work $\tilde O(R) \cdot n^{\omega(\sigma) + o(1)}$.
Hence the overall time of computing $h = O(\log n)$ such distance products 
\[
     B^{(1)} = B^{(0)} \star_{\xi} A,  B^{(2)} = B^{(1)} \star_{\xi} A, \ldots  B^{(h)} = B^{(h-1)} \star_{\xi} A, 
\]
is $O(\log M') \cdot \tilde O(1/\xi) \cdot n^{\omega(\sigma) + o(1)} = O(\log M) \cdot \tilde O(1/\xi) \cdot n^{\omega(\sigma) + o(1)} $.
(In \textsf{PRAM} setting, this expression bounds the work complexity of the algorithm while the time complexity is 
polylogarithmic in $n$. Note that all the $m = O(\log M')$ computations of exact distance products that are needed 
for computing a single approximate one can be done in parallel.)

As $G' = (V,H)$ is a $(1 + \epsilon, O(\log n))$-hopset of $G = (V,E)$ and $A$ is the adjacency matrix of the augmented graph $\tilde G = (V, E \cup H)$, it follows that this computation produces $(1 + \epsilon) \cdot (1 + \xi)^{O(\log n)}$-approximate $S \times V$ distances. We set $\xi = \frac{\epsilon}{O(\log n)}$, and obtain $(1 + O(\epsilon))$-approximate 
$S \times V$-distances in centralized time $O(\log W) \cdot \tilde O(1/\xi) \cdot n^{\omega(\sigma)}$
(and polylogarithmic parallel time and work given by this expression).
To summarize:
\begin{theorem}\label{thm:distanceProductsAppdx}
    For a parameter $\epsilon > 0$, suppose that we are given a $(1 + \epsilon, O(\log n))$-hopset $G' = (V,H)$
    of a graph $G = (V,E)$ with non-negative edge weights, with aspect ratio at most $W$, and a subset $S \subseteq V$ of $|S| = n^{\sigma}$ sources. Then the algorithm from Section~\ref{sec:distanceProducts} computes $(1 + O(\epsilon))$-approximate $S \times V$ distances within 
    $O(\log W) \cdot \tilde O(1/\epsilon) \cdot n^{\omega(\sigma)}$ centralized time, or within parallel polylogarithmic in $n$ time and work bounded by this expression.
\end{theorem}

\section{Tree Decompositions}\label{sec:treeDecompAppedix}
Lipton and Tarjan (\cite{LT83}, Corollary $3$) showed that in planar graphs, one can modify a recursive separator of size $O(\sqrt n)$
that partitions the graph into two parts of size at most $\lambda \cdot n$ each, for a constant $1/2 < \lambda < 1$, and guarantee $\lambda = 1/2$, while increasing the size of the separator by only a constant factor.

In this appendix we generalize their argument to graphs that admit $k^{\rho}$-recursive separators,
for any constant $0 < \rho < 1$.
Moreover, we show that with a small overhead, this argument applies even if the input graph admits 
recursive separators with a flat bound $O(n^{\rho})$, for a constant $0 < \rho < 1$.
Specifically, the original separator can be converted into a separator of size $O(n^{\rho} \cdot \log n)$ with ratio $1/2$.
\begin{lemma}\label{lem:tdAppedix}
Let $c_{sep} > 0$, $0 < \rho < 1$, $1/2 < \lambda < 1$ be constants.\\
\begin{enumerate}
    \item \label{enum:lem:tdAppedix1} 
Suppose that a graph $G = (V, E)$, $|V| = n$, admits a $k^{\rho}$-recursive separator with ratio $\lambda$.
(See Section~\ref{sec:separatorDef} for relevant definitions.)
Then $G$ admits a separator $C'$ of size $|C'| = O_{\rho, \lambda} (c_{sep} \cdot n^{\rho})$ with ratio $1/2$,
i.e., both parts $A'$ and $B'$ of $V \setminus C'$ satisfy $|A'|, |B'| \le n/2$.
\item \label{enum:lem:tdAppedix2} 
Suppose that $G = (V,E)$ admits a recursive separator with a flat bound of $c_{sep} \cdot n^{\rho}$
on its size. Then $G$ admits a separator $C'$ of size 
\begin{equation}\label{eq:lem:tdAppedix}
  |C'| = O_{\rho, \lambda}(c_{sep} \cdot \log n \cdot n^{\rho}),  
\end{equation}
with ratio $1/2$.
\end{enumerate}\label{enum:lem:tdAppedix}
\end{lemma}
\begin{proof}
In case~\ref{enum:lem:tdAppedix1} we assume that for any subset $\mathcal U \subset V$ of vertices,
there is a separator $C^*$ of size at most $c_{sep} \cdot |\mathcal U|^{\rho}$ such that
$\mathcal U \setminus C^*$ decomposes into parts $A^*$ and $B^*$ of sizes at most $\lambda |\mathcal U|$ each.
In case~\ref{enum:lem:tdAppedix2} the size of the separator $C^*$  is at most $c_{sep} \cdot n^{\rho}$, 
even when $|\mathcal U| \ll n$.

For both cases our argument follows closely the proof of~\cite{LT83}, Corollary $3$.
We analyze the two cases together, and split the analysis towards the end of the proof.

We define sequences of sets $(A_i, B_i, C_i, D_i)_{i \ge 0}$ that satisfy the following four properties:
\begin{enumerate}
\item \label{enum:LTProp1} 
For every $i$, $A_i, B_i, C_i, D_i$ is a partition of $V$.
\item \label{enum:LTProp2} 
There are no edges between $A_i$ and $B_i$, $A_i$ and $D_i$ and $B_i$ and $D_i$.
(In particular, if $D_i = \phi$, then $C_i$ is a separator, and $A_i$, $B_i$ are the two parts.)
\item \label{enum:LTProp3} 
$|A_i| \le |B_i| \le |A_i|+  |C_i| + |D_i|$.
\item \label{enum:LTProp4} 
For every $i$ such that both $(A_i, B_i, C_i, D_i)$ and $(A_{i + 1}, B_{i + 1}, C_{i + 1}, D_{i + 1})$  
are defined,we have
\[|D_{i + 1}| \le \lambda \cdot | D_i |, \]
\end{enumerate}
  i.e., eventually the sets $D_i$ vanish.

To define this sequence, we start with defining $A_0 = B_0 = C_0 = \phi$, $D_0 = V$.
The properties~\ref{enum:LTProp1}-\ref{enum:LTProp4} hold (partly vacuously).
Assume now that $A_i, B_i, C_i, D_i$ are defined for some $i$. Build a balanced separator $C^*$ for $G(D_i)$,
the graph induced by $D_i$.
Denote by $A^*$, $B^*$ the two parts of $D_i\setminus C^*$.
We have $|A^*|, |B^*| \le \lambda |D_i| $ in both cases.

In case~\ref{enum:lem:tdAppedix1} we also have 
\[
      |C^*| \le c_{sep} \cdot |D_i|^{\rho},
\]
while in case~\ref{enum:lem:tdAppedix2} the bound becomes
\[
      |C^*| \le c_{sep} \cdot n^{\rho},
\]
In both cases there are no edges between $A^*$ and $B^*$.
See Figure~\ref{fig:FigureTreeDecomp} for an illustration.

\begin{figure}[ht!]
 \captionsetup{font=scriptsize}
	\centering
	\fbox{
          \includegraphics[scale=0.25]{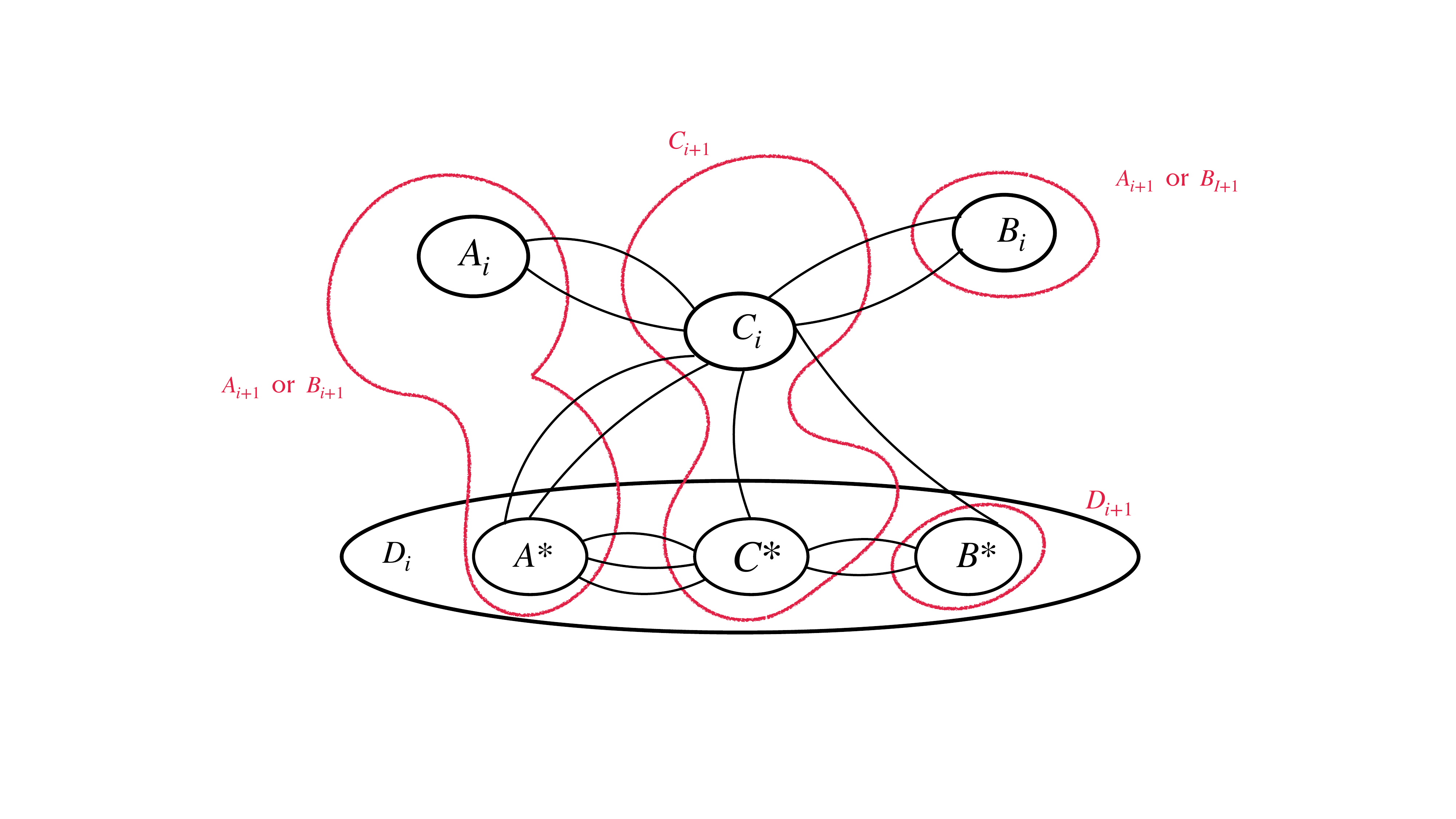}
	}
	\caption{Different sets and incidences between them in the proof of Lemma~\ref{lem:tdAppedix}.}
	\label{fig:FigureTreeDecomp}
\end{figure}

Assume without loss of generality that $|A^*| \le |B^*|$.
Let $A_{i+1}$ (respectively, $B_{i + 1}$) be the smaller (respectively, larger)
set among $A_i \cup A^*$ and $B_i$.
Let $C_{i + 1} = C_i \cup C^*$ and $D_{i + 1} = B^*$.

By the inductive hypothesis, and since $C^*$ is a separator of $D_i$ into two parts $A^*$, $B^*$,
it is easy to verify that the sets $A_i \cup A^*$, $B_i$, $C_i \cup C^*$, $B^*$ form a partition.
Also, there are no edges between $A_i \cup A^*$ and $B_i$, because there are no $A_i \times B_i$
edges and no $B_i \times D_i$ edges, and $A^* \subseteq D_i$.
Similarly, there are no edges between $A_i \cup A^*$ and $D_{i + 1} = B^*$, because there are no 
$A_i \times D_i$ edges and $B^* \subseteq D_i$, and because there are no $A^* \times B^*$ edges.
Finally, there are no edges between $B_i$ and $D_{i + 1} = B^* \subseteq D_i$.

To verify property~\ref{enum:LTProp3} assume first that $A_{i + 1} = A_i \cup A^*$, 
$B_{i + 1} = B_i$, i.e.,
\begin{align*}
\begin{aligned}
       |A_{i + 1}| &= |A_i| + |A^*| 
               \le |B_i| = |B_{i + 1}|.
\end{aligned}
\end{align*}
Then, by the inductive hypothesis, 
\begin{align*}
    \begin{aligned}
        |B_{i + 1}| = |B_i| &\le |A_i| + |C_i| + |D_i| \\
                            &= |A_i| + |A^*| + |C_i| + |C^*| + |B^*| \\
                            &= |A_{i + 1}| + |C_{ i + 1}| + |D_{ i + 1}|,
    \end{aligned}
\end{align*}
as required.
Otherwise, we have $A_{i +1} = B_i$, $B_{i + 1} = A_i \cup A^*$ and 
\[|A_i| \le |B_i| = |A_{i + 1}| \le |A_i| + |A^*| = |B_{i+1}|.\]
Then, 
\begin{align*}
    \begin{aligned}
        |B_{i + 1}| = |A_i| + |A^*| &\le |B_i| + |B^*| \\
                     &  = |A_{i + 1}| + |D_{i + 1}| \\
                     &\le |A_{i + 1}| + |C_{i + 1}| + |D_{i + 1}|.
    \end{aligned}
\end{align*}
This completes the proof of property~\ref{enum:LTProp3}.

Property~\ref{enum:LTProp4} ($|D_{i + 1}| \le \lambda \cdot |D_i|$)
follows from the definition of separator (with ratio $\lambda$).

The sequence $(A_i, B_i, C_i, D_i)_{i \ge 0}$ can be continued as long as $D_i \neq \phi$.
Let $k$ be the last index in the sequence.
Consider the resulting sets $A_k, B_k, C_k, D_k = \phi$.
We claim that $C = C_k$ is the separator of $G$ (into two parts $A= A_k$, $B = B_k$)
that satisfies the assumption of the lemma.

As there are no edges between $A$ and $B$, the set $C$ is indeed a separator.
By property~\ref{enum:LTProp3} we have $|A| \le |B| \le |A| + |C|$.
As $|A| + |B| + |C| = n$, we conclude that 
\[
      2|A| \le |A| + |B| \le n,
\]
i.e., $|A| \le n/2$.

Also, 
\[
      2|B| \le |A| + |C| +  |B| = n,
\]
and thus, $|B| \le n/2$ too.

Finally, in case~\ref{enum:lem:tdAppedix1} (i.e., when the graph $G$ admits 
a $k^{\rho}$-separator) the size of the separator $C$ is bounded by 
\begin{align*}
      \begin{aligned}
          |C| &\le \sum_{i = 0}^{\infty} c_{sep} \cdot (n \cdot \lambda^i)^{\rho} \\
              & = c_{sep} \cdot n^{\rho} \cdot \frac{1}{1 - \lambda^{\rho}} \\
              & = O _{c_{sep}, \lambda, \rho}(n^{\rho}).           
      \end{aligned}
\end{align*}
In case~\ref{enum:lem:tdAppedix2} (when $G = (V,E)$  admits a flat bound of $c_{sep} \cdot n^{\rho}$
on the size of separator for any induced subgraph $G(\mathcal U)$, $\mathcal U \subseteq V$),
the routine inserts to the separator at most $c_{sep} \cdot n^{\rho}$ vertices on each of the at most $O(\log n)$ 
steps of the process described in this proof.

Thus, the total size of the separator is at most
\[
    O_{c_{sep}, \lambda, \rho} \left(n^{\rho} \cdot \log n \right)
\]
in this case.
\end{proof}
Another property that is important for us is that for any vertex $v$ in the separator $S$, it should have
at least one neighbour in each of the two parts into which $V \setminus S$ decomposes.
A separator that satisfies that property will be called \emph{double-incident}.

Given a separator $S$ and two parts $A, B$ of $V\setminus S$, one can convert $S$ 
into a doubly-incident separator in the following way:
iteratively, we pick a vertex $v \in S$ that has neighbours only in $A$ (respectively, only in $B$)
and add to $A$ (respectively to $B$).
A vertex that has neighbours neither in $A$ nor in $B$ can be added to either set.
Obviously, as a result we obtain a doubly-incident separator.
\begin{cor}\label{cor:doublyIncident}
   A graph $G = (V, E)$, $|V| = n$,
that admits a $k^{\rho}$-recursive separator with ratio $\lambda$,
for some $0 < \rho < 1$, $1/2 < \lambda < 1$,
admits a doubly-incident separator $C$ of size
$|C| = O(n^{\rho})$ and such that both parts $A$ and $B$ satisfy 
\[
     |A|, |B| \le \frac{n}{2} + O(n^{\rho}).
\]
If $G$ admits a flat bound of $O(n^{\rho})$ on the size
of its recursive separator with ratio $\lambda$, 
then it admits a doubly-incident separator $C$ of size
$|C| = O(n^{\rho} \cdot \log n)$ and such that 
\[
    |A|, |B| \le \frac{n}{2} + O(n^{\rho} \cdot \log n).
\]
\end{cor}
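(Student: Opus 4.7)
The plan is to start from the balanced separator produced by Lemma~\ref{lem:tdAppedix} and then post-process it via the iterative procedure sketched just before the corollary statement. Concretely, in case~\ref{enum:lem:tdAppedix1} Lemma~\ref{lem:tdAppedix} gives a separator $C_0$ of size $O_{c_{sep},\lambda,\rho}(n^{\rho})$ whose removal partitions $V$ into two parts $A_0, B_0$ with $|A_0|, |B_0| \le n/2$; in case~\ref{enum:lem:tdAppedix2} the same lemma gives a separator $C_0$ of size $O_{c_{sep},\lambda,\rho}(n^{\rho}\log n)$ with the same ratio guarantee. I would then define sequences $(A^{(j)}, B^{(j)}, C^{(j)})$ starting from $(A_0, B_0, C_0)$: while some vertex $v \in C^{(j)}$ has no neighbor in $B^{(j)}$, move $v$ to $A^{(j+1)}$; symmetrically, while some vertex $v \in C^{(j)}$ has no neighbor in $A^{(j)}$, move it to $B^{(j+1)}$ (vertices with no neighbors at all fall into either clause and are assigned arbitrarily, so the loop terminates). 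Let $A, B, C$ denote the final triple.

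The key invariants to check are straightforward. Because a vertex is only moved to the side on which it has no neighbors on the \emph{opposite} side at the moment of moving, the non-edge property $E \cap (A^{(j)} \times B^{(j)}) = \emptyset$ is preserved at every step: a vertex moved to $A^{(j+1)}$ has no neighbor in the current $B^{(j)}$, and $B^{(j+1)} \subseteq B^{(j)} \cup \text{(later additions)}$, where later additions to $B$ are themselves vertices whose only neighbors lie in $B$ (so they cannot be neighbors of vertices already in $A$). Hence $C$ remains a valid separator of $V$ into $A$ and $B$. At termination the stopping criterion is precisely that every remaining vertex of $C$ has at least one neighbor in $A$ and at least one in $B$, i.e., $C$ is doubly-incident.

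For the size bounds, note that $C \subseteq C_0$, so $|C| \le |C_0|$, giving $|C| = O(n^{\rho})$ in case~\ref{enum:lem:tdAppedix1} and $|C| = O(n^{\rho}\log n)$ in case~\ref{enum:lem:tdAppedix2}. The total number of vertices moved out of $C_0$ is at most $|C_0|$, and these are distributed between $A$ and $B$, so
\[
    |A| \le |A_0| + |C_0| \le \frac{n}{2} + |C_0|, \qquad |B| \le |B_0| + |C_0| \le \frac{n}{2} + |C_0|,
\]
yielding $|A|, |B| \le n/2 + O(n^{\rho})$ in case~\ref{enum:lem:tdAppedix1} and $|A|, |B| \le n/2 + O(n^{\rho}\log n)$ in case~\ref{enum:lem:tdAppedix2}.

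The only subtle point, and the one I would write out most carefully, is the non-crossing invariant under \emph{interleaved} moves to both sides. The concern is that a vertex $u \in C^{(j)}$ could be moved to $A$ based on having no neighbor in the current $B^{(j)}$, even though some $C$-neighbor $w$ of $u$ is later moved to $B$, which would create an $A$-$B$ edge $(u,w)$. But this cannot happen: when $w$ is later considered for a move to $B$, the rule requires $w$ to have no neighbor in the then-current $A$, which already contains $u$ and would therefore forbid moving $w$. Thus $w$ either stays in $C$ (and both $u \in A$ and $w \in C$ is fine) or gets moved to $A$ instead, preserving the non-crossing property. Once this invariant is carefully established, the corollary follows immediately from the above size accounting.
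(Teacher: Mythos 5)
Your proposal is correct and follows essentially the same route as the paper: apply Lemma~\ref{lem:tdAppedix} to get a ratio-$1/2$ separator of the appropriate size, then iteratively move separator vertices lacking a neighbor on one side into the other side, noting that this preserves the non-crossing property and inflates each part by at most $|C_0|$. The only difference is that you spell out the interleaving invariant that the paper dismisses as obvious, which is a welcome (and correct) addition rather than a deviation.
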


\section{Finite Element Graphs}\label{sec:finiteElemnt}
Another graph family that admits small separators is \emph{finite element graphs}~\cite{LT83}.
These graphs arise in the finite element method in Numerical Analysis.

Given a planar graph $G' = (V,E)$, and a planar embedding of $G'$, a finite element graph $G$ of $G'$
is obtained by adding all possible diagonals between vertices that lie on the same face of $G'$.
(Observe that this adds edges only to faces with at least four vertices.)
The planar graph $G'$ is called the \emph{skeleton} of the finite element graph $G$.
Each of its faces is called an \emph{element} $G$.
For an integer parameter $k \ge 4$, we say that $G$ is a \emph{$k$-finite element graph} if each of its elements 
contains at most $k$ vertices.

Lipton and Tarjan~\cite{LT83} proved the following separator theorem for finite element graphs.
\begin{theorem}~\cite{LT83}\label{thm:liptonTarjan}
  Let $G$ be an $n$-vertex $k$-finite element graph, for an integer parameter $k \ge 4$.
  Then $G$ admits a balanced separator with $O(k\sqrt n)$ vertices.
\end{theorem}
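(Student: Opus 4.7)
The plan is to reduce the theorem to a weighted version of the Lipton--Tarjan planar separator theorem applied to a carefully constructed auxiliary planar graph. The key insight is that, although the finite element graph $G$ itself need not be planar, its diagonals only couple vertices that share a common face of the planar skeleton $G'$, and this incidence structure can be captured planarly by introducing one auxiliary ``face vertex'' per face.

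Concretely, I would build an auxiliary graph $H$ on vertex set $V(G') \cup F(G')$, where $F(G')$ is the set of (bounded and unbounded) faces of the given planar embedding of $G'$. The edge set of $H$ consists of the original skeleton edges $E(G')$ together with an edge $\{v,f\}$ for every face $f \in F(G')$ and every vertex $v$ on the boundary of $f$. Placing each face vertex $f$ in the interior of its face and drawing its spokes to the boundary of that face shows that $H$ is planar. Since $\sum_{f \in F(G')} |V(f)| = 2|E(G')| = O(n)$ by Euler's formula for $G'$, the graph $H$ has $O(n)$ vertices and $O(n)$ edges.

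Next I would invoke the weighted planar separator theorem on $H$, giving weight $1$ to each vertex of $V(G')$ and weight $0$ to each face vertex. This yields a separator $S_H \subseteq V(H)$ with $|S_H| = O(\sqrt n)$ partitioning $V(H) \setminus S_H$ into sets $A_H, B_H$ with no $H$-edge across, and with $|A_H \cap V(G')|, |B_H \cap V(G')| \le \tfrac{2}{3} n$. From $S_H$ I extract the desired separator of $G$ by
\[
 S \;=\; \bigl(S_H \cap V(G')\bigr) \;\cup\; \bigcup_{f \in S_H \cap F(G')} V(f).
\]
Each face contributes at most $k$ vertices, so $|S| \le |S_H| + k\,|S_H| = O(k\sqrt n)$. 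Setting $A = (A_H \cap V(G')) \setminus S$ and $B = (B_H \cap V(G')) \setminus S$, the balance bound on $A_H, B_H$ transfers to $A$ and $B$. To check that $S$ separates $A$ from $B$ in $G$, consider any edge $(u,v) \in E(G)$ with $u,v \notin S$: if $(u,v) \in E(G')$ then $(u,v) \in E(H)$, so $u,v$ must lie on the same side of $S_H$; if $(u,v)$ is a diagonal, then $u,v$ share a face $f$, and since $u,v \notin S$ we have $f \notin S_H$, so the $H$-path $u{-}f{-}v$ forces $u, v, f$ into the same component of $H \setminus S_H$.

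The main obstacle I expect is verifying the planarity of $H$ and the clean application of the weighted planar separator theorem with face vertices carrying weight zero; once these are in place, the size bound $O(k\sqrt n)$ is immediate from the fact that each of the $O(\sqrt n)$ face vertices in $S_H$ is ``paid for'' by all $\le k$ of its boundary vertices, and the separation property reduces to the observation that every $G$-edge across the two sides must either be a skeleton edge (handled by $E(G') \subseteq E(H)$) or factor through a common face vertex of $H$.
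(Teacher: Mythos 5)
Your proposal is correct and follows essentially the same route as the paper: both construct the planar auxiliary graph by adding one vertex per face joined to its boundary, apply the weighted planar separator theorem so that only $O(\sqrt n)$ face vertices can land in the separator, and then replace each such face vertex by the at most $k$ vertices of its face. The only differences are cosmetic (you give face vertices weight $0$ where the paper gives them weight $1/100$, and you add a face vertex for every face rather than only for faces with at least four vertices), and neither affects correctness or the $O(k\sqrt n)$ bound.
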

We next argue that the balanced separator can be computed very efficiently, even in parallel setting.
To that end we first sketch the proof (due to~\cite{LT83}) of Theorem~\ref{thm:liptonTarjan}.
\begin{proof}[proof of Theorem~\ref{thm:liptonTarjan}]
Given a $k$-finite 
element graph $G = (V,E)$, let $G^* = (V, E^*)$ be its planar skeleton.
Let $G^{**}$ be the planar graph obtained from $G^*$ by adding a new vertex $v_F$ for every face $F$
of $G^*$ with $|F| \ge 4$ (with at least four vertices). We also connect every vertex on $F$ with $v_F$.
We also assign unit vertex cost to every original vertex of $G$, and vertex cost of $\frac{1}{100}$
to every new vertex $v_F$.

Next, we compute a planar separator $(A^{**}, B^{**}, C^{**})$ for $G^{**}$ with these costs.
($A^{**}$, $B^{**}$ and $C^{**}$ are vertex subsets of $G^{**}$, with $A^{**}$ and $B^{**}$
having total costs at most $\frac{2}{3} (n + f' \cdot \frac{1}{100} )$ each, 
and $|C^{**}| = O(\sqrt {n + f'})$, where $f'$ is the number of faces with at least $4$
vertices. By Euler's planar formula, $f' = O(n)$, i.e., $|C^{**}| = O(\sqrt n)$.
Also, there are no edges between $A^{**}$ and $B^{**}$.

Once this separator is computed, we remove new vertices $v_F$ (for faces $F$ for which a new vertex $v_F$
was added) from $C^{**}$, and obtain a subset $C \subseteq V$ of size $O(\sqrt n)$ as well.
Let $A$, $B$ be the sets $A^{**}$, $B^{**}$, respectively, after removing from them new vertices.
We have $|A|, |B| \le \frac{2}{3} (n + \frac{n}{100}) \le 0.7 n$.

The problem is, however, that edges between $A$ and $B$ in $G$ may be present.
Note, however, that for every such edge $e = (u,v), u \in A, v \in B$, there was necessarily a (removed) 
separator vertex $v_F \in C^{**}$, for a face $F$ with $|F| \ge 4$.
We call these faces $F$ with $|F| \ge 4$ such that $v_F \in C^{**}$
\emph{bad faces}. Observe that as $|C^{**}| = O(\sqrt n)$ and the cost of every new vertex $v_F$ is $\frac{1}{100}$, we conclude that there are only $100 \cdot O(\sqrt n) = O(\sqrt n)$ bad faces.
To correct the separator $C$, for every bad face $F$, we add all its vertices that belong to $A$ (or to $B$) to the separator $C$.
By doing so we increase the size of the separator by an additive $O(\sqrt n \cdot k)$,
as every bad face contributes at most $k$ vertices.

It is easy to verify at this point that there are no edges between $A$ and $B$.
\end{proof}
Now we argue that this constructive proof can be efficiently parallelized.
Given a finite-element graph $G = (V,E)$, along with a planar embedding of its
skeleton\footnote{A planar embedding of a planar graph can be computed in $O(\log^2 n)$ depth and $O(n)$
work~\cite{KLEIN1988190}.}, we compute the graph $G^{**}$ as above.
(Computing the vertices $v_F$ and connecting them to all vertices of $F$ can clearly be done in polylogarithmic time and $O(n)$ processors, by assigning a processor to every vertex and face.)

We then compute a planar separator of $G^{**}$ in $O(\log^2 n)$ time, $O(n^{1 + \epsilon})$ processors,
for any positive constant $\epsilon > 0$, by Gazit-Miller's algorithm~\cite{GazitM87}.
Finally, we add to the separator all the vertices from bad faces. This again can be easily implemented within our resources. To summarize:
\begin{theorem}\label{thm:finiteGraphs}
    A balanced separator of a $k$-finite $n$-vertex element graph can be computed in randomized 
    polylogarithmic time, using $O(n^{1 + \epsilon})$ processors, for any constant $\epsilon > 0$.
    The size of the separator is $O(k \sqrt n)$.
\end{theorem}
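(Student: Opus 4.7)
The plan is to parallelize the constructive proof of Theorem~\ref{thm:liptonTarjan} already sketched above, and verify that each of its three ingredients can be executed in polylogarithmic randomized parallel time with $O(n^{1+\epsilon})$ work. First I would assume (or compute) a planar embedding of the skeleton $G^* = (V, E^*)$ of the input $k$-finite element graph $G$; by the algorithm of~\cite{KLEIN1988190}, such an embedding can be obtained in $O(\log^2 n)$ depth and $O(n)$ work, so this step is absorbed by our resource bounds.

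Next I would build the auxiliary planar graph $G^{**}$ by spawning one processor per face $F$ of $G^*$ with $|F| \geq 4$, which introduces a face-vertex $v_F$ and connects it to every vertex on $F$, assigning cost $1$ to the original vertices and cost $1/100$ to each $v_F$. Since assigning processors to vertex--face incidences is embarrassingly parallel and the total number of incidences is $O(n)$ by Euler's formula, this step fits comfortably in polylogarithmic depth with $O(n)$ work. I would then invoke the Gazit--Miller parallel planar separator algorithm~\cite{GazitM87} on $G^{**}$ with the assigned costs, which produces a cost-balanced partition $(A^{**}, B^{**}, C^{**})$ with $|C^{**}| = O(\sqrt{n})$ in $O(\log^2 n)$ randomized time and $O(n^{1+\epsilon})$ processors. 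Removing the face-vertices from $A^{**}, B^{**}, C^{**}$ to produce sets $A, B, C \subseteq V$ is again trivially parallelizable, and the costs established in the proof of Theorem~\ref{thm:liptonTarjan} immediately give $|A|, |B| \leq 0.7 n$ and $|C| = O(\sqrt{n})$.

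The final repair step is to identify the \emph{bad faces} (those $F$ with $v_F \in C^{**}$) and add all their vertices lying in $A \cup B$ to $C$. In parallel, each face vertex $v_F$ can check in $O(\log n)$ time (via a broadcast to its at most $k$ incident original vertices) whether it was chosen into $C^{**}$, and if so, mark its incident vertices for insertion into $C$. Because the total cost of face-vertices in $C^{**}$ is at most $O(\sqrt{n})$ and each such $v_F$ has cost $1/100$, there are only $O(\sqrt{n})$ bad faces, each contributing at most $k$ extra vertices, which yields the stated size bound $|C| = O(k\sqrt{n})$ and preserves the absence of $A$--$B$ edges in $G$ exactly as in the proof of Theorem~\ref{thm:liptonTarjan}.

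The main obstacle, and the reason for invoking~\cite{GazitM87} rather than a naive reduction, is that standard sequential planar separator algorithms rely on inherently sequential constructs such as BFS layering and fundamental-cycle extraction from a rooted spanning tree; achieving polylogarithmic depth requires the randomized parallel planar separator machinery of Gazit--Miller, whose slight overhead in processor count is what forces the $n^{1+\epsilon}$ factor (rather than $\tilde O(n)$) in the theorem statement. Once that ingredient is taken as a black box, the rest of the argument is bookkeeping on the cost assignment and on the parallel implementability of the bad-face repair, both of which are straightforward.
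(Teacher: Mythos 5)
Your proposal is correct and follows essentially the same route as the paper: parallelize the Lipton--Tarjan finite-element separator construction by building $G^{**}$ with weighted face-vertices, invoking the Gazit--Miller parallel planar separator algorithm on it, and then repairing the separator by absorbing the vertices of the bad faces. The resource accounting (Klein's embedding, $O(n)$ work for constructing $G^{**}$, $O(n^{1+\epsilon})$ processors from Gazit--Miller, and the $O(k\sqrt n)$ size bound from the $O(\sqrt n)$ bad faces) matches the paper's argument.
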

By using Theorem~\ref{thm:finiteGraphs} recursively we can compute the separator decomposition $T_G$ of a 
given $k$-finite element graph $G$ within essentially the same resources.
(Note that any induced subgraph of a $k$-finite element graph is a $k$-finite element graph as well.
The time complexity grows by a factor of $O(\log n)$.)

In the centralized setting, a balanced separator of an $n$-vertex planar graph $G$ can be computed in $O(n)$ time~\cite{LT83}. As a result, a separator decomposition $T_G$ can be computed in $O(n \log n)$ time.
Hence, a separator decomposition of a $k$-finite element $n$-vertex graph $G^* = (V, E^*)$
whose skeleton is provided as a part of the input can also be computed in $O(n \log n)$ time.
We derive the following corollary:
\begin{cor}\label{cor:finiteElement}
    Given a directed weighted graph with non-negative edge weights and aspect ratio $W$, such that its undirected unweighted skeleton is an
    $n$-vertex $k$-finite element graph, for a parameter $k = k(n) \ge 4$, along with a planar
    embedding of its skeleton, our algorithm constructs an $O(\log n)$-shortcut for it in randomized polylogarithmic time and using $(\sqrt n k)^{\omega + o(1)} + n^{1 + \epsilon} = (\sqrt n k)^{\omega + o(1)}$ processors. For a $((1 + \epsilon), O(\log n))$-hopset, for a parameter $\epsilon > 0$, the centralized running time and the number of processors are multiplied by $O_{\epsilon} (\log W) \cdot \log^{O(1)} n$.
\end{cor}
Note that this algorithm works from scratch.
We also note that $n$-vertex $k$-finite element graphs may contain up to $O(nk)$ edges, and that this upper bound is tight.

To verify the upper bound consider an $n$-vertex $k$-finite element graph $G^* = (V, E^*)$, and 
let $G = (V,E)$ be its skeleton graph.
Let $F_1, F_2, \ldots, F_t$, for some integer $t \ge 1$, be its faces.
Then $\sum_{i =1}^{t} |F_i| = 2|E| = O(n) $. 
Also, $|E^*| = O(\sum_{i=1}^{t} |F_i|^2)$.
As for every $i \in [t]$, $|F_i| \le k$, the sum $\sum_{i=1}^{t} |F_i|^2$
is maximized by having every $|F_i| = k$.
Hence 
\begin{align*}
    |E^*| = O(\sum_{i=1}^{t} |F_i|^2) = O(\frac{n}{k} \cdot k^2) = O(n \cdot k).
\end{align*}
 To see that this bound is tight, consider a $k$-finite element graph whose
 planar skeleton graph contains $O(n/k)$ faces with $\Omega(k)$ vertices each.
\section{Tables for Section~\ref{sec:excludedMinor}}\label{sec:tablesExcludedMinor}
 \begin{table}[h!]
  \captionsetup{font=scriptsize}
  \begin{center}
  {\small 
    \begin{tabular}{c|c|c|c|c|c} 
     $\mu$ & $\eta$ &$\sigma_{thr}$ & $\sigma$ &$g(\eta, \sigma)$  & $N(\mu , \sigma)$\\
      \hline
                1.38 & $[0.38, 0.49]$ & 0.965 & 0.97 & $\omega(0.97) = 2.3481$ & 2.35\\
                                  && & 0.98 & $\omega(0.98) = 2.3559$ & 2.36\\
                                   && & 0.99 &  $\omega(0.99) = 2.36387$ & 2.37\\ \hline                         
                1.39 & $[0.39, 0.47]$ & 0.916 & 0.92 & $\omega(0.92) = 2.3095$ & 2.31\\
                               && & 0.93 & $\omega(0.93) = 2.3171$ & 2.32 \\
                               && & 0.94 & $\omega(0.94)  =2.3248$ & 2.33\\
                               && & 0.95 & $\omega(0.95) =  2.3324$ & 2.34\\
                               && & 0.96 & $\omega(0.96) = 2.3403$ & 2.35 \\
                               && & 0.97 & $\omega(0.97)$ & 2.36\\
                               && & 0.98 & $\omega(0.98)$ & 2.37\\
                               && & 0.99 & $\omega(0.99)$ & $\omega$ \\ \hline
                        &0.48 & 0.934 & 0.94 & $\omega(0.94)$ & 2.33\\
                                 &&  & 0.95 & $\omega(0.95)$ & 2.34 \\
                                 &&  & 0.96 & $\omega(0.96)$ & 2.35\\
                                 &&  & 0.97 & $\omega(0.97)$ & 2.36\\
                                 &&  & 0.98 & $\omega(0.98)$ & 2.37\\
                                 &&  & 0.99 & $\omega(0.99)$ &$\omega$\\ \hline
                        &0.49  & 0.9578 & 0.96 & $\omega(0.96)$ & 2.35 \\
                               && & 0.97 &  $\omega(0.97)$ & 2.36 \\
                               && & 0.98 &  $\omega(0.98)$ & 2.37\\
                               && & 0.99 & $\omega(0.99)$ & $\omega$ \\ \hline
                1.4 & $[0.4, 0.46]$ & 0.876 & 0.88 & $\omega(0.88) =  2.2793$ & 2.28 \\
                               && & 0.89 & $\omega(0.89) =  2.2868$ & 2.29 \\
                               && & 0.9 & $\omega(0.9) =  2.2942$ & 2.3\\
                               && & 0.91 & $\omega(0.91) = 2.3019$ & 2.31 \\
                               && & 0.92 & $\omega(0.92)$ & 2.32 \\
                               && & 0.93 &  $\omega(0.93)$ & 2.33\\
                               && & 0.94 &  $\omega(0.94)$ & 2.34 \\
                               && & 0.95 & $\omega(0.95)$ & 2.35 \\
                               &&  & 0.96 & $\omega(0.96)$ & 2.36\\
                               &&  & 0.97 & $\omega(0.97)$ & 2.37\\
                               &&  & 0.98 & $\omega(0.98)$ & $\omega$\\
                               &&  & 0.99 & $\omega(0.99)$ & $\omega$\\ \hline
                  & 0.47 & 0.9004 & 0.91 & $\omega(0.91)$ & 2.31\\
                  && & 0.92 & $\omega(0.92)$ & 2.32\\
                 && & \vdots & \vdots & \vdots \\
                  && & 0.99 & $\omega(0.99)$ & $\omega$ \\ \hline
                  & 0.48  & 0.924  & 0.93 & $\omega \cdot 0.98 = 2.3241$ & 2.33 \\
                  &&& 0.94 & $\omega(0.94)$ & 2.34\\
                  &&& 0.95 & $\omega(0.95)$ & 2.35\\
                   &&& 0.96 & $\omega(0.96)$ & 2.36\\
                    &&& 0.97 & $\omega(0.97)$ & 2.37\\
                    &&& 0.98 & $\omega(0.98)$ & $\omega$\\
                    &&& 0.99 & $\omega(0.99)$ & $\omega$\\ \hline
                  & 0.49 & 0.9478 & 0.95 & $\omega \cdot 0.99 = 2.3478$ & 2.35 \\
                               &&& 0.96 & $\omega \cdot 0.99$ & 2.36\\
                               &&& 0.97 & $\omega(0.97)$ & 2.37\\
                               &&& 0.98 & $\omega(0.98)$ & $\omega$\\
                               &&& 0.99 & $\omega(0.99)$ & $\omega$ \\   
        \end{tabular}
  }
  \end{center}
    \caption{The table provides a comparison of the exponent $g(n, \sigma)$ of our algorithm
              for $S \times V$-reachability ($|S| = n^{\sigma}$) on graphs with excluded $K_h$-minor 
              ($h = n^{\eta}$) with the exponent $N(\mu, \sigma)$ of the state-of-the-art naive solution 
              ($m = n^{\mu}$) in the range of parameters where $g(n, \sigma) < N(\mu, \sigma)$.
              This table provides values for $\omega - 2 < \eta \le 1.4$.
              Higher values ($1.4 < \eta < 1.5$) appear in Tables~\ref{tab:sigmaThr2}-\ref{tab:sigmaThr4}. The same bounds apply also to the $k$-finite element graphs, when $k = n^{\eta}$.
          }
    \label{tab:sigmaThr1}
\end{table}

\begin{table}[h!]
  \captionsetup{font=scriptsize}
  \begin{center}
  {\small 
    \begin{tabular}{c|c|c|c|c|c} 
     $\mu$ & $\eta$ &$\sigma_{thr}$ & $\sigma$ &$g(\eta, \sigma)$  & $N(\mu , \sigma)$\\
      \hline
                 1.41 & $[0.41, 0.44]$ & 0.839 & 0.84 & $ \omega(0.84) =2.2498$ & 2.25\\
                               &&&0.85 &  $ \omega(0.85)  = 2.2569$ & 2.26\\
                               &&& 0.86 & $\omega(0.86)  = 2.2644$ & 2.27\\
                               &&& 0.87 &  $\omega(0.87)  = 2.2719 $ & 2.28\\
                               &&& 0.88 & $\omega(0.88)$ & 2.29\\
                               &&& 0.89 & $\omega(0.89)$ & 2.3\\
                               &&& 0.9 & $\omega(0.9)$ & 2.31\\
                               &&& \vdots & \vdots &\vdots \\
                               &&& 0.95 & $\omega(0.95)$ & 2.36\\
                               &&& 0.96 & $\omega(0.96)$ & 2.37\\
                               &&& 0.97 & $\omega(0.97)$ & $\omega$\\
                               &&& 0.98 & $\omega(0.98)$ & $\omega$\\
                               &&& 0.99 & $\omega(0.99)$ & $\omega$\\   \hline

                        & 0.45 & 0.8430 & 0.85 & $\omega(0.85)$ & 2.26\\
                                       &&& 0.86 & $\omega(0.86)$ & 2.27\\
                                       &&& \vdots & \vdots & \vdots\\
                                       &&& 0.96 & $\omega(0.96)$ &2.37\\
                                       &&& 0.97 & $\omega(0.97)$ & $\omega$ \\
                                       &&& 0.98 & $\omega(0.98)$ &  $\omega$\\
                                       &&& 0.99 & $\omega(0.99)$ & $\omega$\\ \hline
                                       
                          & 0.46 & 0.867& 0.87 & $\omega(0.87)$ & 2.28 \\
                           &&& 0.88 & $\omega(0.88)$ & 2.29\\
                           &&& \vdots & \vdots & \vdots \\
                           &&& 0.99 & $\omega(0.99)$ & $\omega$\\ \hline
                        & 0.47 & 0.8904 & 0.9 & $ \omega \cdot 0.97 =2.3004$ & 2.31 \\
                                      &&& 0.91 & $\omega(0.91)$ & 2.32\\
                                      &&& 0.92 & $\omega(0.92)$ & 2.33 \\
                                       &&& \vdots & \vdots & \vdots \\
                                      &&& 0.99 & $\omega(0.99)$ & $\omega$\\ \hline
                        & 0.48 & 0.9141 & 0.92 & $\omega \cdot 0.98 =2.3241$ & 2.33\\
                                     &&&  0.93 & $\omega \cdot 0.98$ & 2.34\\
                                     &&& 0.94 & $\omega(0.94)$ & 2.35\\
                                     &&& 0.95 & $\omega(0.95)$ & 2.36\\
                                     &&& \vdots & \vdots & \vdots \\
                                     &&& 0.99 & $\omega(0.99)$ & $\omega$\\ \hline
                        &0.49 & 0.9378& 0.94 &$ \omega \cdot 0.99 = 2.3478$ & 2.35 \\
                                      &&& 0.95 & $ \omega \cdot 0.99$ & 2.35\\
                                      &&& 0.96 & $ \omega \cdot 0.99$ & 2.36 \\
                                      &&& 0.97 & $\omega(0.97)$ & $\omega$\\
                                      &&&0.98 & $\omega(0.98)$ & $\omega$\\
                                      &&&0.99 & $\omega(0.99)$ & $\omega$\\ \hline
                                                   
        \end{tabular}
  }
  \end{center}
    \caption{Continuation of Table~\ref{tab:sigmaThr1}. 
    Contains values of the functions $g(\eta, \sigma)$, $N(\mu, \sigma)$ for $\eta = 1.41$. 
    Higher values $(1.41 < \eta < 1.5)$ appear in Tables~\ref{tab:sigmaThr3} and~\ref{tab:sigmaThr4}.
          }
    \label{tab:sigmaThr2}
\end{table}
\begin{table}[h!]
  \captionsetup{font=scriptsize}
  \begin{center}
  {\small 
    \begin{tabular}{c|c|c|c|c|c} 
     $\mu$ & $\eta$ &$\sigma_{thr}$ & $\sigma$ &$g(\eta, \sigma)$  & $N(\mu , \sigma)$\\
      \hline
                 1.45 & 0.45 & 0.803 & 0.81 & $\omega \cdot 0.95 =  2.2530$ & 2.26\\
                               &&&0.82 & $\omega \cdot 0.95$ & 2.27\\
                               &&& 0.83 & $\omega \cdot 0.95$ & 2.28\\
                               &&& 0.84 & $\omega \cdot 0.95$ & 2.29\\
                               &&& 0.85 & $\omega(0.85) =2.2569$ & 2.3\\
                               &&& 0.86 & $\omega(0.86) =2.2644$ & 2.31\\
                               &&& \vdots & \vdots &\vdots \\
                        \rowcolor{Gray}
                               &&& 0.9 & $\omega(0.9) = 2.2942$ & 2.35\\
                        \rowcolor{Gray}
                               &&& 0.91 & $\omega(0.91) = 2.3019$ & 2.36\\
                        \rowcolor{Gray}
                               &&& 0.92 & $\omega(0.92) = 2.3095$ & 2.37\\
                        \rowcolor{Gray}
                               &&& 0.93 & $\omega(0.93) =2.3171$ & $\omega$\\
                                &&& \vdots & \vdots &\vdots \\
                               &&& 0.99 & $\omega(0.99) = 2.3637$ & $\omega$\\   \hline
                        &0.46 & 0.8267 & 0.83 & $\omega \cdot 0.96 = 2.2767$ & 2.28\\
                                     &&& 0.84 & $\omega \cdot 0.96$ & 2.29 \\
                                     &&& 0.85 & $ \omega \cdot 0.96$ & 2.3\\
                                     &&& 0.86 & $ \omega \cdot 0.96$ & 2.31\\
                                     &&& 0.87 & $ \omega \cdot 0.96$ & 2.32\\
                            \rowcolor{Gray}
                                     &&& 0.88 & $\omega(0.88) =  2.2797$ & 2.33\\
                              \rowcolor{Gray}
                                     &&& 0.89 & $\omega(0.89) = 2.2868$ & 2.34\\
                              \rowcolor{Gray}
                                     &&& 0.9 & $\omega(0.9) = 2.2942$ & 2.35\\
                                  \rowcolor{Gray}
                                     &&& 0.91 & $\omega(0.91) = 2.3019 $ & 2.36\\
                            \rowcolor{Gray}
                                     &&& 0.92 & $\omega(0.92) =  2.3095$ & 2.37\\
                              \rowcolor{Gray}
                                     &&& 0.93 & $\omega(0.93) = 2.3171$ & $\omega$\\
                                    &&& \vdots & \vdots &\vdots \\
                                     &&& 0.99 & $\omega(0.99) =  2.3637$ & $\omega$ \\ \hline
                        &0.48 & 0.874 & 0.88 &  $\omega \cdot 0.98 =2.3241$ & 2.33\\
                                    &&& 0.9 & $\omega \cdot 0.98$ & 2.35\\
                                    &&& 0.93 & $\omega \cdot 0.98$ & $\omega$ \\
                                    &&& 0.94 & $\omega(0.94) = 2.3248 $ & $\omega$\\
                                    &&& 0.95 & $\omega(0.95) =  2.33324$ & $\omega$ \\
                                  &&& \vdots & \vdots & \vdots \\
                                    &&& 0.99 & $\omega(0.99)$ & $\omega$ \\
                                                                        
        \end{tabular}
  }
  \end{center}
    \caption{Continuation of Tables~\ref{tab:sigmaThr1} and~\ref{tab:sigmaThr2}.
              Contains values of the functions $g(\eta, \sigma)$, $N(\mu, \sigma)$ for $\eta = 1.45$.
              Higher values $(1.45 < \eta < 1.5)$ appear in Table~\ref{tab:sigmaThr4}.
              Highlighted rows indicate particularly significant improvements.
          }
    \label{tab:sigmaThr3}
\end{table}

\begin{table}[h!]
  \captionsetup{font=scriptsize}
  \begin{center}
  {\small 
    \begin{tabular}{c|c|c|c|c|c} 
     $\mu$ & $\eta$ &$\sigma_{thr}$ & $\sigma$ &$g(\eta, \sigma)$  & $N(\mu , \sigma)$\\
      \hline
                 1.47 & 0.47 & 0.8304 & 0.84 & $\omega \cdot 0.97 = 2.3004$ & 2.31\\
                               &&&0.85 & $\omega \cdot 0.97$ & 2.32\\
                       \rowcolor{Gray}
                               &&& 0.9 & $\omega \cdot 0.97$ & 2.37\\
                         \rowcolor{Gray}
                               &&& 0.91 & $\omega(0.91) = 2.3019$ & $\omega$ \\
                        \rowcolor{Gray}
                               &&& 0.92 & $\omega(0.92) = 2.3095$ & $\omega$ \\
                                &&& \vdots & \vdots &\vdots \\                                
                               &&& 0.99 & $\omega(0.99) = 2.3637$ & $\omega$\\   \hline
                        &0.49 & 0.8778 & 0.88 & $\omega \cdot 0.99 = 2.3478$ & 2.35\\
                                     &&& 0.9 & $\omega \cdot 0.99$ & 2.37\\
                                     &&& $[0.91, 0.96]$ & $\omega \cdot 0.99$ & $\omega$ \\
                                     &&& 0.97 & $\omega(0.97) =  2.3481$ & $\omega$\\
                                     &&& 0.98 & $\omega(0.98) =  2.3559$ & $\omega$ \\
                                     &&& 0.99 & $\omega(0.99)$ & $\omega$ \\
            \hline
                 1.49 & 0.49 & 0.8578 & 0.86 & $\omega \cdot 0.99$ & 2.35\\
                               &&&0.87 & $\omega \cdot 0.99$ & 2.36\\
                               &&& 0.88 & $\omega \cdot 0.99$ & 2.37\\
                               &&& $[0.89, 0.96]$ & $\omega \cdot 0.99$ & $\omega$ \\                       
                               &&& 0.97 & $\omega(0.97)$ & $\omega$\\ 
                               &&& 0.98 & $\omega(0.98)$ & $\omega$\\ 
                               &&& 0.99 & $\omega(0.99)$ & $\omega$\\ 
        \end{tabular}
  }
  \end{center}
    \caption{Continuation of Tables~\ref{tab:sigmaThr1},~\ref{tab:sigmaThr2} and~\ref{tab:sigmaThr3}.
              Contains values of the functions $g(\eta, \sigma)$, $N(\mu, \sigma)$ for $\eta = 1.47$.
              Highlighted rows indicate particularly significant improvements.
          }
    \label{tab:sigmaThr4}
\end{table}


\end{document}